\pdfoutput=1
\PassOptionsToPackage{hyperfootnotes=false}{hyperref}
\documentclass[11pt,en,cite=authoryear,bibstyle=abbrvnat]{note}

\usepackage[utf8]{inputenc}
\usepackage{mathrsfs,bm,bbm,dsfont,mathtools}
\usepackage{algorithm,algpseudocode}
\usepackage{overpic,pifont,array,makecell}
\usepackage{setspace}
\floatevery{algorithm}{\small\setstretch{1.05}}

\counterwithout{theorem}{section}
\counterwithout{definition}{section}
\counterwithout{example}{section}

\theoremstyle{plain}
\newtheorem{corollary}[theorem]{Corollary}

\theoremstyle{remark}
\newtheorem{remark}{Remark}
\theoremstyle{definition}
\newtheorem{assumption}{Assumption}

\providecommand{\coloneq}{\coloneqq}
\makeatletter
\providecommand{\widecheck}{}
\renewcommand{\widecheck}[1]{%
  \mathchoice
    {\widecheckaccent{#1}{1.7}{0.15}}%
    {\widecheckaccent{#1}{1.7}{0.15}}%
    {\widecheckaccent{#1}{1.7}{0.15}}%
    {\widecheckaccent{#1}{1.35}{0.15}}%
}
\newcommand{\widecheckaccent}[3]{%
  \overset{\vbox to #3ex{\hbox{\scalebox{#2}[0.75]{$\m@th\check{}$}}\vss}}{#1}%
}
\makeatother
\providecommand{\widebar}[1]{\overline{#1}}

\newcommand\syr[1]{#1}

\newcommand\syb[1]{#1}

\newcommand{\bel}{\begin{eqnarray}\label}
\newcommand{\eel}{\end{eqnarray}}
\newcommand{\bes}{\begin{eqnarray*}}
\newcommand{\ees}{\end{eqnarray*}}
\newcommand{\bei}{\begin{itemize}}
\newcommand{\beiftnt}{\begin{itemize}\footnotesize}
\newcommand{\eei}{\end{itemize}}

\def\benu{\begin{enumerate}}
\def\eenu{\end{enumerate}}

\def\argmin{\mathop{\rm arg\, min}}

\def\complex{\mathop{{\rm I}\kern-.58em\hbox{\rm C}}\nolimits}

\def\rank{\hbox{\rm rank}}

\def\supp{\hbox{\rm supp}}

\def\mathbold{\boldsymbol} %\def\mathbold{\mathbf}
\def\ba{\mathbold{a}}

\def\bA{\mathbold{A}}\def\calA{\mathcal{A}}

\def\bb{\mathbold{b}}

\def\bF{\mathbold{F}}

\def\bI{\mathbold{I}}\def\hbI{{\widehat{\bI}}}

\def\calL{\mathcal{L}}

\def\bm{\mathbold{m}}

\def\bM{\mathbold{M}}
\def\calM{\mathcal{M}}

\def\calN{\mathcal{N}}

\def\bo{\mathbold{o}}
\def\hbo{{\widehat{\bo}}}
\def\bO{\mathbold{O}}\def\tbO{{\widetilde{\bO}}}
\def\calO{\mathcal{O}}

\def\br{\mathbold{r}}

\def\bR{\mathbold{R}}

\def\calT{\mathcal{T}}

\def\bU{\mathbold{U}}

\def\bv{\mathbold{v}}

\def\bw{\mathbold{w}}

\def\bW{\mathbold{W}}

\def\bx{\mathbold{x}}
\def\hbx{{\widehat{\bx}}}\def\tbx{{\widetilde{\bx}}}
\def\bX{\mathbold{X}}\def\hbX{{\widehat{\bX}}}\def\tbX{{\widetilde{\bX}}}
\def\calX{\mathcal{X}}

\def\bY{\mathbold{Y}}\def\hbY{{\widehat{\bY}}}

\def\bz{\mathbold{z}}
\def\hbz{{\widehat{\bz}}}
\def\bZ{\mathbold{Z}}\def\hbZ{{\widehat{\bZ}}}

\def\btheta{\mathbold{\theta}}

\def\hbtheta{{\widehat{\btheta}}}\def\tbtheta{{\widetilde{\btheta}}}

\def\bnu{\mathbold{\nu}}

\def\bSigma{\mathbold{\Sigma}}

\def\bphi{\mathbold{\phi}}

\def\hbphi{{\widehat{\bphi}}}\def\tbphi{{\widetilde{\bphi}}}

\def\bPhi{\mathbold{\Phi}}
\def\tbPhi{{\widetilde{\bPhi}}}

\def\bbR{{\mathbb{R}}}

\def\bbE{{\mathbb{E}}}
\def\bbP{{\mathbb{P}}}

\def\argmin{\operatorname{argmin} \displaylimits}
\def\bm{\boldsymbol{\mu}}

\def\bF{\mathbold{F}}

\def\bbR{{\mathbb{R}}}

\def\bbE{{\mathbb{E}}}
\def\bbP{{\mathbb{P}}}

\newcommand\dd{\mathrm{d}}

 \newcommand{\deq}{\ \overset{\text{d}}{=} \ }
 
 \newcommand{\bzero}{\mathbf{0}}
 \newcommand{\bone}{\mathds{1}}
 \newcommand{\bfone}{\mathbf{1}}
 \newcommand{\wbX}{\widebar{\bX}}
 \newcommand{\wbx}{\widebar{\bx}}
 
 \newcommand{\hwbx}{\widehat{\wbx}}
 
 \newcommand{\deta}{\mathbold{\eta}}
 \newcommand{\tdeta}{\widetilde{\mathbold{\eta}}}
 
 \newcommand{\deps}{\mathbold{\varepsilon}}
 
 \newcommand{\cdeps}{\widecheck{\mathbold{\varepsilon}}}
\newcommand{\tdxi}{\widetilde{\mathbold{\xi}}}

 \newcommand\independent{\perp\!\!\!\perp}
 \newcommand{\blp}{\bigl (}
 \newcommand{\brp}{\bigr )}
 \definecolor{pinegreen}{rgb}{0.0, 0.47, 0.44}
\definecolor{MyYellow}{HTML}{fea444}
\definecolor{MyRed}{HTML}{eb5f60}
\definecolor{MyGreen}{HTML}{3ed1ee}
 \newcommand{\out}{f}
 \newcommand{\bout}{\widebar{\out}}
 
 \newcommand{\trans}{g}
 \newcommand{\btrans}{\widebar{\trans}}
 
 \newcommand{\enc}{h}
 \newcommand{\dec}{u}
 \newcommand{\backt}{q}
 \newcommand{\law}{\mathsf{Law}}
\newcommand{\na}{\mathsf{na}}

\renewcommand{\wbX}{\overline{\bX}}
\renewcommand{\wbx}{\overline{\bx}}

\renewcommand{\hwbx}{\widehat{\wbx}}
\newcommand{\Col}{\operatorname{Col}}

\title{Latent-driven Imputation and Forecasting for partially observed Time Series (LIFTS)}
\author{Sanyou Wu\textsuperscript{1} \quad Ying Wei\textsuperscript{1} \quad Kaizheng Wang\textsuperscript{2}\thanks{Address for correspondence: Kaizheng Wang, Columbia University, \email{kw2934@columbia.edu}.}}
\institute{\small
\textsuperscript{1}Department of Biostatistics, Mailman School of Public Health,\\
Columbia University, New York, NY 10032, USA\\[0.5ex]
\textsuperscript{2}Department of IEOR and Data Science Institute,\\
Columbia University, New York, NY 10027, USA}
\date{}
\hypersetup{
  pdftitle={Latent-driven Imputation and Forecasting for partially observed Time Series (LIFTS)},
  pdfauthor={Sanyou Wu, Ying Wei, Kaizheng Wang}
}

\begin{document}
\maketitle
\begin{abstract}
Partially observed multivariate time series pose two coupled challenges: recovering missing measurements and forecasting future trajectories when observation patterns are informative. Existing methods often assume ignorable missingness or use the mask only as an auxiliary input. We propose LIFTS (Latent-driven Imputation and Forecasting for partly observed Time Series), a probabilistic framework in which a latent process jointly governs the complete measurement trajectory and observation mask, allowing observation patterns to inform latent-state inference for both forecasting and imputation. LIFTS combines flexible neural-network parameterizations with a forward-backward architecture: the forward pass performs filtering and autoregressive forecasting, while the backward pass performs smoothing and conditional multiple imputation. The model and inference components are trained jointly using a masked energy score and self-masking. We establish forward-backward distributional representations for the general framework and, for an explicit structural subclass, provide full-law identification and consistency guarantees. Simulations and an application to PhysioNet 2019 show that LIFTS provides accurate point predictions, improved distributional accuracy, and well-calibrated predictive intervals; it outperforms competing methods particularly when training data are limited and has substantially faster inference than the diffusion-based baseline.
\keywords{Partially observed time series, State-space models,
Informative missingness, Imputation, Forecasting, Energy scores}
\end{abstract}

\section{Introduction}
\noindent
Partially observed multivariate time series arise in a wide range of
scientific and operational settings, including physiological monitoring,
wearable sensing, environmental surveillance, financial systems, and brain
imaging analysis
\citep{reyna2020early,fortuin2020gp,tashiro2021csdi,li2024time}. Two inferential tasks are particularly important in such applications.
One is \textit{retrospective imputation}, which aims to recover measurements that were not recorded, and the other is \textit{forecasting} that characterizes the distribution of future trajectories given the information currently available. Although these tasks are often studied
separately, they are closely connected: both require learning the temporal and
cross-variable dependence of the underlying process, together with the
uncertainty induced by partial observation. This common structure motivates a unified probabilistic treatment of imputation and forecasting.

Partial observation complicates these tasks, especially when the observation process is informative. Measurements are often not collected according to a fixed design, but their timing and availability may depend on external constraints, operational decisions, resource availability, or the evolving state of the system. The observation process becomes informative when these factors are associated with the underlying trajectory. This phenomenon is particularly pronounced in clinical settings, where the timing and frequency of laboratory tests, vital signs, and other measurements may depend on the patient's underlying physiological state \citep{wells2013strategies,weiskopf2013sick,sisk2021informative}. Consequently, the missing mask is more than an indicator of unavailable entries: it is an observed stochastic process that may contain information about the underlying trajectory. Ignoring this information, or failing to account for its dependence on the underlying process, can mischaracterize the conditional distributions targeted by imputation and forecasting, resulting in biased predictions, miscalibrated uncertainty, and misleading downstream analyses \citep{little2019statistical,mitra2023learning}.

Existing methods for partially observed time series have largely addressed this challenge by incorporating the observation mask into flexible predictive models. Sequence-based methods, such as
\citep{che2018recurrent,cao2018brits,du2023saits}, use recurrent or attention-based architectures to learn temporal and cross-variable structure from observed values and
their mask 
, while generative methods
produce distributional imputations by sampling 
multiple completions conditional on observed data and masks
\citep{fortuin2020gp,tashiro2021csdi}. 
Although including the mask can improve prediction, it does not specify how observation patterns are generated or how they relate to unobserved measurements. Conditioning on the mask alone may therefore be insufficient to account for informative observation when estimating the distributions needed for imputation and forecasting. A separate literature on nonignorable missing data jointly models measurements and their observation indicators, but these methods have been developed primarily for non-temporal multivariate data \citep{ipsen2020not,ma2021identifiable,xie2026identifiable,liu2026missingness,yang2026emputation}. A methodological gap therefore remains for a flexible dynamic framework that jointly models complete measurements and observation patterns and provides a coherent basis for probabilistic imputation and forecasting under informative observation.

To address this gap, we adopt a latent-driven approach, in which an evolving latent state jointly governs the complete measurement trajectory and the missingness mask, as illustrated in Figure~\ref{fig:motivating_example}. The mask is therefore not merely a bookkeeping device, but an additional manifestation of the underlying state that can inform inference and downstream analysis. Marginally, the shared dependence on the latent state generally induces a missing-not-at-random mechanism, because observation probabilities may depend on unobserved measurements through that state. By linking the observation process to a shared latent state rather than directly to a high-dimensional complete measurement history, the formulation provides a parsimonious and operational way to accommodate complex, time-varying observation patterns without separately specifying a mechanism for each pattern. It represents a structured class of informative observation mechanisms rather than all possible missing-not-at-random mechanisms \citep{rubin1976inference,mealli2015clarifying}.
\begin{figure}[t]
    \centering
    \includegraphics[width=0.8\linewidth]{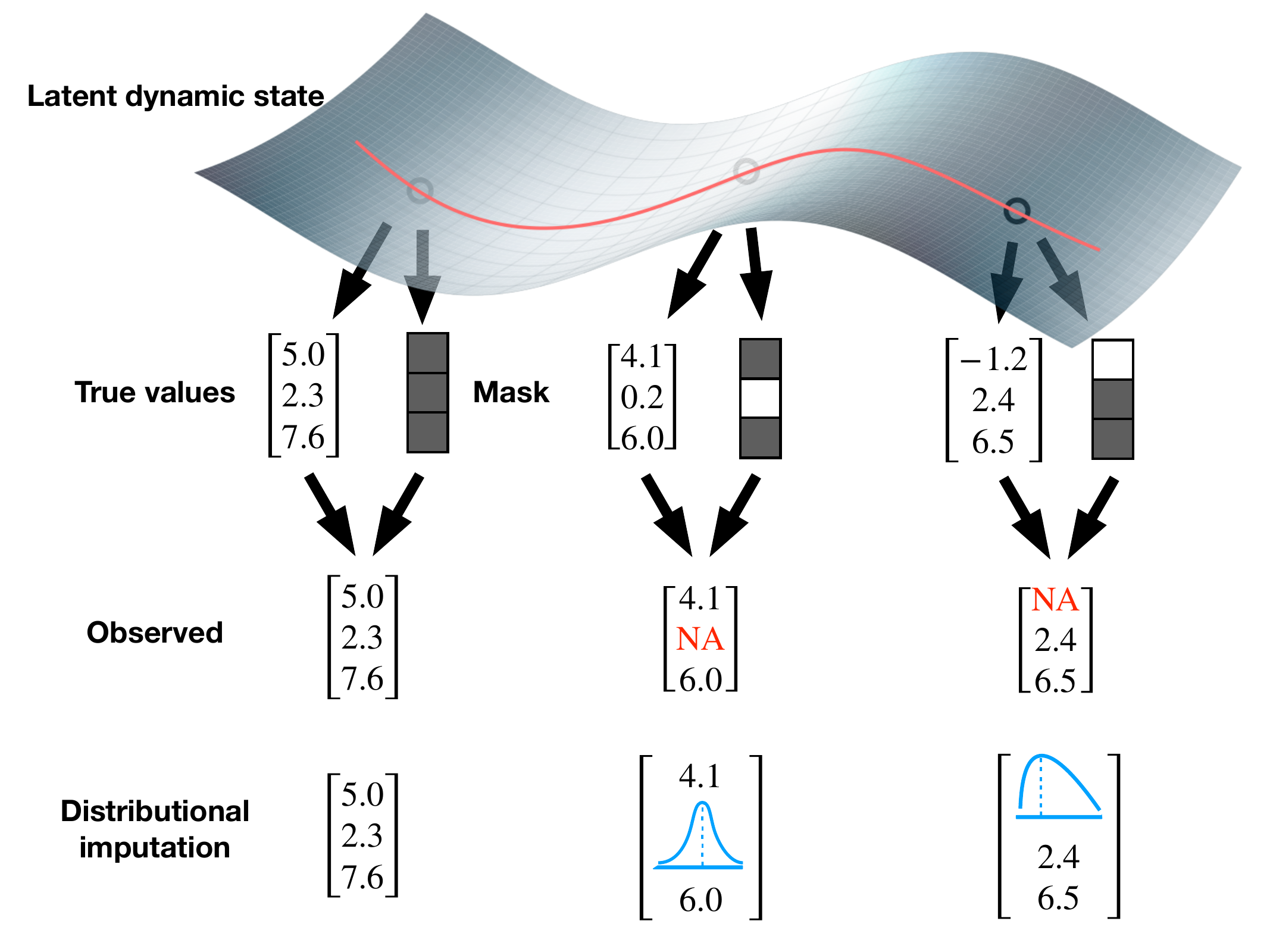}
\caption{Illustration of the latent-driven view of a partially observed time series.
An evolving latent state jointly governs the complete measurement vector and
its observation mask.}
    \label{fig:motivating_example}
\end{figure}

Building on this formulation, we propose LIFTS
(\emph{Latent-driven Imputation and Forecasting for partially observed Time Series}), a probabilistic framework based on a nonlinear state-space model, where a latent transition governs temporal evolution and two output mechanisms generate the complete measurements and their observation masks.

{LIFTS complements this generative model with learned forward-backward inference. In the forward pass, an encoder represents the latent filtering distribution given the history of
recorded measurements and masks, providing the starting point for forecasting through the generative model. In the backward pass,
a smoother incorporates future information to generate smoothed latent trajectories. An imputation decoder then samples missing
measurements conditional on these states and the observed measurements. The generative and inference components are parameterized by
neural networks and trained jointly using sample-based energy-score objectives \citep{gneiting2007strictly,szekely2013energy}, with
randomly withheld observed measurements providing training targets for imputation.

Our theory clarifies why this inference structure is appropriate and when learning from partial observations can recover the full trajectory
law. Under the general model, we show that suitable conditional maps reproduce the target forecasting and imputation laws. In particular,
the backward recursion recovers the joint smoothing distribution, preserving temporal dependence rather than only reproducing time-wise
marginals. We also establish strict propriety of the masked energy score, ensuring that its population objective targets the joint law
of recorded measurements and masks, including their dependence. However, learning this observed-data law does not by itself identify
the distribution of unobserved measurements. For an explicit structural subclass, we establish full-law identification and consistency of the
corresponding model-based estimator under the stated assumptions. These results provide sufficient conditions under which learning the
observation law can recover the joint distribution of complete measurement trajectories and masks. They complement identification
results for generative missing-data models
\citep{ma2021identifiable,xie2026identifiable} and identification-guided energy-score imputation \citep{yang2026emputation}, with the focus here
on joint trajectory laws under common latent dynamics and their connection to forward-backward inference.

Simulations and an analysis of PhysioNet 2019 show competitive point accuracy, improved distributional predictions, and predictive intervals with coverage close to nominal levels. The distributional gains are particularly pronounced when training data are limited, and LIFTS provides substantially faster inference than the diffusion-based benchmark CSDI.
}

The remainder of the paper is organized as follows. Section \ref{sec:set-up}
introduces the problem formulation, including imputation, forecasting, and our model. Sections \ref{sec:methodology} and \ref{sec:estimation}
develop the proposed inferential framework and estimation procedure.
Section~\ref{sec:identifiability_consistency} establishes full-law
identifiability and consistency from partially observed trajectories under a
structural model and regularity assumptions.
Section~\ref{sec:experiments} and \ref{sec:real-data} present the implementation details, simulation
studies, and PhysioNet data analysis. Section~\ref{sec:conclusion} concludes with discussion and future directions.

Throughout, let $[T]=\{1,2,\ldots,T\}$. 
Bold uppercase letters denote random variables (i.e., $\bX_t$), and the corresponding bold lowercase letters denote their realizations (i.e., $\bx_t$). We write $\bX_{1:j}=(\bX_1,\ldots,\bX_j)$. Let \(\law_{\btheta}(\bZ)\) denote the distribution of \(\bZ\) under parameter
\(\btheta\). For any vector $\bv\in\bbR^p$, let $\|\bv\|_\alpha=(\sum_{j=1}^p|\bv_j|^\alpha)^{1/\alpha}$ denote its $\ell_\alpha$-norm for $\alpha\geq 1$.

\section{Problem setup and a latent-driven model 
}\label{sec:set-up}

\noindent
We represent the complete longitudinal trajectory of a subject as a matrix $\bX_{1:T} = [\bX_1, \ldots,\bX_T] \in \bbR^{p \times T}$, where $\bX_t \in \bbR^p$ denotes the full feature vector at time $t$.  In many applications, including clinical settings,  $\bX_{1:T}$ is only partially observed. To model this, we introduce a binary mask $\bO_t \in \{0,1\}^p$, where $(\bO_t)_j = 1$ indicates that the $j$-th component of $\bX_t$ is observed, and $(\bO_t)_j = 0$ otherwise. Consequently, the observed vector $\bX_t^{\text{obs}}$ is defined component-wise as
\bel{def:obs}
\bigl(\bX_t^{\text{obs}}\bigr)_j = 
\begin{cases}
(\bX_t)_j, & \text{if } (\bO_t)_j = 1, \\
\mathsf{na}, & \text{if } (\bO_t)_j = 0,
\end{cases} \qquad \text{for each component } j = 1,\dots,p
\eel
where $\na$ denotes a missing entry (i.e., not available). In addition, we define the augmented observed history up to time $t$ as
$\wbX^{\text{obs}}_{1:t} \coloneq \bigl(\bX_{1:t}^{\text{obs}}, \bO_{1:t} \bigr)$, which jointly records the observed values and their corresponding missingness masks. 
Let $\calX\subseteq\bbR^p$ denote the complete-data space,
$\calX^{\text{obs}}\subseteq(\bbR\cup\{\na\})^p$ the partially observed
data space, and $\calO=\{0,1\}^p$ the mask space. The augmented observed-data
space is
\begin{align}\label{def:space:X-obs}
    \widebar{\calX}^{\text{obs}}
\coloneq
\left\{
(\bx^{\text{obs}},\bo)\in\calX^{\text{obs}}\times\calO:
\bx_j^{\text{obs}}=\na \text{ if and only if } \bo_j=0,
\quad j=1,\ldots,p
\right\}.
\end{align}
Accordingly, the space of observed histories up to time $t$ is $\widebar{\calX}_{1:t}^{\text{obs}}
\coloneq
\left(\widebar{\calX}^{\text{obs}}\right)^t.$

In this paper, we pursue two primary goals.
\begin{itemize}
    \item \textbf{Forecasting}  aims to generate samples from the predictive distribution $\law( \wbX_{t+J}\mid \wbX_{1:t}^{\text{obs}}) $ for any $t \in [T]$ and horizon $J \geq 1$.
    \item \textbf{Imputation} seeks to recover the missing entries in $\bX_{t}$ for $t\in [T]$  from the full posterior distribution $\law(\bX_t \mid \wbX_{1:T}^{\text{obs}})$.
\end{itemize}
To effectively model the temporal dynamics under such missingness, we consider a continuous-state Markov model, also known as a state-space model.
Formally, 
\bel{model:1}
\bZ_{t+1} \ &=& \ \trans(\bZ_t, \bv_t), \quad t \in [T-1] \quad \bigl(\text{transition model}\bigr); \cr
\bX_{t} \ &=& \ \out_x(\bZ_t,  \br_t^x) , \quad t \in [T] \quad \bigl(\text{measurement model}\bigr); \cr
\bO_t \ &=& \ \out_o(\bZ_t,  \br_t^o),  \quad t \in [T] \quad \bigl(\text{missingness model}\bigr);
\eel
with initial state $\bZ_1 \sim \calN(\bzero, \bI_d)$, where
$\bv_t \in \bbR^d$ and $\br_t^x,\br_t^o \in \bbR^p$. The function
$\trans:\bbR^d\times\bbR^d\rightarrow\bbR^d$ specifies the latent
transition, while $\out_x:\bbR^d\times\bbR^p\rightarrow\bbR^p$ and
$\out_o:\bbR^d\times\bbR^p\rightarrow\{0,1\}^p$ specify the measurement
and missingness functions, respectively.
The noise terms $\{\br_t^x\}_{t\ge 1}$, $\{\br_t^o\}_{t\ge 1}$, and
$\{\bv_t\}_{t\ge 1}$ are mutually independent sequences of i.i.d. random
variables, independent of $\bZ_1$.
\begin{figure}
    \centering
\includegraphics[width=1\linewidth]{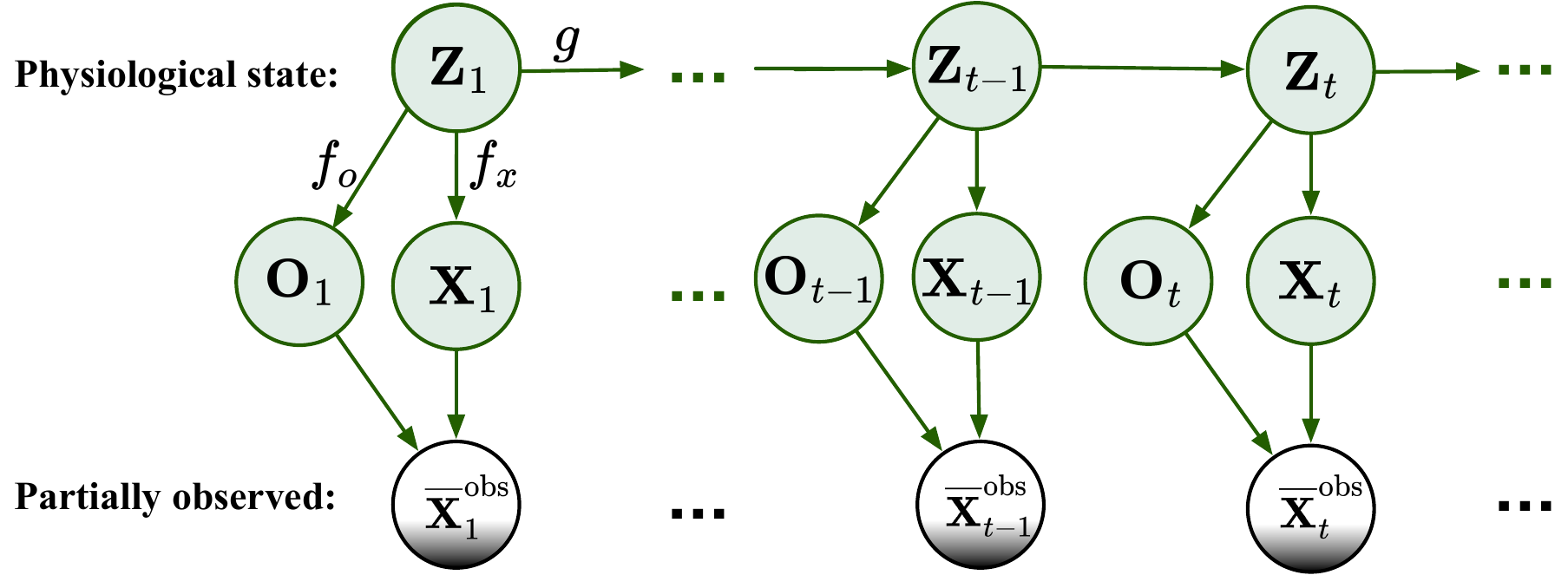}
    \caption{Directed acyclic graph (DAG)  illustrating the state-space model with missingness as defined in model \eqref{model:1}. At each time $t$, $\bZ_t$ represents the patient's latent physiological state, $\bX_t$ denotes the complete measurement vector, $\bO_t$ is the binary observation mask, and only the partially observed data $\bX_t^{\text{obs}}$ is available.}
    \label{fig:dag}
\end{figure}
In clinical applications, this framework offers a natural interpretation: $\bZ_t$ can be interpreted as a low-dimensional representation of the patient's physiological states at time $t$,  which governs both the observations $\bX_t^{\text{obs}}$ and mask $\bO_t$.

Crucially, model~\eqref{model:1} implies that the missingness is driven by the underlying state $\bZ_t$ rather than the measurement values themselves. We formalize this characteristic as follows:
\begin{definition}\label{def:ldm}
The missingness mechanism is said to satisfy
\emph{latent-driven missingness} (LDM) if $\bO_t$ and $\bX_t$
are conditionally independent given the latent state $\bZ_t$; that is,
\bes
    \bO_t \independent \bX_t \mid \bZ_t.
\ees
\end{definition}
Within the LDM framework, the missingness pattern is informative: observing which variables are measured, and when,
provides information about the underlying physiological trajectory. As
illustrated in Example~\ref{example:ldm}, this latent-state dependence generally
induces an MNAR mechanism. Indeed, because the latent state jointly drives the full measurements and the
mask, the conditional distribution of the mask given the data may depend on both
observed and unobserved components of the measurement vector through this latent
state.
\begin{example}[Latent-Driven MNAR]
\label{example:ldm}
Consider the case $p = 2$, $d = T = 1$. Let $z \sim \mathcal{N}(0,1)$ be a latent variable, and define the complete-data variables
\[
x_1 = z + \varepsilon_1, \qquad x_2 = z + \varepsilon_2,
\]
where $\varepsilon_1, \varepsilon_2 \sim \mathcal{N}(0,\sigma^2)$ are mutually independent and independent of $z$, with $0 < \sigma^2 < \infty$. The missingness mask $\bo = (o_1, o_2)$ is given by
\[
o_1 \equiv 1, \qquad o_2 = \bfone(z > 0),
\]
so that $x_1$ is always observed, while $x_2$ is missing whenever $z \leq 0$. By construction, the mechanism satisfies LDM, i.e., $o \independent x \mid z$. However, it is not MAR, since $\mathbb{P}(o_2 = 0 \mid x_1, x_2) \neq \mathbb{P}(o_2 = 0 \mid x_1),$
as the posterior distribution of $z$ given $(x_1, x_2)$ depends on both components. Hence, the mechanism is MNAR. 
\end{example}

\section{Forward-backward inference with LIFTS}\label{sec:methodology}

\noindent
{
Under the latent-driven model in Section~\ref{sec:set-up}, LIFTS organizes forecasting and imputation through forward filtering and backward smoothing. This structure parallels Kalman filtering
\citep{kalman1960new} and Rauch--Tung--Striebel smoothing
\citep{rauch1965maximum}, which provide exact conditional laws for a specified linear--Gaussian model. \syr{Particle filtering and smoothing extend these classical approaches to nonlinear and non-Gaussian state-space models \citep{liu1998sequential,chopin2020introduction}, with further developments for parameter learning \citep{chopin2013smc2} and  efficient backward smoothing \citep{dau2023backward}.}

As illustrated in Figure~\ref{fig:method-overview}, LIFTS approximates these conditional laws using learnable generators that do not require explicit density evaluation. An encoder provides filtered latent states for forecasting, a smoother samples latent trajectories backward, and a decoder generates conditional imputations.
}
\subsection{Forward pass: filtering and forecasting}\label{sec:forward_pass}
The forward pass serves two purposes: inferring the current latent state and forecasting future trajectories. We first focus on the filtering distribution $\law(\bZ_t \mid \wbX_{1:t}^{\text{obs}})$, which summarizes all information available up to time $t$. This conditional law is represented through a generative map, referred to as the \textit{encoder}, whose existence is formalized below.
\begin{proposition}[Existence of encoder representation]\label{prop:existence_enc}
Suppose model~\eqref{model:1} holds. Then there exists a measurable encoding
function $\enc:
\left(\bigcup_{t=1}^T\widebar{\calX}_{1:t}^{\text{obs}}\right)
\times\bbR^d
\longrightarrow\bbR^d$
such that, for each $t\in[T]$ and almost every observed history
$\wbx_{1:t}^{\mathrm{obs}}
\in\widebar{\calX}_{1:t}^{\mathrm{obs}}$,
\bel{obj:encoder}
\bZ_t \mid
\bigl\{\wbX_{1:t}^{\text{obs}}=\wbx_{1:t}^{\text{obs}}\bigr\}
\deq
\enc\left(\wbx_{1:t}^{\text{obs}},\cdeps_t\right),
\qquad \cdeps_t\sim\calN(\bzero,\bI_d).
\eel
Here, $\{\cdeps_t\}_{t=1}^T$ are independent Gaussian vectors, that are independent
of $\wbX_{1:T}^{\text{obs}}$.
\end{proposition}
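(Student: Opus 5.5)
The plan is to treat this as a noise-outsourcing (randomization) statement: every regular conditional distribution on $\bbR^d$ can be realized as a measurable function of the conditioning variable and an independent uniform, or equivalently Gaussian, seed. First I construct the filtering kernel. Since $\bbR^d$ is Polish and $\widebar{\calX}_{1:t}^{\text{obs}}$ is a standard Borel space, a regular conditional distribution $\kappa_t(\wbx_{1:t}^{\text{obs}},\cdot)$ of $\bZ_t$ given $\wbX_{1:t}^{\text{obs}}$ exists and is unique for almost every history. Here $\widebar{\calX}_{1:t}^{\text{obs}}$ is a finite disjoint union, over mask patterns, of Euclidean pieces, with $\na$ treated as an isolated point. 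The conditioning is on the measurable map $(\bX_{1:t},\bO_{1:t})\mapsto\wbX^{\text{obs}}_{1:t}$ of the joint vector generated by model~\eqref{model:1}. I will not need any density or smoothness of $\trans,\out_x,\out_o$.

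Next I realize each kernel by a Gaussian seed, using the Knothe--Rosenblatt (conditional quantile) construction. Fix a history $\wbx_{1:t}^{\text{obs}}$ and write $\kappa=\kappa_t(\wbx_{1:t}^{\text{obs}},\cdot)$. Let $\Phi$ be the standard normal CDF, so that $U_k=\Phi((\cdeps_t)_k)$ are i.i.d.\ uniform. Define recursively
\[
Z_1=F_{1}^{-1}(U_1),\qquad Z_k=F_{k\mid 1:k-1}^{-1}(U_k\mid Z_{1:k-1}),\quad k=2,\ldots,d .
\]
Here $F_{k\mid 1:k-1}$ is the conditional CDF of the $k$th coordinate given the first $k-1$ under $\kappa$, and $F^{-1}(u)=\inf\{z:F(z)\ge u\}$. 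The standard fact that $F^{-1}(U)\sim F$ for every CDF $F$, with no continuity assumptions, applied coordinatewise together with the chain rule for the disintegration, gives $(Z_1,\dots,Z_d)\sim\kappa$. Setting $\enc_t(\wbx_{1:t}^{\text{obs}},\cdeps)=(Z_1,\dots,Z_d)$ therefore yields \eqref{obj:encoder} for that $t$.

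\textbf{Main obstacle: joint measurability.} I must show that $(\wbx,\cdeps)\mapsto\enc_t(\wbx,\cdeps)$ is jointly Borel. I would proceed in three steps:
\begin{itemize}
\item[(a)] Choose the nested conditional CDFs $F_{k\mid 1:k-1}(z\mid \wbx, z_{1:k-1})$ as regular conditional distributions of $\bZ_{t,k}$ given $(\wbX_{1:t}^{\text{obs}},\bZ_{t,1:k-1})$. These are jointly measurable in all arguments and right-continuous in $z$.
\item[(b)] Use $\{F^{-1}(u)\le z\}=\{u\le F(z)\}$ and reduce to rational $z$. This shows that the generalized inverse is jointly measurable in $(u,\wbx,z_{1:k-1})$.
\item[(c)] Compositions of measurable maps are measurable, so $\enc_t$ is jointly Borel.
\end{itemize}
Alternatively, I could cite the Kallenberg transfer/randomization lemma directly, which packages these steps. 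Because the kernels are only defined almost everywhere, the identity holds only for almost every history, which is exactly the claim.

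Finally, I glue the $t$-specific maps together. The domain $\bigcup_{t=1}^T\widebar{\calX}_{1:t}^{\text{obs}}$ is a disjoint union, since histories of different lengths are distinct objects, so I define $\enc$ as $\enc_t$ on the $t$th piece; this is measurable on the union $\sigma$-field. For the independence requirement, I take $\{\cdeps_t\}_{t=1}^T$ i.i.d.\ $\calN(\bzero,\bI_d)$ and independent of the data. Then the conditional law of $\enc(\wbX_{1:t}^{\text{obs}},\cdeps_t)$ given $\wbX_{1:t}^{\text{obs}}=\wbx_{1:t}^{\text{obs}}$ is the pushforward of $\calN(\bzero,\bI_d)$ under $\enc_t(\wbx_{1:t}^{\text{obs}},\cdot)$, which equals $\kappa_t$. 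This completes the statement.
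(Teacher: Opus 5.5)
Your proposal is correct and follows essentially the same route as the paper: a regular conditional distribution of $\bZ_t$ given $\wbX_{1:t}^{\text{obs}}$ exists on standard Borel spaces, a Gaussian noise-outsourcing step realizes it as a measurable map, and the $t$-specific maps are glued according to history length. The only difference is that you construct the noise-outsourcing map explicitly, via the conditional-quantile (Knothe--Rosenblatt) transform together with its joint-measurability check, whereas the paper cites the Gaussian version of Kallenberg's transfer lemma (its Lemma~\ref{lemma:noise_outsourcing}).
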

\begin{remark}
 The encoder $\enc$ represents the filtering distribution induced by the state-space model. Hence it is determined by the transition, measurement, and missingness functions $\trans$, $\out_x$, and $\out_o$, together with the initial-state and noise distributions specified in model~\eqref{model:1}. However, unlike linear Gaussian models with complete data, where the Kalman filter provides an exact analytical solution, the closed-form expression for $\enc$ in our general setting is typically intractable.
\end{remark}
Consequently, we parameterize $\enc$ as a learnable function to approximate the filtering map. This approximation must adapt to the observed values and  time-varying masks in $\wbX_{1:t}^{\text{obs}}$, since the posterior uncertainty of $\bZ_t$ depends on both. This dependence distinguishes our encoder from VAE-based imputation methods, such as GP-VAE \citep{fortuin2020gp}, which rely on prespecified latent priors and are not designed to represent filtering distributions under dynamic masking.
The proof of Proposition \ref{prop:existence_enc} is provided in Appendix \ref{pf:prop:existence_enc}.

Given this filtering representation, one-step forecasting is obtained by propagating the latent state through the transition and output maps, as formalized below.
\begin{corollary}[Forward forecasting]\label{thm:forecasting}
Suppose model~\eqref{model:1} holds, and let $\enc$ be the encoder in
Proposition~\ref{prop:existence_enc}. Then there exist measurable maps $\btrans:\bbR^d\times\bbR^d\to\bbR^d,\quad
\bout:\bbR^d\times\bbR^{2p}\to\calX\times\calO,$
such that, for each $t\in[T]$ and almost every
$\wbx_{1:t}^{\text{obs}}\in\widebar{\calX}_{1:t}^{\text{obs}}$,
\bel{obj:forecast}
\left(\bX_{t+1},\bO_{t+1}\right)
\mid
\bigl\{\wbX_{1:t}^{\text{obs}}=\wbx_{1:t}^{\text{obs}}\bigr\}
\deq
\bout\left\{
\btrans\left(
\enc\left(\wbx_{1:t}^{\text{obs}},\cdeps_t\right),\deps_t
\right),\deta_t
\right\},
\eel
where $\cdeps_t,\deps_t\sim\calN(\bzero,\bI_d)$ and
$\deta_t\sim\calN(\bzero,\bI_{2p})$ are independent and independent
of $\wbX_{1:t}^{\text{obs}}$.
\end{corollary}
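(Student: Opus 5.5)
The plan is to use the Markov structure of model~\eqref{model:1} to factor the one-step predictive law into three pieces: the filtering law, the transition kernel, and the joint output kernel. Each piece is then realized by a measurable map driven by Gaussian noise. Recall that $\wbX_{1:t}^{\text{obs}}$ is a measurable function of $(\bX_{1:t},\bO_{1:t})$. The noises $\bv_t,\br_{t+1}^x,\br_{t+1}^o$ are independent of $\bZ_1,\bv_{1:t-1},\br_{1:t}^x,\br_{1:t}^o$. Hence, conditionally on $\wbX_{1:t}^{\text{obs}}$, the triple $(\bv_t,\br_{t+1}^x,\br_{t+1}^o)$ is independent of $\bZ_t$ and keeps its original law.

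From this we obtain, for almost every $\wbx_{1:t}^{\text{obs}}$,
\begin{align*}
(\bX_{t+1},\bO_{t+1})\mid\{\wbX_{1:t}^{\text{obs}}=\wbx_{1:t}^{\text{obs}}\}
\deq
\bigl(\out_x(\trans(\bZ_t',\bv),\br^x),\,\out_o(\trans(\bZ_t',\bv),\br^o)\bigr).
\end{align*}
Here $\bZ_t'$ follows the filtering law of $\bZ_t$ given $\wbX_{1:t}^{\text{obs}}=\wbx_{1:t}^{\text{obs}}$, and $(\bv,\br^x,\br^o)$ is an independent copy of the noises. By Proposition~\ref{prop:existence_enc}, we may take $\bZ_t'=\enc(\wbx_{1:t}^{\text{obs}},\cdeps_t)$.

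It remains to replace the arbitrary noise laws by standard Gaussians. The noise $\bv_t$ lives in $\bbR^d$, and the noise $(\br^x,\br^o)$ in $\bbR^{2p}$. Both are standard Borel spaces. We therefore apply the noise-outsourcing (transfer) lemma, or simply the fact that any Borel probability measure on a Polish space is the pushforward of Lebesgue measure on $[0,1]$. The uniform variable is in turn obtained as $\Phi$ of one coordinate of a Gaussian. This gives measurable $\phi_v:\bbR^d\to\bbR^d$ with $\phi_v(\deps)\deq\bv_t$ for $\deps\sim\calN(\bzero,\bI_d)$. It also gives $\phi_r:\bbR^{2p}\to\bbR^{2p}$ with $\phi_r(\deta)\deq(\br^x,\br^o)$ for $\deta\sim\calN(\bzero,\bI_{2p})$.

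Now set $\btrans(\bz,\deps)\coloneq\trans(\bz,\phi_v(\deps))$. Set $\bout(\bz,\deta)\coloneq(\out_x(\bz,\phi_r^x(\deta)),\out_o(\bz,\phi_r^o(\deta)))$, where $\phi_r=(\phi_r^x,\phi_r^o)$. These compositions are measurable. Since $\cdeps_t,\deps_t,\deta_t$ are mutually independent and independent of the history, the display above becomes exactly \eqref{obj:forecast}. The i.i.d.\ assumption makes $\btrans$ and $\bout$ independent of $t$.

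The main obstacle is the conditional-independence step, which must hold at the level of regular conditional distributions for almost every history. I would handle it by writing the joint law of $(\bZ_t,\wbX_{1:t}^{\text{obs}},\bv_t,\br_{t+1})$ as a product, with the fresh noise independent of the $\sigma$-field generated by $(\bZ_t,\wbX_{1:t}^{\text{obs}})$. The conditional law of a measurable function of $(\bZ_t,\text{fresh noise})$ then follows by Fubini. Note that $\bX_{t+1}$ and $\bO_{t+1}$ share the same $\bZ_{t+1}$, so the joint noise $(\br^x,\br^o)$ must be represented jointly. It is independent by assumption, so representing it jointly is harmless.
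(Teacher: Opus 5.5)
Your proposal is correct and follows essentially the same route as the paper. Both represent $\bv_t$ and the pair $(\br_{t+1}^x,\br_{t+1}^o)$ jointly as measurable images of standard Gaussians via noise outsourcing, define $\btrans$ and $\bout$ by composition, and substitute the encoder draw from Proposition~\ref{prop:existence_enc}. Your explicit argument that the fresh noise is independent of $(\bZ_t,\wbX_{1:t}^{\text{obs}})$, so that the regular conditional law factors by Fubini, fills in a step the paper leaves implicit in the phrase ``together with model~\eqref{model:1}.''
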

\begin{remark}\label{remark:gaussian_noise}
The reference noises $\deps_t$ and $\deta_t$ need not match the model-noise
distributions in \eqref{model:1}. By standard randomization, the model noises
can be represented as measurable transformations of standard Gaussian vectors.
The map $\btrans$ therefore combines $\trans$ with the transition-noise
transformation, while $\bout$ combines $\out_x$ and $\out_o$ with the
corresponding output-noise transformations.
\end{remark}
Corollary~\ref{thm:forecasting} provides a one-step-ahead predictive representation given almost every partially observed history up to time $t\in[T]$. At the terminal time $T$, multi-step forecasting is obtained by recursively propagating the sampled latent state through $\btrans$ and applying the output map at each future time point, yielding trajectories $\wbX_{T+1:T+s}$ for any horizon $s\geq 1$.
\begin{figure}[t]
    \centering
    \includegraphics[width=1\linewidth]{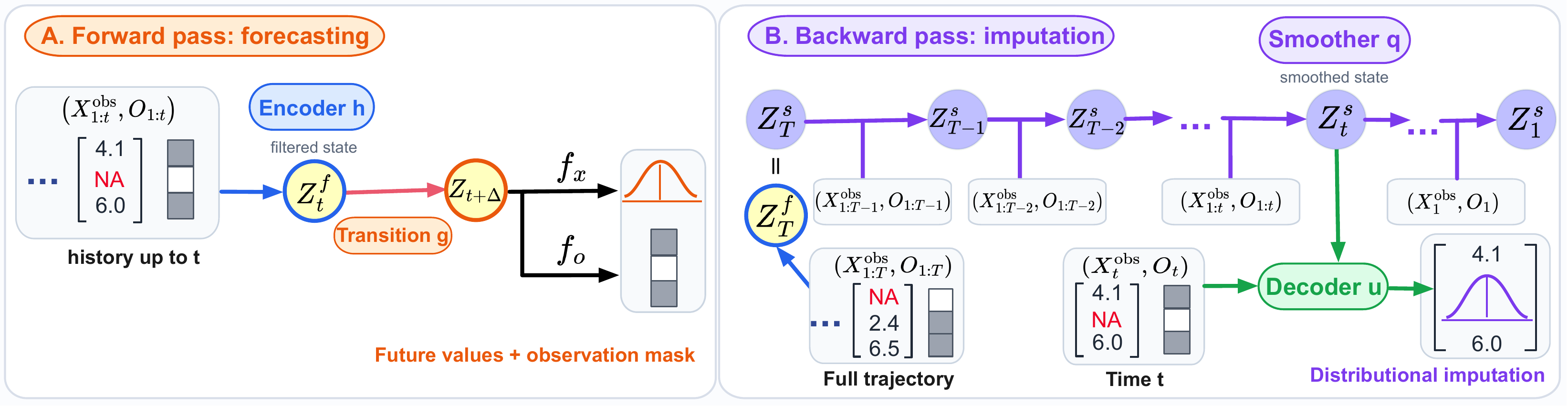}
    \caption{Schematic illustration of the inference framework for forecasting and imputation.
    }
    \label{fig:method-overview}
\end{figure}
\subsection{Backward pass: smoothing and imputation}
Having established the forward pass, we now address the generation of \(\bX_t\) conditional on the full observed trajectory, i.e., \(\law(\bX_t \mid \wbX_{1:T}^{\text{obs}})\). For ease of exposition, we use conditional density notation below; when such densities are not available, the same arguments can be interpreted in terms of conditional distributions. By marginalizing over the latent state $\bZ_t$, and utilizing the conditional independence properties $(iv)$ in Lemma \ref{lemma:fact-1}, this target distribution admits the decomposition:
\begin{align}\label{eq:sample-1}
p(\bX_t \mid \wbX_{1:T}^{\text{obs}})
= \int \underbrace{p(\bX_t \mid \bZ_t, \wbX_t^{\text{obs}})}_{\text{via decoder }\dec} \,\cdot \ \underbrace{p(\bZ_t \mid \wbX_{1:T}^{\text{obs}})}_{\text{smoothing distribution}} \, \dd\bZ_t.
\end{align}
\ We use the decoder $\dec$ to generate samples from the first conditional law in \eqref{eq:sample-1}. The main difficulty is therefore to sample from the smoothed posterior $\law(\bZ_t \mid \wbX_{1:T}^{\text{obs}})$. By the backward conditional independence in Lemma~\ref{lemma:fact-1} $(iii)$, this smoothed posterior admits the recursive decomposition
\begin{align}\label{eq:sample-2}
p(\bZ_t \mid \wbX_{1:T}^{\text{obs}}) =  \int   \underbrace{p(\bZ_t \mid \bZ_{t+1}, \wbX_{1:t}^{\text{obs}})}_{\text{via smoother } \backt} \ \cdot  \underbrace{p(\bZ_{t+1} \mid \wbX_{1:T}^{\text{obs}})}_{\text{next smoothing distribution}} \ \dd\bZ_{t+1}.
\end{align}
The recursion is initialized at the terminal step $T$ by sampling $\bZ_T \mid \wbX_{1:T}^{\text{obs}}$ using the encoder. For $t < T$, the key component is the {backward conditional distribution} $\law(\bZ_t \mid \bZ_{t+1}, \wbX_{1:t}^{\text{obs}})$. When the corresponding densities exist, Bayes' rule and the forward Markov property imply $p(\bZ_t \mid \bZ_{t+1}, \wbX_{1:t}^{\text{obs}}) \ \propto \  p(\bZ_{t+1} \mid \bZ_t) \cdot p(\bZ_t \mid \wbX_{1:t}^{\text{obs}}).$
Although it theoretically links this distribution to the transition kernel and filtering law, direct sampling is generally intractable for implicit generative models.  In particular, when the transition $\trans(\cdot)$ and encoder $\enc(\cdot)$ are parameterized by neural networks (without explicit density functions), standard sampling techniques that rely on density evaluations, such as rejection sampling or Markov chain Monte Carlo (MCMC), cannot be applied directly.

We approximate the backward conditional law with a parameterized smoother
\(\backt\), enabling direct sampling without density evaluation. The
procedure is formalized in Theorem~\ref{thm:imputation}.
\begin{theorem}[Backward imputation recursion]\label{thm:imputation}
Suppose model~\eqref{model:1} holds. Then there exist measurable maps
$\backt$ and $\dec$ such that, for almost every
$\wbx_{1:T}^{\text{obs}}\in\widebar{\calX}_{1:T}^{\text{obs}}$, starting from
$\bZ_T^s=\enc(\wbx_{1:T}^{\text{obs}},\cdeps_T)$, the recursion
\bel{eq:obj-smooth-1}
\bZ_t^s &\coloneq&
\backt\left(\bZ_{t+1}^s,\wbx_{1:t}^{\text{obs}},\tdxi_t\right),
\qquad \tdxi_t\sim\calN(\bzero,\bI_d),\quad t= T-1, \ldots,1,
\eel
generates the joint smoothing distribution
\bel{eq:obj-smooth-joint}
(\bZ_1,\ldots,\bZ_T)\mid
\bigl\{\wbX_{1:T}^{\text{obs}}=\wbx_{1:T}^{\text{obs}}\bigr\}
&\deq& (\bZ_1^s,\ldots,\bZ_T^s).
\eel
Moreover, setting
$\tbX_t\coloneq\ \dec(\bZ_t^s,\wbx_t^{\text{obs}},\tdeta_t)$, where
$\tdeta_t\sim\calN(\bzero,\bI_p)$, gives
\bel{eq:obj-smooth-2}
(\bX_1,\ldots,\bX_T)\mid
\bigl\{\wbX_{1:T}^{\text{obs}}=\wbx_{1:T}^{\text{obs}}\bigr\}
\deq (\tbX_1,\ldots,\tbX_T).
\eel
All reference noises are mutually independent and independent of
\(\wbX_{1:T}^{\text{obs}}\).
\end{theorem}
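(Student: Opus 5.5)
We would build the argument from two ingredients. The first is a noise-outsourcing (randomization) lemma: for Polish spaces, any regular conditional distribution $\law(\bY\mid\bW=\bw)$ with $\bY\in\bbR^k$ can be represented as $\phi(\bw,\bxi)$ with $\bxi\sim\calN(\bzero,\bI_k)$ and $\phi$ jointly measurable. This is the same device used for Proposition~\ref{prop:existence_enc} and Remark~\ref{remark:gaussian_noise}. The second is the conditional independence structure of model~\eqref{model:1}, as collected in Lemma~\ref{lemma:fact-1}.

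\textbf{Step 1 (the smoother).} Apply randomization to the regular conditional law of $\bZ_t$ given $(\bZ_{t+1},\wbX_{1:t}^{\text{obs}})$. This yields a measurable $\backt_t(\bz_{t+1},\wbx_{1:t}^{\text{obs}},\bxi)$ with $\bxi\sim\calN(\bzero,\bI_d)$. Because histories of different lengths live in disjoint spaces $\widebar{\calX}_{1:t}^{\text{obs}}$, the maps $\backt_t$ glue into a single measurable $\backt$ on the disjoint union, exactly as for $\enc$.

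\textbf{Step 2 (the decoder).} Randomize the conditional law of $\bX_t$ given $(\bZ_t,\wbX_t^{\text{obs}})$, with Gaussian noise of dimension $p$, to get $\dec$. Time-homogeneity of $\out_x,\out_o$ and of the noise laws makes this kernel independent of $t$, so a single $\dec$ suffices. Note that $\dec$ reproduces the observed coordinates, since these are a.s.\ fixed by the conditioning.

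\textbf{Step 3 (joint smoothing law by backward induction).} We show that for each $t$ the recursion gives
\[
(\bZ_t^s,\ldots,\bZ_T^s)\deq(\bZ_t,\ldots,\bZ_T)\mid\{\wbX_{1:T}^{\text{obs}}=\wbx_{1:T}^{\text{obs}}\}.
\]
\emph{Base case.} At $t=T$ this is Proposition~\ref{prop:existence_enc}.

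\emph{Inductive step.} Write the target joint law as
\[
\law(\bZ_t\mid \bZ_{t+1:T},\wbX_{1:T}^{\text{obs}})\otimes\law(\bZ_{t+1:T}\mid\wbX_{1:T}^{\text{obs}}).
\]
The key fact, Lemma~\ref{lemma:fact-1}(iii) in its strong form, is
\[
\bZ_t\independent(\bZ_{t+2:T},\wbX_{t+1:T}^{\text{obs}})\mid(\bZ_{t+1},\wbX_{1:t}^{\text{obs}}).
\]
This is verified from the DAG in Figure~\ref{fig:dag}: conditioning on $\bZ_{t+1}$ d-separates past from future, since every path from $\bZ_t$ or $\wbX_{1:t}^{\text{obs}}$ to the future passes through $\bZ_{t+1}$. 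Hence the first factor equals the kernel realized by $\backt$. Since $\tdxi_t$ is fresh noise independent of everything else, the induction closes. The delicate point is that the statement must hold for the \emph{joint} law, not only for marginals. For that reason we need the strong form above, conditioning on all of $\bZ_{t+1:T}$, rather than only the pairwise version used in~\eqref{eq:sample-2}.

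\textbf{Step 4 (imputations).} Factor the joint law of $\bX_{1:T}$ and $\bZ_{1:T}$ given $\wbX_{1:T}^{\text{obs}}$ as
\[
\prod_t \law(\bX_t\mid\bZ_t,\wbX_t^{\text{obs}})\ \otimes\ \law(\bZ_{1:T}\mid\wbX_{1:T}^{\text{obs}}).
\]
This uses that, given $\bZ_{1:T}$, the pairs $(\bX_t,\bO_t)$ are independent across $t$, because the noises $\br_t^x,\br_t^o$ are i.i.d.\ and independent of $\bv$. Consequently $\bX_t$ given $(\bZ_{1:T},\wbX_{1:T}^{\text{obs}})$ depends only on $(\bZ_t,\wbX_t^{\text{obs}})$. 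Combined with Step 3 and the independent $\tdeta_t$, this gives~\eqref{eq:obj-smooth-2}. Marginalizing over $\bZ^s$ completes the argument.

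\textbf{Main obstacle.} The hard part is measure-theoretic: making the equalities hold for a.e.\ history simultaneously across all $t$, given that regular conditional distributions are only defined a.e. We would handle this by working throughout with versions of the kernels defined from one joint disintegration, and by discarding a countable union of null sets.
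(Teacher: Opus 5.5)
Your proposal is correct and follows the paper's own proof. Both proofs apply noise outsourcing to the backward kernel of Lemma~\ref{lemma:fact-1}$(iii)$, in the strong form that conditions on all of $\bZ_{t+1:T}$, and to the local kernel of $(iv)$, start from the encoder draw, and use the backward chain rule to get the joint law. Your explicit backward induction, the cross-time product factorization of $\bX_{1:T}$ given $(\bZ_{1:T},\wbX_{1:T}^{\text{obs}})$, and the time-homogeneity argument for a single $\dec$ spell out steps the paper leaves implicit.
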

Theorem~\ref{thm:imputation} shows that the recursive construction recovers the joint smoothing distribution of the latent trajectory, rather than only its time-wise marginals. Thus, the generated latent paths preserve the temporal dependence encoded by the state-space model. Applying the decoder to the resulting latent trajectory then yields a joint imputation of the full sequence.

\syr{The backward recursion
is also fundamental to classical particle smoothing \citep{godsill2004monte,dau2023backward},
with recent neural extensions \citep{younis2024learning}.
While particle methods approximate latent-state posterior distributions using weighted particles, LIFTS employs learned conditional generators for direct sampling without particle weighting or resampling at inference time. Beyond this computational distinction, LIFTS jointly learns the measurement and missingness processes from partially observed trajectories, with latent-state inference serving as an intermediate step for forecasting and imputation.}

Prior to parameter estimation, we assume that $\bout$, $\btrans$, $\enc$, $\dec$, and $\backt$ are correctly specified. The overall forecasting and imputation procedure is summarized in Algorithm~\ref{alg:inference-LIFTS}.
\begin{algorithm}[H]
\caption{LIFTS inference for forecasting and imputation}
\label{alg:inference-LIFTS}
\setstretch{0.96}
\setlength{\abovedisplayskip}{2pt}
\setlength{\belowdisplayskip}{2pt}
\renewcommand{\algorithmicrequire}{\textbf{Input:}}
\begin{algorithmic}[1]
\Require Pretrained maps \(\bout,\btrans,\enc,\dec,\backt\), partially observed trajectory
\(\wbx_{1:T}^{\mathrm{obs}}\), and number of Monte Carlo samples \(J\).
\State Draw terminal smoothed states, for \(j=1,\ldots,J\),
\[
\bz_T^{s,(j)}
=
\enc\left(\wbx_{1:T}^{\mathrm{obs}},\cdeps_T^{(j)}\right),
\qquad
\cdeps_T^{(j)}\sim\calN(\bzero,\bI_d).
\]
\State \textbf{Forecasting: }generate one-step forecasts
\[
\left(\hbx_{T+1}^{(j)},\hbo_{T+1}^{(j)}\right)
=
\bout\left(
\btrans\left(\bz_T^{s,(j)},\deps_T^{(j)}\right),
\deta_T^{(j)}
\right),
\quad j=1,\ldots,J,
\]
where \(\deps_T^{(j)}\sim\calN(\bzero,\bI_d)\) and
\(\deta_T^{(j)}\sim\calN(\bzero,\bI_{2p})\).
\For{\(t=T,T-1,\ldots,1\)}
    \If{\(t<T\)}
        \State Draw backward smoothed states
        \[
        \bz_t^{s,(j)}
        =
        \backt\left(\bz_{t+1}^{s,(j)},\wbx_{1:t}^{\mathrm{obs}},
        \tdxi_t^{(j)}\right),
        \quad
        \tdxi_t^{(j)}\sim\calN(\bzero,\bI_d),\quad j=1,\ldots,J.
        \]
    \EndIf
    \State \textbf{Imputation:} impute complete observations
    \[
    \tbx_t^{(j)}
    =
    \dec\left(\bz_t^{s,(j)},\wbx_t^{\mathrm{obs}},\tdeta_t^{(j)}\right),
    \quad
    \tdeta_t^{(j)}\sim\calN(\bzero,\bI_p),\quad j=1,\ldots,J.
    \]
\EndFor
\State \textbf{Return:}
\(\{(\hbx_{T+1}^{(j)},\hbo_{T+1}^{(j)})\}_{j=1}^{J}\) and
\(\{\tbx_{1:T}^{(j)}\}_{j=1}^{J}\).
\end{algorithmic}
\end{algorithm}
\section{Joint learning of generative and inference maps in LIFTS}\label{sec:estimation}

\noindent
We now turn to estimation of the transition map \(\btrans\), the output map \(\bout\), and the inference maps \(\enc\), \(\dec\), and \(\backt\). These components are estimated jointly, as described in Section~\ref{subsec:joint-training}.

A central choice in distributional learning is the criterion used to assess
agreement between a predictive law and the target law. For a predictive
distribution \(Q\) and an observation \(\bx\), the energy score
\citep{gneiting2007strictly} is defined as
\bel{def:es}
S(\bx,Q)
\coloneq
\frac{1}{2}\bbE_{\bX,\bX'\sim Q}\|\bX-\bX'\|_2
-
\bbE_{\bX\sim Q}\|\bx-\bX\|_2 .
\eel
The first term captures the dispersion among predictive draws, whereas the
second measures their distance from the observed realization \(\bx\). For a
target distribution \(P\), we further define the population energy score as
\[
S(P,Q) \coloneq \bbE_{\bx\sim P} S(\bx,Q).
\]
It is this population score that measures distributional agreement between
\(P\) and \(Q\), with larger values indicating better agreement.

The energy score is strictly proper: for any target distribution \(P\),
\[
S(P,Q) \leq S(P,P),
\]
with equality if and only if \(Q=P\)
\citep{gneiting2007strictly,szekely2013energy}. Moreover, for conditional
predictive laws, averaging sample scores based on one observed outcome per
conditioning history and independent draws from the corresponding predictive
law yields an unbiased estimator of the expected population score
\citep{shen2024distributional,shen2025engression}. This provides a directly
estimable sample-based objective that is particularly suitable for implicit
generative models, as it avoids explicit density evaluation required by
KL-based objectives and optimal transport problems associated with
Wasserstein distances.
Our training losses will use the negative
energy score and, for partially observed outcomes, its masked version
$S^{\text{mask}}(\cdot;\cdot)$ defined in \eqref{mask-es} below.

\subsection{Masked energy score}\label{subsec:masked-es}
Given two observed realizations
\(\wbx_i^{\text{obs}}=(\bx_i^{\text{obs}},\bo_i)
\in\widebar{\calX}^{\text{obs}}\), \(i=1,2\), we define a projected distance
as follows:
\bel{dist-proj-main}
\operatorname{dist}(\wbx^{\text{obs}}_1, \wbx^{\text{obs}}_2)
=
\left(\bigl\|\bx^{\text{obs}}_1 \odot \bo_1 - \bx^{\text{obs}}_2\odot \bo_2\bigr\|_2^2
+  \bigl\|\bo_1 - \bo_2\bigr\|_2^2\right)^{1/2}.
\eel
where the operator $\odot$ represents an augmented Hadamard product, with the convention that $\na \cdot 0 =0$.  We subsequently define the \emph{masked energy score} as
\bel{mask-es}
S^{\text{mask}}(\wbx^{\text{obs}}; Q)
&\coloneq&
\frac{1}{2}\bbE_{\wbX_1^{\text{obs}}, \wbX_2^{\text{obs}} \sim Q}
\operatorname{dist}(\wbX_1^{\text{obs}}, \wbX_2^{\text{obs}})
- \bbE_{\wbX^{\text{obs}} \sim Q}
\operatorname{dist}(\wbx^{\text{obs}}, \wbX^{\text{obs}}).
\eel 
For complete data, where every entry of $\bO_t$ equals one, $S^{\text{mask}}$ becomes the classical energy score.
\begin{proposition}[Strict propriety of the masked energy score]
\label{prop:mask-proper}
Let $P_o$ be a distribution on $\widebar{\calX}^{\mathrm{obs}}$, and let
$\bY\sim P_o$. For any distribution $Q$ on
$\widebar{\calX}^{\mathrm{obs}}$ such that $P_o$ and $Q$ have finite first
moments under $\operatorname{dist}$,
\[
\bbE_{\bY\sim P_o}\bigl[S^{\text{mask}}(\bY;P_o)\bigr]
\geq
\bbE_{\bY\sim P_o}\bigl[S^{\text{mask}}(\bY;Q)\bigr],
\]
with equality if and only if $P_o=Q$.
\end{proposition}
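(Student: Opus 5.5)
The plan is to reduce the masked energy score to the classical energy score on a Euclidean space through an injective embedding. Define $\phi:\widebar{\calX}^{\mathrm{obs}}\to\bbR^{2p}$ by
\[
\phi(\bx^{\mathrm{obs}},\bo)=\bigl(\bx^{\mathrm{obs}}\odot\bo,\ \bo\bigr),
\]
with the convention $\na\cdot 0=0$. By \eqref{dist-proj-main}, $\operatorname{dist}(\wbx_1^{\mathrm{obs}},\wbx_2^{\mathrm{obs}})=\|\phi(\wbx_1^{\mathrm{obs}})-\phi(\wbx_2^{\mathrm{obs}})\|_2$. Hence
\[
S^{\mathrm{mask}}(\wbx^{\mathrm{obs}};Q)=S\bigl(\phi(\wbx^{\mathrm{obs}}),\phi_\#Q\bigr),
\]
where $S$ is the energy score \eqref{def:es} on $\bbR^{2p}$ and $\phi_\#Q$ is the pushforward of $Q$. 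Averaging over $\bY\sim P_o$ gives
\[
\bbE_{\bY\sim P_o}\bigl[S^{\mathrm{mask}}(\bY;Q)\bigr]=S(\phi_\#P_o,\phi_\#Q),
\]
and the same identity holds with $Q$ replaced by $P_o$. The finite-first-moment hypothesis under $\operatorname{dist}$ is exactly the finite-first-moment condition for $\phi_\#P_o$ and $\phi_\#Q$ in $\bbR^{2p}$. Therefore all expectations are finite, and the change of variables is justified by Fubini.

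The inequality then follows from the strict propriety of the energy score on $\bbR^{2p}$ cited in the paper \citep{gneiting2007strictly,szekely2013energy}:
\[
S(\phi_\#P_o,\phi_\#Q)\le S(\phi_\#P_o,\phi_\#P_o),
\]
with equality if and only if $\phi_\#P_o=\phi_\#Q$. For completeness, one can write
\[
S(P,P)-S(P,Q)=\tfrac12\,\mathcal{E}(P,Q),
\]
where
\[
\mathcal{E}(P,Q)=2\bbE\|\bX-\bY\|-\bbE\|\bX-\bX'\|-\bbE\|\bY-\bY'\|
\]
is the energy distance. Its nonnegativity, and its vanishing only when $P=Q$, follow from the Fourier representation
\[
\mathcal{E}(P,Q)=c_{2p}\int|\varphi_P(\bm t)-\varphi_Q(\bm t)|^2\,\|\bm t\|^{-2p-1}\,\dd\bm t.
\]

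The remaining step, and the only point specific to this setting, is to lift equality of pushforwards back to $P_o=Q$. It suffices to show that $\phi$ is injective and measurable on $\widebar{\calX}^{\mathrm{obs}}$.

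\emph{Recovering the mask.} Given $(\bm a,\bo)=\phi(\bx^{\mathrm{obs}},\bo)$, the mask $\bo$ is read off directly from the second block.

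\emph{Recovering the values.} By the definition \eqref{def:space:X-obs}, $\bx^{\mathrm{obs}}_j=\na$ exactly when $\bo_j=0$, and $\bx^{\mathrm{obs}}_j=\bm a_j$ when $\bo_j=1$.

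So the inverse is explicit on the image, which lies in $\bbR^p\times\{0,1\}^p$ and is Borel. Measurability is clear once $\widebar{\calX}^{\mathrm{obs}}$ is given the natural $\sigma$-field: a finite disjoint union over masks $\bo$ of copies of $\bbR^{|\bo|}$, on each of which $\phi$ is a coordinate embedding. Consequently $\phi_\#P_o=\phi_\#Q$ implies $P_o(A)=P_o(\phi^{-1}(\phi(A)))=Q(A)$ for every measurable $A$.

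I expect no serious obstacle. The only care needed is this injectivity point: the zero in $\bx^{\mathrm{obs}}\odot\bo$ would be ambiguous between a missing entry and an observed value equal to $0$, and the mask coordinate in $\operatorname{dist}$ resolves exactly that ambiguity.
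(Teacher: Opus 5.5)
Your proposal is correct and follows essentially the same route as the paper. The paper embeds via the injective map $\Gamma(\wbx^{\mathrm{obs}})=((\bx^{\mathrm{obs}}\odot\bo)^\top,\bo^\top)^\top\in\bbR^{2p}$, identifies $S^{\mathrm{mask}}$ with the ordinary energy score of the pushforward law, invokes strict propriety of the energy score from Gneiting and Raftery, and lifts equality of pushforwards back to $P_o=Q$ by injectivity. Your energy-distance/Fourier identity and your measurability check on the inverse just make explicit steps the paper leaves implicit.
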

Note that a larger score indicates a superior distributional prediction. In practice, we minimize the negative masked energy score when constructing the empirical loss function. Proposition~\ref{prop:mask-proper} shows that the masked energy score defined in \eqref{mask-es} constitutes a strictly proper scoring rule, with the formal proof provided in Appendix~\ref{pf:prop:mask-proper}.
\subsection{Joint training}\label{subsec:joint-training}
We estimate the forward and backward components jointly using three
distributional losses. The forecasting loss guides the estimation of \(\btrans\), \(\bout\), and \(\enc\) by matching the one-step observation law; the synthetic loss calibrates \(\enc\) to the filtering law implied by the current generative model; and the imputation loss guides the estimation of \(\backt\) and \(\dec\) via a self-masking strategy. 
    We write
\[
\calL_{\mathrm{fore}}(\btrans,\bout,\enc),\qquad
\calL_{\mathrm{syn},\enc}(\enc;\btrans,\bout),\qquad
\calL_{\mathrm{imp}}(\backt,\dec;\enc),
\]

for the forecasting, calibration, and imputation losses defined in
\eqref{joint-train-forecasting}, \eqref{def:L_syn_enc}, and
\eqref{joint-train-imp}, respectively. These losses update
\((\btrans,\bout,\enc)\), \(\enc\), and \((\backt,\dec)\), respectively.

The observation law admits the factorization
\(p(\wbx_{1:T}^{\mathrm{obs}})=p(\wbx_1^{\mathrm{obs}})
\prod_{t=1}^{T-1}p(\wbx_{t+1}^{\mathrm{obs}}\mid
\wbx_{1:t}^{\mathrm{obs}})\).  
We therefore train the forward model by matching each one-step conditional law.
Let
\(Q_{t+1\mid t}^{\mathrm{fore}}\)
denote the predictive law generated by \((\enc,\btrans,\bout)\) given \(\wbX_{1:t}^{\mathrm{obs}}\) as in
\eqref{obj:forecast}. The forecasting objective minimizes the negative masked
energy score,
\begin{align}\label{joint-train-forecasting}
\calL_{\mathrm{fore}}(\btrans,\bout,\enc)
&\coloneq
-
\bbE
\left[
\frac{1}{T}\sum_{t=0}^{T-1}
S^{\mathrm{mask}}
\left(
\wbX_{t+1}^{\mathrm{obs}};
Q_{t+1\mid t}^{\mathrm{fore}}
\right)
\right] \notag \\
&=
\frac{1}{T}\sum_{t=0}^{T-1}\bbE
\left[
\bbE_{\widehat\wbX^{\text{obs}}_{t+1} \sim Q_{t+1\mid t}^{\mathrm{fore}}}
\operatorname{dist}(\wbX_{t+1}^{\mathrm{obs}}, \widehat\wbX^{\text{obs}}_{t+1})
- \frac{1}{2}
\bbE_{\widehat\wbX_{t+1}^{(1)}, \widehat\wbX_{t+1}^{(2)} \sim Q_{t+1\mid t}^{\mathrm{fore}}}
\operatorname{dist}(\widehat\wbX_{t+1}^{(1)}, \widehat\wbX_{t+1}^{(2)})
\right].
\end{align}
Specifically,
\(Q_{1\mid 0}^{\mathrm{fore}}\) is generated
directly from \(\bout(\bZ_1,\deta_1)\).

The calibration loss uses the current transition and output maps to
generate synthetic pairs
\(\{(\wbX_{1:t}^{\mathrm{obs},\mathrm{syn},m},
\bZ_t^{\mathrm{syn},m})\}_{m\ge1}\) from the corresponding noise-outsourced
generative process. Since \(\bZ_t^{\mathrm{syn}}\) is available, the encoder can
be calibrated directly against the filtering law that generated it. Let
\(\law_t^{\enc}(\cdot\mid\wbX_{1:t}^{\mathrm{obs},\mathrm{syn}})\) denote the
encoder law, and define
\begin{align}\label{def:L_syn_enc}
\calL_{\mathrm{syn},\enc}(\enc;\btrans,\bout)
\coloneq
\frac{1}{T}\sum_{t=1}^{T}\bbE
\left[
\bbE_{\bZ^f_t\sim \law_t^{\enc}}
\|\bZ_t^{\mathrm{syn}}-\bZ^f_t\|_2
-
\frac{1}{2}
\bbE_{\bZ^f_t,\bZ^{f'}_t\sim \law_t^{\enc}}
\|\bZ^f_t-\bZ^{f'}_t\|_2
\right].
\end{align}
For imputation, the originally missing entries are unavailable as training
targets. We therefore use a self-masking strategy.  For each
trajectory, let \(\bM_{1:T}\) randomly retain \(80\%\) of the entries observed
under \(\bO_{1:T}\), and let
\(\law_t^{\mathrm{imp}}(\cdot\mid\bX_{1:T}^{\mathrm{obs}}\odot\bM_{1:T},\bM_{1:T})\)
denote the imputation law generated as in \eqref{eq:obj-smooth-2}. The held-out
entries \(\bO_{1:T}-\bM_{1:T}\) provide targets for matching the corresponding
conditional law. By construction, each generated trajectory preserves the retained entries,
$\widetilde{\bX}_t\odot\bM_t=\bX_t^{\mathrm{obs}}\odot\bM_t$; hence evaluating
the loss on $\bO_t$ is equivalent to evaluating it on $\bO_t-\bM_t$. The imputation loss $\calL_{\mathrm{imp}}(\backt,\dec;\enc)$ is the corresponding negative energy score:
\begin{align}\label{joint-train-imp}
    \frac{1}{T}\sum_{t=1}^{T}\bbE\!\Bigg[
\bbE_{\widetilde{\bX}_t\sim \law_t^{\mathrm{imp}}}
\left\|
\bX_t^{\mathrm{obs}} \odot \bO_t
-
\widetilde{\bX}_t\odot\bO_t
\right\|_2 -
\frac{1}{2}
\bbE_{\widetilde{\bX}_t,\widetilde{\bX}'_t\sim \law_t^{\mathrm{imp}}}
\left\|
\widetilde{\bX}_t\odot\bO_t
-
\widetilde{\bX}'_t\odot\bO_t
\right\|_2
\Bigg].
\end{align}
The weighted population objective is
\begin{align}\label{joint-train-total}
\calL_{\text{total}}(\calM)
\coloneq{}
\calL_{\text{fore}}(\btrans,\bout,\enc)
+\lambda\calL_{\text{imp}}(\backt,\dec;\enc)
+\lambda_{\enc}\calL_{\text{syn},\enc}(\enc;\btrans,\bout),
\end{align}
where \(\calM=(\btrans,\bout,\enc,\backt,\dec)\) collects the parameterized
maps, and \(\lambda\) and \(\lambda_{\enc}\) weight the imputation and synthetic
calibration losses, respectively.
The corresponding finite-sample forecasting, imputation, and synthetic
calibration losses will be
defined below in \eqref{def:L_hat_fore}, \eqref{def:L_hat_imp}, and
\eqref{def:L_hat_syn_enc}, respectively. Their weighted sum defines the empirical
objective
\begin{align}\label{joint-train-total-empirical}
\widehat{\calL}_{\mathrm{total},n,B}(\calM)
=
\widehat{\calL}_{\mathrm{fore},n}(\btrans,\bout,\enc)
+
\lambda\widehat{\calL}_{\mathrm{imp},n}(\backt,\dec;\enc)
+
\lambda_{\enc}
\widehat{\calL}_{\mathrm{syn},\enc,B}(\enc;\btrans,\bout),
\end{align}

which is an unbiased estimator of \(\calL_{\mathrm{total}}(\calM)\) in
\eqref{joint-train-total}. Specifically, the outer expectations in
\(\calL_{\mathrm{total}}(\calM)\) are replaced by averages over observed or
synthetically generated trajectories. For each trajectory and time point, two
independent draws from the relevant predictive, imputation, or encoder law are
used to estimate the corresponding energy score term. Under the neural-network parameterization of \(\calM\), the empirical objective
is differentiable almost everywhere and optimized using a stochastic gradient
method (Algorithm~\ref{alg:joint-training}); see
Section~\ref{subsec:implementation} for implementation details.

\begin{algorithm}[H]
\caption{Joint training for LIFTS}
\label{alg:joint-training}
\setstretch{1.1}
\setlength{\abovedisplayskip}{1pt}
\setlength{\belowdisplayskip}{1pt}
\setlength{\abovedisplayshortskip}{1pt}
\setlength{\belowdisplayshortskip}{1pt}
\renewcommand{\algorithmicrequire}{\textbf{Configuration:}}
\begin{algorithmic}[1]
\Require \qquad\qquad\qquad
\begin{itemize}
    \item Observed data \(\{\wbx_{1:T}^{\text{obs},i}\}_{i=1}^{n}\); number of synthetic trajectories \(B\); loss weights
\(\lambda,\lambda_{\enc}\); and initialized maps $\calM$.
    \item At each iteration, all reference noises are drawn independently from their specified standard normal distributions.
    \item A map declared ``held fixed'' receives no gradient from the corresponding loss.
\end{itemize}

\bigskip

\For{\emph{each training iteration}}
\smallskip
    \State \textbf{Forecasting.} For \(i\in[n]\) and \(j\in[2]\), draw
    \(\hwbx_{1,j}^{\text{obs},i}=\bout(\bz_{1,j}^{i},\deta_{1,j}^{i})\),
    where \(\bz_{1,j}^{i}\sim\calN(\bzero,\bI_d)\). \Statex \hspace{\algorithmicindent} For \(t = 1, \ldots, T-1\), set
  \(\displaystyle
    \hwbx_{t+1,j}^{\text{obs},i}\!=\!\bout\!\left\{
    \btrans\!\left(
    \enc\!\left(\wbx_{1:t}^{\text{obs},i},\cdeps_{t,j}^{i}\right),
    \deps_{t,j}^{i}\right),\deta_{t+1,j}^{i}\right\}.\)  Compute
 
   \begin{equation}
     \begin{aligned}\label{def:L_hat_fore}
    \widehat{\calL}_{\mathrm{fore},n}
    \left(\btrans,\bout,\enc\right)
    \!=\!\frac{1}{2nT}\sum_{i=1}^n\sum_{t=0}^{T-1}
    \!\Bigl[
    \operatorname{dist}\!\left(\wbx_{t+1}^{\text{obs},i},
    \hwbx_{t+1,1}^{\text{obs},i}\right)
    \!+\!\operatorname{dist}\!\left(\wbx_{t+1}^{\text{obs},i},
    \hwbx_{t+1,2}^{\text{obs},i}\right)
    \!-\!\operatorname{dist}\!\left(\hwbx_{t+1,1}^{\text{obs},i},
    \hwbx_{t+1,2}^{\text{obs},i}\right)
    \!\Bigr].
    \end{aligned}
   \end{equation}
    \State \textbf{Imputation via self-masking.} For each \(i\in[n]\), randomly retain
    \(80\%\) of the non-zero \Statex \hspace{\algorithmicindent} entries of  \(\bo_{1:T}^{i}\) to form
    \(\bM_{1:T}^{i}\), and write
    \(\wbx_{1:t}^{M,i}\coloneq
    (\bx_{1:t}^{\mathrm{obs},i}\odot\bM_{1:t}^{i},\bM_{1:t}^{i})\).
    For \(j\in[2]\), holding \Statex \hspace{\algorithmicindent}  $\enc$ fixed,  recursively set
    \bes
    \bz_{T,j}^{s,i}&=&\enc(\wbx_{1:T}^{M,i},\cdeps_{T,j}^{i}),\qquad
\bz_{t,j}^{s,i}=\backt(\bz_{t+1,j}^{s,i},\wbx_{1:t}^{M,i},\tdxi_{t,j}^{i}),
    \quad t=T-1,\ldots,1, \\
\tbx_{t,j}^{i}&=&\dec(\bz_{t,j}^{s,i},\wbx_t^{M,i},\tdeta_{t,j}^{i}),
    \quad t\in[T].
    \ees
    \Statex \hspace{\algorithmicindent}The decoder preserves the retained entries: $\tbx_{t,j}^{i} \odot\bM_{t}^{i} = \bx_t^{\mathrm{obs},i} \odot\bM_{t}^{i}$. Compute
 \begin{equation}\label{def:L_hat_imp}
\begin{aligned}
    \widehat{\calL}_{\mathrm{imp},n}
    \left(\backt,\dec;\enc\right)
    &=\frac{1}{2nT}\sum_{i=1}^n\sum_{t=1}^{T}
    \Bigl[
    \left\|\bx_t^{\mathrm{obs},i}\odot\bo_t^i
    -\tbx_{t,1}^{i}\odot\bo_t^i\right\|_2 \\
    &\quad+\left\|\bx_t^{\mathrm{obs},i}\odot\bo_t^i
    -\tbx_{t,2}^{i}\odot\bo_t^i\right\|_2
    -\left\|\tbx_{t,1}^{i}\odot\bo_t^i
    -\tbx_{t,2}^{i}\odot\bo_t^i\right\|_2
    \Bigr].
    \end{aligned}
 \end{equation}
\smallskip
    \State \textbf{Encoder calibration.} Using the fixed maps \((\btrans,\bout)\), generate \(B\) synthetic trajectories \Statex \hspace{\algorithmicindent}and form the corresponding pairs 
    \(\{(\wbx_{1:t}^{\mathrm{obs},\mathrm{syn},b},
    \bz_t^{\mathrm{syn},b})\}_{b\in[B],t\in[T]}\). For \(b\in[B], t\in[T]\), and \Statex \hspace{\algorithmicindent} \(j\in[2]\), draw 
    \(\hbz_{t,j}^{\mathrm{syn},b}
    =\enc(\wbx_{1:t}^{\mathrm{obs},\mathrm{syn},b},
    \cdeps_{t,j}^{\mathrm{syn},b})\). Compute
\begin{equation}\label{def:L_hat_syn_enc}
        \begin{aligned}
    \widehat{\calL}_{\mathrm{syn},\enc,B}
    \left(\enc;\btrans,\bout\right)
    =\frac{1}{2BT}\sum_{b=1}^{B}\sum_{t=1}^{T}
    \left[
    \left\|\bz_t^{\mathrm{syn},b}-\hbz_{t,1}^{\mathrm{syn},b}\right\|_2
    +\left\|\bz_t^{\mathrm{syn},b}-\hbz_{t,2}^{\mathrm{syn},b}\right\|_2
    -\left\|\hbz_{t,1}^{\mathrm{syn},b}-\hbz_{t,2}^{\mathrm{syn},b}\right\|_2
    \right].
    \end{aligned}
\end{equation}

    \State \textbf{Optimization.} Set
    \(
    \widehat{\calL}_{\mathrm{total},n,B}
    =     \widehat{\calL}_{\mathrm{fore},n}
    \left(\btrans,\bout,\enc\right)
    +\lambda\widehat{\calL}_{\mathrm{imp},n}(\backt,\dec;\enc)
    +\lambda_{\enc}\widehat{\calL}_{\mathrm{syn},\enc,B}(\enc;\btrans,\bout)\),\Statex \hspace{\algorithmicindent} and update  $\calM$ using one  Adam step. 

    \smallskip
    
\EndFor
\State \textbf{Return:}
\(\calM
=
\left(\btrans,\bout,\enc,\backt,\dec\right)\).
\end{algorithmic}
\end{algorithm}

\section{Identifiability and consistency}
\label{sec:identifiability_consistency}

\noindent
Under the general data-generating process in model~\eqref{model:1}, the observed-data law does not in general identify the full-data law, because distinct distributions of the unobserved entries and missingness mechanisms may induce the same observed-data law.  We therefore
show that, under a structural model and mild regularity conditions, the full
law is identifiable and consistently recoverable from the partially observed trajectories.
\subsection{Main result}
Throughout this section, let $\Theta\subset\bbR^q$ denote the parameter space
jointly indexing the transition, measurement, and missingness function classes.
For a generic parameter $\btheta\in\Theta$, our theoretical analysis focuses on
the following structural subclass of model~\eqref{model:1}:
\begin{align}
\bZ_{t+1} &= \trans^{\btheta}(\bZ_t,\bv_t),
\qquad t\in[T-1], \notag\\
\bX_t &= f_x^{\btheta}(\bZ_t)+\br_t,
\qquad t\in[T], \notag\\
\bbP(\bO_{t,j}=1\mid \bZ_t=\bz_t)
&=\sigma\{\pi_j^{\btheta}(\bz_t)\},
\qquad t\in[T],\quad j\in[p].
\label{eq:dgp_main}
\end{align}
Here \(f_x^{\btheta}\) is the noiseless measurement map, \(\pi_j^{\btheta}\) is
the \(j\)-th missingness logit, \(\sigma\) is the sigmoid function, and {the mask
coordinates are conditionally independent given \(\bz_t\).}

Compared to the general model~\eqref{model:1}, the structured
model~\eqref{eq:dgp_main} imposes two identification-oriented restrictions: an
additive observation model with known output noise, and a logistic missingness
mechanism governed by functions \(\pi_1,\ldots,\pi_p\). These
restrictions support identification: the known noise
law allows the distribution of \(f_x^{\btheta}(\bZ_t)\) to be recovered from the
fully observed law of \(\bX_t\) by deconvolution, while suitable mask patterns
identify the missingness logits. We assume
that the initial state $\bZ_1$, transition noises \(\{\bv_t\}_{t\ge1}\), and output
noises \(\{\br_t\}_{t\ge1}\) are mutually independent, with fixed and known
laws. 
Let
\begin{align}
    Q_T^{\btheta}\coloneq \law_{\btheta}(\bX_{1:T},\bO_{1:T}),
\qquad
Q_{o,T}^{\btheta}\coloneq \law_{\btheta}(\bX_{1:T}^{\text{obs}},\bO_{1:T})
\end{align}
denote the full trajectory law and the observed trajectory law under \(\btheta\), respectively. The true data-generating parameter is denoted by \(\btheta^*\), with corresponding laws \(Q_T^{\btheta^*}\) and \(Q_{o,T}^{\btheta^*}\). 

We next define the encoder-assisted one-step predictive law. Let \(\Phi\) denote the
encoder parameter space. For \((\btheta,\bphi)\in\Theta\times\Phi\) and \(t\ge2\), given
\(\wbx_{1:t-1}^{\text{obs}}\), draw a filtered latent state from
\(\enc^{\bphi}(\wbx_{1:t-1}^{\text{obs}},\cdeps_t)\), and propagate it one step
through \(\trans^{\btheta}\), \(f_x^{\btheta}\), and
\(\pi_1^{\btheta},\ldots,\pi_p^{\btheta}\). The resulting conditional full
predictive law is denoted by
\(Q_{t\mid 1:t-1}^{\btheta,\bphi}(\cdot\mid\wbx_{1:t-1}^{\text{obs}})\), and the
corresponding observed law by
\(Q_{o,t\mid 1:t-1}^{\btheta,\bphi}(\cdot\mid\wbx_{1:t-1}^{\text{obs}})\).
For \(t=1\), set \(Q_1^{\btheta,\bphi}=Q_1^{\btheta}\). The product of the one-step
observed conditional laws defines \(Q_{o,T}^{\btheta,\bphi}\).
Given an observed trajectory \(\wbx_{1:T}^{\text{obs}}\), define
\begin{align}
\ell_{\btheta,\bphi}(\wbx_{1:T}^{\text{obs}})
\coloneq
- \frac{1}{T}\sum_{t=1}^{T}
S^{\text{mask}}
\left(
\wbx_t^{\text{obs}};
Q_{o,t\mid 1:t-1}^{\btheta,\bphi}(\cdot\mid\wbx_{1:t-1}^{\text{obs}})
\right).
\label{eq:forecasting_loss_theory}
\end{align}
For i.i.d. samples
\(\{(\bX_{1:T}^{\text{obs},i},\bO_{1:T}^{i})\}_{i=1}^{n}\sim Q_{o,T}^{\btheta^*}\), let
\begin{align}
\calL_n(\btheta,\bphi)
&\coloneq
\frac{1}{n}\sum_{i=1}^{n}
\ell_{\btheta,\bphi}(\wbX_{1:T}^{\text{obs},i}),
\qquad
\calL(\btheta,\bphi)
\coloneq
\bbE_{\btheta^*}
\left[\ell_{\btheta,\bphi}(\wbX_{1:T}^{\text{obs}})\right].
\end{align}
{As discussed in Section~\ref{subsec:joint-training}, the encoder is calibrated
by the synthetic self-consistency loss \(\calL_{\mathrm{syn},\enc}\) in
\eqref{def:L_syn_enc}. For a candidate \(\btheta\), let
\(\calL_{\mathrm{syn},\enc}^{\btheta}(\bphi)\) denote this loss when the
synthetic data are generated under \(\btheta\), and define
$\tbPhi(\btheta)
\coloneq
\argmin_{\bphi\in\Phi}
\calL_{\mathrm{syn},\enc}^{\btheta}(\bphi).$
The aligned class and empirical estimator are $\calA \coloneq
\left\{(\btheta, \bphi)\in \Theta \times \Phi : \bphi \in \tbPhi(\btheta)\right\},$
and}
\begin{align}
(\hbtheta_n,\hbphi_n)
\in
\argmin_{(\btheta,\bphi)\in\calA}
\calL_n(\btheta,\bphi).
\label{def:em}
\end{align}
In practice, the
expectations in \(S^{\mathrm{mask}}\) are approximated using two independent draws from the 
one-step predictive law, yielding the unbiased estimator
\(\widehat{\calL}_{\mathrm{fore},n}\) of
\(\calL_n(\btheta,\bphi)\) in  \eqref{def:L_hat_fore}. The constraint
\(\bphi\in\tbPhi(\btheta)\) is empirically enforced by the calibration loss
\(\widehat{\calL}_{\mathrm{syn},\enc,B}\) in
Algorithm~\ref{alg:joint-training}.
Given the fitted parameter \(\hbtheta_n\), the full law
\(Q_T^{\hbtheta_n}\) is generated by the learned transition
\(\trans^{\hbtheta_n}\), output map \(f_x^{\hbtheta_n}\), and
logits \(\pi_1^{\hbtheta_n},\ldots,\pi_p^{\hbtheta_n}\), together with the known
initial-state and noise laws.
We formulate the consistency results in terms of Wasserstein distances. For the full-law
result, the underlying space is \((\calX\times\calO)^T\), with generic element
\(\wbx_{1:T}\), and metric $\rho(\wbx_{1:T},\wbx'_{1:T})
\coloneq
\sum_{t=1}^T
\left(\|\bx_t-\bx'_t\|_2^2 + \|\bo_t-\bo'_t\|_2^2\right)^{1/2}.$
For probability laws \(P,Q\) on this space with finite first moments, define
\[
W_1(P,Q)
=
\sup_{\varphi: \|\varphi\|_{\mathrm{L}}\le1}
\left|
\int \varphi\,dP-\int \varphi\,dQ
\right|,
\qquad 
\|\varphi\|_{\mathrm{L}}
\coloneq
\sup_{\wbx_{1:T}\neq\wbx'_{1:T}}
\frac{|\varphi(\wbx_{1:T})-\varphi(\wbx'_{1:T})|}
{\rho(\wbx_{1:T},\wbx'_{1:T})}.
\]

The following theorem shows the main conclusion of this section.
\begin{theorem}[Full-law consistency]
Consider the model given in~\eqref{eq:dgp_main}. Suppose Assumptions \ref{assum:encoder_consistency}--\ref{assum:static_base_main} hold, as stated below. Then
\[
W_1\bigl(Q_T^{\hbtheta_n},Q_T^{\btheta^*}\bigr)
\overset{p}{\longrightarrow}0
\quad\text{as } n \to \infty.
\]
\label{thm:full_law_consistency_main}
\end{theorem}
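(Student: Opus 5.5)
The plan is the classical two-part M-estimation argument. First, an \emph{identification} step shows that $\btheta^*$ is the unique minimizer, modulo equivalence of full laws, of the population criterion $\calL$ over the aligned class $\calA$. Second, a \emph{consistency} step uses uniform convergence of $\calL_n$ to $\calL$ over $\calA$, together with continuity of $\btheta\mapsto Q_T^{\btheta}$ in $W_1$, to transfer near-minimality of $(\hbtheta_n,\hbphi_n)$ to closeness of the full laws.

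\textbf{Step 1: the population minimizer reproduces the observed law.} By Assumption~\ref{assum:encoder_consistency}, I expect that for $(\btheta,\bphi)\in\calA$ the encoder $\enc^{\bphi}$ represents the true filtering law under $\btheta$, as in Proposition~\ref{prop:existence_enc}. Corollary~\ref{thm:forecasting} then gives $Q_{o,t\mid 1:t-1}^{\btheta,\bphi}=Q_{o,t\mid 1:t-1}^{\btheta}$, so $Q_{o,T}^{\btheta,\bphi}=Q_{o,T}^{\btheta}$. Applying Proposition~\ref{prop:mask-proper} conditionally on each history and averaging over $t$ yields
\[
\calL(\btheta,\bphi)\ge\calL(\btheta^*,\bphi^*),
\]
with equality iff every one-step observed conditional law matches the truth almost surely. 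By the factorization of the observation law, equality is equivalent to $Q_{o,T}^{\btheta}=Q_{o,T}^{\btheta^*}$.

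\textbf{Step 2: observed-law identification implies full-law identification.} This is the main obstacle, and I would prove it under the structural model~\eqref{eq:dgp_main} in three sub-steps.
\begin{enumerate}
\item From $Q_{o,T}^{\btheta}$, restrict to the event $\{\bO_{1:T}=\bfone\}$, which I assume has positive probability by the static base assumption. On this event the law of $(\bX_{1:T},\bO_{1:T}=\bfone)$ is observed. Since $\br_t$ has a known law with nonvanishing characteristic function, deconvolution recovers the measure $\bbE[\prod_{t,j}\sigma\{\pi_j(\bZ_t)\}\,\bone\{f_x(\bZ_{1:T})\in\cdot\}]$.
\item Varying the mask patterns, for instance by removing one coordinate at a time, gives ratios that identify each $\sigma(\pi_j)$ as a function of $f_x(\bz)$, assuming $f_x$ is injective or the logits factor through $f_x$. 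This recovers the joint law of $(f_x(\bZ_{1:T}),\bO_{1:T})$.
\item Adding the known noise back yields $Q_T^{\btheta}=Q_T^{\btheta^*}$.
\end{enumerate}
I would package this as a lemma: $Q_{o,T}^{\btheta}=Q_{o,T}^{\btheta^*}$ implies $Q_T^{\btheta}=Q_T^{\btheta^*}$. The expected difficulty is that the mask depends on $\bZ_t$ rather than on $\bX_t$. I would therefore rely on the static base assumption to supply exactly the positivity and injectivity conditions needed for the ratio argument, applied one time-point at a time and then jointly along the trajectory through the conditional-independence structure.

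\textbf{Step 3: consistency.} Using compactness of $\Theta$, closedness of $\calA$, and continuity/Lipschitz-envelope assumptions on the model maps, I would establish a uniform law of large numbers:
\[
\sup_{\calA}|\calL_n-\calL|\to0 \quad\text{in probability}.
\]
This holds because $\ell_{\btheta,\bphi}$ is bounded by integrable functions of $\operatorname{dist}$ and is continuous in the parameters. The standard argmin argument then gives $\calL(\hbtheta_n,\hbphi_n)\to\calL(\btheta^*,\bphi^*)$ in probability.

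Now suppose $W_1(Q_T^{\hbtheta_n},Q_T^{\btheta^*})>\epsilon$ with non-vanishing probability. Take a subsequence and use compactness to extract limit points $(\bar\btheta,\bar\bphi)\in\calA$ with $\calL(\bar\btheta,\bar\bphi)=\calL(\btheta^*,\bphi^*)$. Continuity of $\btheta\mapsto Q_T^{\btheta}$ in $W_1$ follows from continuity of the maps plus uniform integrability, and gives $W_1(Q_T^{\bar\btheta},Q_T^{\btheta^*})\ge\epsilon$. This contradicts Steps 1--2, completing the proof.
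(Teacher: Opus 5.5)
Your proposal is correct. It uses the same ingredients as the paper: strict propriety plus the aligned encoder to pin down the observed law, deconvolution on the fully observed and singleton-missing patterns to lift it to the full law, and compactness with continuity of $\calL$ and of $\btheta\mapsto Q_T^{\btheta}$. You assemble them differently in two places.

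\emph{Consistency.} The paper first proves $W_{1,o}(Q_{o,T}^{\hbtheta_n},Q_{o,T}^{\btheta^*})\to0$ (Proposition~\ref{prop:partial_law_main}, via a risk gap on aligned pairs whose observed law is $\varepsilon$-far from the truth). A second compactness argument on $\Theta$ then shows that $W_{1,o}(Q_{o,T}^{\btheta},Q_{o,T}^{\btheta^*})$ is bounded away from zero wherever $W_1(Q_T^{\btheta},Q_T^{\btheta^*})\ge\varepsilon$. You instead put the risk gap directly on $\{(\btheta,\bphi)\in\calA: W_1(Q_T^{\btheta},Q_T^{\btheta^*})\ge\varepsilon\}$, using that every population minimizer already has the correct full law. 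Your route needs a single compactness argument and no $W_{1,o}$-continuity of the trajectory-level observed laws. The paper's route isolates observed-law consistency as a result valid under Assumptions~\ref{assum:encoder_consistency}--\ref{assum:model_encoder_regularity} alone, into which any identification lemma can be plugged.

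\emph{Identification.} You work with $f_x(\bZ_t)$ and treat the logits as functions on its range, whereas the paper builds $\calT=(f_x^{\btheta^*})^{-1}\circ f_x^{\btheta}$ and transports latent laws. Since the full law is invariant under latent reparameterization, yours is the leaner version for this theorem. Running the singleton-missing argument at every time point, as you suggest, identifies the logits on the support of each $\law(f_x(\bZ_t))$ directly. The paper's propagation from $t=1$ through Assumption~\ref{assum:static_base_main}$(v)$--$(vi)$ is then unnecessary for this step. The paper's latent form proves more: the maps and latent path law are identified up to a common $\calT$ with $\calT_{\#}\mu_{\bZ}=\mu_{\bZ}$. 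Lemma~\ref{lemma:linear_gaussian_identifiability} uses exactly this to get $\bR\in O(d)$ and $\bA=\bR^\top\bA^*\bR$.

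One point in your Step~2 needs correcting. The condition you name for the ratio argument (``$f_x$ injective or the logits factor through $f_x$'') is not what makes it work. After deconvolution, the pattern $\bo_{\setminus\{j\}}$ gives only a measure on $\bbR^{p-1}$. Comparing it with the image of the deconvolved full-pattern measure under the projection $\Pi_{\bo_{\setminus\{j\}}}$ returns $e^{-\pi_j}$ pointwise only because this projection is one-to-one on $f_x^{\btheta^*}(\bbR^d)$, the set carrying that common measure. In other words, it works because the true coordinate-deleted map $\Pi_{\bo_{\setminus\{j\}}}\circ f_x^{\btheta^*}$ is injective. With only the full map injective, you recover merely a conditional average of $e^{-\pi_j}$ given the remaining $p-1$ coordinates, which does not determine the logit.

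Injectivity of the candidate's full map $f_x^{\btheta}$ is a separate requirement: it keeps the mask law given $f_x^{\btheta}(\bZ_t)$ in product-logistic form. Both conditions are in Assumption~\ref{assum:static_base_main}$(ii)$, so your argument closes, but you should state that this is where the assumption enters. Finally, your subsequence step with random estimators is cleaner as a deterministic gap: $\inf\{\calL(\btheta,\bphi)-\calL(\btheta^*,\bphi^*):(\btheta,\bphi)\in\calA,\ W_1(Q_T^{\btheta},Q_T^{\btheta^*})\ge\varepsilon\}>0$, combined with convergence in probability of the excess risk.
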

Theorem~\ref{thm:full_law_consistency_main} shows that the learned structural
model consistently recovers the true full law \(Q_T^{\btheta^*}\), even though it
is trained only on partially observed trajectories. Recovering the full law is important because it enables the fitted model to
generate complete trajectories, supporting both imputation and downstream tasks
such as treatment effect estimation and subgroup analysis.

The proof sketch of Theorem~\ref{thm:full_law_consistency_main} proceeds through two main steps:
observation-law consistency
(Proposition~\ref{prop:partial_law_main}), and finite-horizon full-law
identifiability
(Lemma~\ref{lem:full_law_finiteT_main}).  
The remainder of this
section states the sufficient conditions for the theorem and verifies them in a
linear factor model with logistic missingness. 
\subsection{Conditions for the main theorem}\label{subsec:assumption}
This subsection collects Assumptions~\ref{assum:encoder_consistency}--\ref{assum:static_base_main}, which are used in
Theorem~\ref{thm:full_law_consistency_main}.
\begin{assumption}[Well-specified encoder class]
\label{assum:encoder_consistency}
For each \(\btheta\in\Theta\), there exists $\bphi \in \Phi$ such that $\enc^{\bphi}$ represents the
filtering distributions as stated in Proposition~\ref{prop:existence_enc}. That is, for \(t=1,\ldots,T\),
\[
\enc^{\bphi}(\wbX_{1:t}^{\mathrm{obs}},\cdeps_t)
\sim
\law_{\btheta}(\bZ_{t}\mid\wbX_{1:t}^{\mathrm{obs}}).
\]
\end{assumption}
Assumption~\ref{assum:encoder_consistency} requires that, for every candidate
$\btheta\in\Theta$, the encoder class contain a representation of the filtering
distributions induced by that model. This is naturally supported by flexible sequence models: sufficiently
expressive Transformer classes can approximate general sequence-to-sequence
functions \citep{yun2019transformers}. 
\begin{assumption}[Model-encoder regularity]\label{assum:model_encoder_regularity}
Fix \(T<\infty\). Assume:
\begin{enumerate}
\item[$(i)$] \(\Theta\) and \(\Phi\) are compact parameter spaces for
fixed-architecture ReLU networks and standard Transformers, respectively.
\item[$(ii)$] The initial state and all noises have finite second moments.
\end{enumerate}
\end{assumption}
Compactness of \(\Theta\) and \(\Phi\) uniformly bounds the weights of the
fixed-architecture networks. ReLU networks and Transformers
with softmax attention are measurable, continuous, and have at most linear
growth in their inputs \citep{vaswani2017attention}. Together with the finite
second moments of the initial state and model noises, these properties imply,
for fixed \(T\), uniform square integrability of the model-generated and
encoder-assisted latent states and observations. This provides the tail control
needed for the masked energy score and the continuity and uniform convergence
arguments. The fixed architectures still retain substantial approximation
flexibility on compact domains \citep{yarotsky2017error,lu2021deep}.

Define the observed-law equivalence class of \(\btheta^*\) by
\begin{align}
    \Theta_{o,T}(\btheta^*) \coloneq \{\btheta\in\Theta: Q_{o,T}^{\btheta}=Q_{o,T}^{\btheta^*}\}.
\end{align}
For \(\bo\in\{0,1\}^p\), let $I(\bo):=\{j\in[p]:\bo_j=1\}$
be the observed-coordinate set, and let
\(\Pi_{\bo}:\bbR^p\to\bbR^{|I(\bo)|}\) be the coordinate projection onto
\(I(\bo)\). Define $f_x^{\btheta,\bo}:=\Pi_{\bo}\circ f_x^{\btheta},
$ and $
f_x^{\btheta^*,\bo}:=\Pi_{\bo}\circ f_x^{\btheta^*}.$
We denote the singleton-missing pattern collection $\widetilde{\calO}
=
\{\bo_{\setminus\{1\}},\dots,\bo_{\setminus\{p\}}\},$
where \(\bo_{\setminus\{j\}}\) is the pattern missing only the \(j\)-th coordinate.
\begin{assumption}[Regularity for full-law identification]
\label{assum:static_base_main}
Assume:
\begin{itemize}
\item[$(i)$] Given \(\bz\), mask coordinates are independent with  logits
\(\pi_j^{\btheta}:\bbR^d\to\bbR\). 

\item[$(ii)$] For every \(\btheta\in\Theta_{o,T}(\btheta^*)\), the full maps
\(f_x^{\btheta}\) and \(f_x^{\btheta^*}\) are continuous and one-to-one maps
from \(\bbR^d\) to \(\bbR^p\). Moreover, for every
\(\bo\in\widetilde{\calO}\), the true submap
\(f_x^{\btheta^*,\bo}\) is one-to-one.
\item[$(iii)$] The observation noises \(\br_t\sim\law(\br)\), \(t\in[T]\), are i.i.d.;
the law \(\law(\br)\) is known and has a characteristic function that is
non-vanishing on \(\bbR^p\).
\item[$(iv)$] The transition noises \(\{\bv_t\}_{t=1}^{T-1}\) are i.i.d. from a
known law \(\law(\bv)\).
\item[$(v)$] The law \(\law(\bZ_1)\) is known and has a density on \(\bbR^d\) that is
positive almost everywhere.
\item[$(vi)$] For every \(\btheta\in\Theta_{o,T}(\btheta^*)\) and
\(t=2,\ldots,T\), the law of \(\bZ_t\) is absolutely continuous with
respect to Lebesgue measure on \(\bbR^d\).
\end{itemize}
\end{assumption}
Assumption~\ref{assum:static_base_main} collects the regularity conditions on
the missingness mechanism, output map,
noise distributions, latent states used for finite-horizon full-law identification. 

Specifically, {the one-to-one requirement on \(f_x^{\btheta^*}\) means that the
complete measurement map preserves the latent information. When \(p\gg d\), sufficiently informative measurement maps can remain injective
even after one coordinate is removed.} {Furthermore, the common latent law fixes the latent coordinate system used to
compare two representations: if the latent variable is transformed by an
invertible map, the observed-data law can be represented after absorbing
this change into \(f_x\) and the missingness logits \(\pi_j\). Similarly, fixing common observation and transition noise  laws does not reduce the relevant model generality; as
discussed in Remark~\ref{remark:gaussian_noise}, general noise distributions can
be represented through transformations of reference Gaussian noises and absorbed
into the structural maps.}

Initial-state density positivity and subsequent absolute continuity allow the
static identification result to extend over a finite horizon. The latter
condition $(vi)$ holds for many nonlinear transitions, such as autoregressive models
with additive noise having a density.
\subsection{A linear factor model with logistic missingness}
\label{subsec:example}
We illustrate how the assumptions can be verified in a linear factor model with
logistic missingness, where they reduce to rank and Gaussian-noise
conditions. Fix \(T\ge2\) and consider
\begin{align}
\bZ_t &= \bA\bZ_{t-1}+\bv_t,
\qquad
\bv_t\sim\calN(\bzero,\bI_d),
\qquad t=2,\ldots,T, \notag\\
\bX_t &= \bF\bZ_t+\br_t,
\qquad
\br_t\sim\calN(\bzero,\bI_p),
\qquad t=1,\ldots,T, \notag\\
\bbP(\bO_{t,j}=1\mid\bZ_t=\bz)
&=
\sigma(a_j+\bw_j^\top\bz),
\qquad j=1,\ldots,p .
\label{model:linear_gaussian}
\end{align}
with $\bZ_1 \sim \calN(\bzero,\bI_d)$ and $\ba=(a_1,\ldots,a_p)$. The mask coordinates
are conditionally independent given \(\bZ_t\), and the noise variables
\(\{\br_t\}_{t=1}^T\), \(\{\bv_t\}_{t=2}^T\), and \(\bZ_1\) are mutually
independent. Let $\btheta=(\bA,\bF,\ba,\bw_1,\ldots,\bw_p),\ 
\btheta^*=(\bA^*,\bF^*,\ba^*,\bw_1^*,\ldots,\bw_p^*)$, 
and define
\[
\Theta
\coloneq
\left\{
\btheta:
\|\bA\|_F+\|\bF\|_F+\|\ba\|_2+\sum_{j=1}^p\|\bw_j\|_2\le C
\right\}.
\]
For a matrix \(\bF\in\bbR^{p\times d}\), let \(\bF_{-j}\) denote the submatrix
obtained by removing its \(j\)-th row. Let $O(d)\coloneq
\left\{\bR\in\bbR^{d\times d}:\bR^\top\bR=\bI_d\right\}$
denote the orthogonal group in dimension \(d\).
\begin{lemma}[Observed-law equivalence in the linear logistic model]
\label{lemma:linear_gaussian_identifiability}
Given model~\eqref{model:linear_gaussian}, suppose that $\rank(\bF^*_{-j})=d$ for $j\in [p]$.  If $Q_{o,T}^{\btheta}=Q_{o,T}^{\btheta^*}$, there exists $\bR\in O(d)$ such that
\[
    \bF=\bF^*\bR,
    \qquad \ba=\ba^*,
    \qquad \bw_j=\bR^\top \bw_j^*,\; j=1,\ldots,p,
    \qquad \bA=\bR^\top \bA^*\bR.
\]
In particular, $\bF$ and each $\bF_{-j}$ have full column rank.  Conversely, every parameter satisfying these identities has the same observed law as $\btheta^*$. Consequently, the equivalence class of \(\btheta^*\) is
\[
\Theta_{o,T}^{\mathrm{lin}}(\btheta^*)
=
\Theta\cap
\left\{
\btheta:
\exists\bR\in O(d),
\bF=\bF^*\bR,\ 
\ba=\ba^*,\
\bw_j=\bR^\top\bw_j^*\ \forall j,\ 
\bA=\bR^\top\bA^*\bR
\right\}.
\]
\end{lemma}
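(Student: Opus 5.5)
The plan is to read off, from the observed law, the ``fully observed'' and ``singleton-missing'' subdensities. I would deconvolve the Gaussian noise from each of them and recover the latent-weighted measures; ratios of these measures then give the logits, the latent covariance, and finally $\bA$.

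\textbf{Step 1 (deconvolution at a single time).} Fix a pattern $\bo$ with observed set $I=I(\bo)$. The subdistribution of $(\bX_{1,I},\bO_1=\bo)$ is the convolution of $\calN(\bzero,\bI_{|I|})$ with the finite measure
\[
\nu_{\bo}^{\btheta}\coloneq (\bF_I)_{\#}\bigl(P_{\btheta}(\bo\mid\bz)\,\phi_d(\bz)\,\dd\bz\bigr),
\qquad
P_{\btheta}(\bo\mid\bz)=\prod_j\sigma(a_j+\bw_j^\top\bz)^{\bo_j}\{1-\sigma(a_j+\bw_j^\top\bz)\}^{1-\bo_j}.
\]
The Gaussian characteristic function never vanishes, so $Q_{o,T}^{\btheta}=Q_{o,T}^{\btheta^*}$ forces $\nu_{\bo}^{\btheta}=\nu_{\bo}^{\btheta^*}$ for every $\bo$. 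I will use two kinds of pattern: the full pattern $\bfone$, and the singleton-missing patterns $\bo_{\setminus\{j\}}\in\widetilde{\calO}$.

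\textbf{Step 2 (rank, logits, and $\bF$).} Write $g_{\btheta}(\bz)=\prod_k\sigma(a_k+\bw_k^\top\bz)$, which is strictly positive, and let $g_{\btheta,-j}(\bz)=\prod_{k\neq j}\sigma(a_k+\bw_k^\top\bz)$.
\begin{itemize}
\item Because $g_{\btheta}\phi_d>0$, the support of $\nu_{\bfone}^{\btheta}$ is $\Col(\bF)$. Hence $\Col(\bF)=\Col(\bF^*)$ has dimension $d$, so $\rank\bF=d$. The same argument applied to $\nu_{\bo_{\setminus\{j\}}}^{\btheta}$ gives $\rank(\bF_{-j})=d$, which is the ``in particular'' claim.
\item Project $\nu_{\bfone}$ onto the coordinates $-j$ and add $\nu_{\bo_{\setminus\{j\}}}$. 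This yields $(\bF_{-j})_{\#}(g_{\btheta,-j}\phi_d)$. The Radon--Nikodym derivative of the projected $\nu_{\bfone}$ with respect to this sum equals $\sigma\bigl(a_j+\bw_j^\top\bF_{-j}^{+}\by\bigr)$ on $\Col(\bF_{-j})$; here injectivity of $\bF_{-j}$ is what makes $\bz=\bF_{-j}^+\by$ well defined.
\item Both sides are intrinsic to the common measures, so the affine function $\by\mapsto a_j+\bw_j^\top\bF_{-j}^+\by$ coincides under $\btheta$ and $\btheta^*$. Consequently the function $\by\mapsto g_{\btheta}(\bF^+\by)$ on $\Col(\bF)$ is also common to both parameters.
\item Dividing $\nu_{\bfone}$ by this common function gives $\bF_{\#}\calN(\bzero,\bI_d)=\calN(\bzero,\bF\bF^\top)$, so $\bF\bF^\top=\bF^*\bF^{*\top}$. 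Since both matrices have full column rank, $\bF=\bF^*\bR$ for some $\bR\in O(d)$.
\item Substituting $\bz=\bR^\top\bz^*$, where $\bz^*=\bF^{*+}\by$, the equality of affine functions reads $a_j+\bw_j^\top\bR^\top\bz^*=a_j^*+\bw_j^{*\top}\bz^*$ for all $\bz^*\in\bbR^d$. Matching the constant and linear parts gives $a_j=a_j^*$ and $\bw_j=\bR^\top\bw_j^*$.
\end{itemize}

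\textbf{Step 3 (transition matrix).} I repeat Step 1 for the pair of times $(1,2)$ with $\bO_1=\bO_2=\bfone$. Deconvolution recovers $(\bF\oplus\bF)_{\#}\bigl(g_{\btheta}(\bz_1)g_{\btheta}(\bz_2)\,p_{\btheta}(\bz_1,\bz_2)\bigr)$. Dividing by the already-identified weight $g_{\btheta}(\bF^+\by_1)\,g_{\btheta}(\bF^+\by_2)$ gives the Gaussian law of $(\bF\bZ_1,\bF\bZ_2)$. Its cross-covariance is $\bF\bA\bF^\top$, because $\mathrm{Cov}(\bZ_2,\bZ_1)=\bA$. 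Therefore $\bF\bA\bF^\top=\bF^*\bA^*\bF^{*\top}$. Inserting $\bF=\bF^*\bR$ and cancelling the full-column-rank factors (via $\bF^{*+}$) yields $\bA=\bR^\top\bA^*\bR$.

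\textbf{Step 4 (converse and equivalence class).} Given parameters of this form, define $\bZ'_t=\bR^\top\bZ_t$. Then $\bZ'_1\sim\calN(\bzero,\bI_d)$ and $\bZ'_{t}=\bR^\top\bA^*\bR\,\bZ'_{t-1}+\bR^\top\bv_t$ with $\bR^\top\bv_t\sim\calN(\bzero,\bI_d)$. Moreover $\bF\bZ'_t=\bF^*\bZ_t$ and $\bw_j^\top\bZ'_t=\bw_j^{*\top}\bZ_t$. So the full law $Q_T$ is unchanged, and hence so is $Q_{o,T}$. Intersecting with $\Theta$ gives the stated $\Theta_{o,T}^{\mathrm{lin}}(\btheta^*)$.

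\textbf{Main obstacle.} The hardest step is Step 2. The ratio argument must be phrased purely in terms of the identified measures, so that no injectivity of the candidate $\bF$ is assumed before it is proved; this is why the support argument comes first. One also has to check that the Radon--Nikodym derivative is determined almost everywhere and then everywhere on $\Col(\bF_{-j})$; continuity of $\sigma$ and of affine maps handles the latter.
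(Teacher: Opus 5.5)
Your proof is correct. It uses the same ingredients as the paper's: pattern-wise Gaussian deconvolution, a support argument for the rank, singleton-missing likelihood ratios for the logits, reweighting by the inverse full-observation probability to recover the latent Gaussian, a two-time cross moment for $\bA$, and a latent rotation for the converse. It is, however, organized differently.

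The paper specializes its general static and finite-horizon identification lemmas. The latent reparameterization becomes $\calT(\bz)=\bR\bz$ with $\bR=(\bF^{*\top}\bF^*)^{-1}\bF^{*\top}\bF$, which at first is only known to be invertible. The logits then follow from $\pi_j^{\btheta}=\pi_j^{\btheta^*}\circ\calT$. Orthogonality comes afterwards from $\bR_{\#}\calN(\bzero,\bI_d)=\calN(\bzero,\bI_d)$, and $\bA$ is read off the transported trajectory law.

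You stay in observation coordinates instead, which changes several details:
\begin{itemize}
\item Your ratio $\sigma(a_j+\bw_j^\top\bF_{-j}^{+}\mathbold{y})$ avoids the pattern constants $c_{\bo}$ that appear in the paper's $\dd\nu_{\bo}/\dd\nu_{\mathbf 1}$.
\item Orthogonality drops out at once from $\bF\bF^\top=\bF^*\bF^{*\top}$.
\item $\rank(\bF_{-j})=d$ comes straight from supports rather than from $\bF_{-j}=\bF^*_{-j}\bR$.
\item You need only the event $\{\bO_1=\bO_2=\mathbf 1\}$ rather than the all-times fully observed event.
\end{itemize}

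What your route buys is self-containment: it never has to check the general lemma's injectivity hypothesis on the candidate $\bF$ before invoking that lemma. The paper deals with this by placing its support step first. What the paper's route buys is that the example appears as a literal instance of the general identification theorem.

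When you write it up, spell out why $\mathbold{y}\mapsto g_{\btheta}(\bF^{+}\mathbold{y})$ is common to both parameters. On $\Col(\bF)$ one has $\bF^{+}\mathbold{y}=\bF_{-k}^{+}\Pi_{-k}\mathbold{y}$, by injectivity of $\bF_{-k}$. Moreover $\Pi_{-k}\Col(\bF)=\Col(\bF_{-k})=\Col(\bF^*_{-k})$. Together these reduce each factor to the affine logit you already identified on $\Col(\bF_{-k})$.
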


We now verify the assumptions used in the main results. 
Assumption~\ref{assum:encoder_consistency} is a well-specification condition on
the encoder class: it holds when the class is rich enough to represent the
filtering distributions induced by model~\eqref{model:linear_gaussian}.
Assumption~\ref{assum:model_encoder_regularity} follows from compactness of the
parameter spaces and the finite second moments of the Gaussian initial state
and all noise variables.

By construction, the mask coordinates are conditionally independent given
\(\bZ_t\), with logistic logits \(a_j+\bw_j^\top\bz\). By
Lemma~\ref{lemma:linear_gaussian_identifiability}, every
\(\btheta\in\Theta_{o,T}^{\mathrm{lin}}(\btheta^*)\) has \(\bF\) full column
rank, and the assumed rank condition on \(\bF^*_{-j}\) gives the required
one-to-one single-missing submaps. The known Gaussian initial and noise laws
satisfy conditions $(iii)$--$(v)$, including the positive initial-state density and
the nowhere-vanishing characteristic function of the observation noise.
Finally, the additive Gaussian transition noise makes every subsequent
\(\bZ_t\) nondegenerate Gaussian, verifying condition $(vi)$. The proof of
Lemma~\ref{lemma:linear_gaussian_identifiability} is provided in
Appendix~\ref{pf:lemma:linear_gaussian_identifiability}.

\section{Empirical studies}\label{sec:experiments}

\subsection{Implementation details}\label{subsec:implementation}

\noindent
We parameterize the transition, output, and smoother using
multilayer perceptrons (MLPs). The encoder uses a causal attention layer  to
summarize the observed history, while the decoder uses variable-wise attention and a residual MLP to generate
completed observations.

All components are optimized jointly using the Adam optimizer implemented in
PyTorch, following Algorithm~\ref{alg:joint-training}. We conducted two preliminary sensitivity
analyses in Case~3 of Section~\ref{sec:simulation}. For brevity, the
corresponding figures and numerical results are reported in
Supplementary Section~\ref{supp:subsec:implementation}.

First, we examined the loss weights
\(\lambda\in\{1,2,3,4,5\}\) and
\(\lambda_h\in\{0,1,2,3,4,5\}\). Performance was relatively insensitive to
\(\lambda\) over the range considered. In addition, the synthetic
encoder-calibration loss produced no systematic improvement in the evaluated
performance metrics, while requiring synthetic trajectories to be generated
during training. We therefore set \(\lambda=5\) and \(\lambda_h=0\) in all
subsequent simulation and real-data experiments. When \(\lambda_h=0\), the
synthetic encoder-calibration step in
Algorithm~\ref{alg:joint-training} is omitted.

Second, we treat the fitted latent-state dimension, denoted by
\(d_{\mathrm{fit}}\), as a fixed architectural hyperparameter. We examined sensitivity to
\(d_{\mathrm{fit}}\) in Case~3 under the LDM mechanism, with a
data-generating latent dimension of \(d=15\), \(50\%\) missingness, and
\(p\in\{30,50,100\}\). Across the settings considered, there was no
systematic deterioration when \(d_{\mathrm{fit}} \geq d\);
larger fitted dimensions gave comparable or better performance overall. This
empirical robustness is consistent with recent results for
latent variable models, suggesting that a working latent dimension modestly
above the underlying dimension can be preferable to smaller choices by
providing greater representational flexibility under finite network capacity
\citep{xie2026identifiable,wang2025deep}. 
We therefore use \(d_{\mathrm{fit}}=p\) as a simple default throughout the
empirical studies.

Complete architectural and optimization details, including layers, hidden dimensions, 
batch size, and learning rate, are provided in the code repository.  We developed a Python package named ``lifts'', available at
\url{https://pypi.org/project/lifts/}. More examples and source code are available
at \url{https://github.com/sanyouwu/lifts-open-source/tree/main}.
\subsection{Simulation studies}\label{sec:simulation}
In this section, we conduct comprehensive simulation studies to evaluate the forecasting and imputation performance of our proposed LIFTS framework. Our experiments encompass a variety of data-generating processes, ranging from baseline linear models to highly nonlinear dynamics, evaluated under a partially observed scenario.

\noindent {\textbf{Case 1}. }(Linear Gaussian.)
This serves as a baseline linear setting. The data generation process is defined as:
\begin{align} \label{eq:case-1}
    \bX_t &= \bF \bZ_t + 0.3 \cdot \br_t, & \br_t &\sim \calN(\bzero,  \bSigma_p) \cr
    \bZ_t &= \bphi \bZ_{t-1} + 0.1 \cdot \bv_t, & \bv_t &\sim \calN(\bzero,  \bSigma_{d})
\end{align}
\noindent {\textbf{Case 2}.} (Generalized stochastic volatility with nonlinear dynamics).  We extend the framework of \citet{chib2006analysis} by introducing a nonlinear transition function to capture saturated state evolution. This setting challenges the model with both heavy-tailed multiplicative observations and nonlinear latent dynamics:
\begin{align} \label{eq:case-3}
    \bX_t &= \bF \left( \exp\bigl(\bZ_t / 2\bigr) \odot \ \bigl(0.3 \cdot \deta_t\bigr) \right) + \bb , & \deta_t &\sim t_{10}(\bzero, \bSigma_d) \cr
    \bZ_{t,j} &= \tanh(0.5 \cdot \bZ_{t-1,j} + 0.5)  + 0.1 \cdot \bv_{t,j}, & \bv_{t,j}  &\sim \calN(0,1), \quad   j = 1,\dots, d
\end{align}
\noindent {\textbf{Case 3}.} (Third-order latent Markov dynamics).  We extend Case 2 as follows:
\begin{align} \label{eq:case-4}
    \bX_t &= \bF_1  \bZ_t+ \bF_2 \left( \exp\bigl(\bZ_t / 2\bigr) \odot \! \bigl(0.3 \cdot \deta_{t,1}\bigr) \right) + 0.3 \cdot \deta_{t,2}  ,  \notag \\
    \bZ_{t} &= \bphi_1 \bZ_{t-1} +  \bphi_2 \bZ_{t-2} + \bphi_3 \bZ_{t-3}  +  0.1 \cdot \bv_{t}, \qquad   \bv_{t}  \sim \calN(\bzero, \bSigma_d)
\end{align}
where $\deta_{t,1} \sim t_{10}(\bzero, \bSigma_d), \ \deta_{t,2} \sim t_{10}(\bzero, \bSigma_p)$.
Across all simulation cases, the entries of the output matrices   $\bF$, $\bF_j, j= 1,2$, $\bphi$, and transition matrices $\bphi_{j}, j =1,2,3$ are generated independently from a standard normal distribution. To control the system's dynamics, we scale all elements of the transition matrices $\bphi$, and $\bphi_{j}, j =1,2,3$ uniformly by a factor of $1/\sqrt{d}$. The covariances $\bSigma_p, \bSigma_d$ are Toeplitz matrices with correlation decaying linearly from $0.5$ to $0$ based on index distance. Additionally, for Case 2, the biased term $\bb$  is sampled element-wise from a uniform $[-1,1]$ distribution.  

While retaining the underlying data-generating processes defined in Cases 1--3, we introduce missing values to assess the model's capacity for simultaneous forecasting and imputation. To simulate realistic missingness patterns, we generate the binary observation mask $\bO_{1:T}$ under two distinct missingness mechanisms: Missing Completely at Random (MCAR) and Latent-Driven Missingness (LDM).
In the MCAR setting, we assume that the availability of each entry is independent of the underlying process. The mask entries are sampled independently from a Bernoulli distribution with an observation rate $s$ (defined as the ratio of observed points to the total number of points), where we examine performance across varying observation rates $s \in \{0.2, 0.5, 0.8\}$. To rigorously test the model's ability to infer dynamics under structural missingness, we further consider the LDM scenario, where the observation probability is modulated by the latent state itself. Specifically, the mask at time $t$ is generated according to $    \bO_t \sim \operatorname{Bernoulli}\left(\alpha + (1- \alpha)\cdot \sigma\bigl(\bW \bZ_t + \bb\bigr)\right),$
where $\bW \in \bbR^{p \times d}$ and $\bb \in \bbR^{p}$ parameterize the dependence, and $\sigma(\cdot)$ denotes the element-wise sigmoid function. We calibrate the intercept $\bb$ such that the resulting empirical observation rate matches the target levels $\{0.2,0.5,0.8\}$.  The offset of $\alpha = 0.1$ ensures that the observation probability for any component is uniformly bounded away from zero.

To comprehensively assess the imputation performance, we employ standard metrics. For a given test set containing $n_{test}$ independent time series observed over $T$ time steps with $p$ features, let $\bO_{t,j}^{i} \in \{0, 1\}$ denote the binary evaluation mask, which equals $1$ if the entry is withheld for validation and $0$ otherwise. For probabilistic evaluation, we draw $B=200$ independent samples $\hbX_{t}^{i,b}$ from the learned model to approximate the predictive distribution.

First, to evaluate point estimation accuracy, we compute the normalized mean absolute error (nMAE) over all masked entries:
\bel{mae}
\text{nMAE} = \frac{\sum_{i,j,t} \bO_{t,j}^{i} \left| \widehat{\widebar{\bM}}_{t,j}^{i} - \bX_{t,j}^{i} \right|}{\sum_{i,j,t} \bO_{t,j}^{i}|\bX_{t,j}^{i}|} 
\eel
where $\widehat{\widebar{\bM}}_{t,j}^{i} \coloneq \frac{1}{B}\sum_{b=1}^{B}\hbX_{t,j}^{i,b}$ is the empirical mean of the imputed samples, and $\bX_{t,j}^{i}$ is the ground truth.
Second, to assess the calibration of the univariate marginals, we calculate the empirical coverage probability at a nominal confidence level $1-\alpha$:
\bel{cover}
\operatorname{Cover}(1-\alpha) = \frac{\sum_{i,j,t} \bO_{t,j}^{i} \bone\left\{ \bX_{t,j}^{i} \in \hbI_{t,j}^{i}(1-\alpha) \right\}}{\sum_{i,j,t} \bO_{t,j}^{i}} 
\eel
where $\hbI_{t,j}^{i}(1-\alpha)$ denotes the empirical prediction interval constructed from the $B$ generated samples.
Third, we utilize the Continuous Ranked Probability Score (CRPS, \citealp{matheson1976scoring}) to evaluate the full univariate predictive distribution. Using a quantile-based approximation over a uniform grid $\mathcal{Q}=\{0.05, 0.10, \dots, 0.95\}$, the normalized CRPS (nCRPS) is given by:
\bel{crps}
 \text{nCRPS} = \frac{1}{|\mathcal{Q}| \sum_{i,j,t} \bO_{t,j}^{i}|\bX_{t,j}^{i}|} \sum_{q \in \mathcal{Q}} \sum_{i,j,t} \bO_{t,j}^{i} \cdot 2 \left( \hat{Q}_{t,j}^{i,q} - \bX_{t,j}^{i} \right) \left( \bone\left\{\bX_{t,j}^{i} \le \hat{Q}_{t,j}^{i,q}\right\} - q \right) \eel
where $\hat{Q}_{t,j}^{i,q}$ is the empirical $q$-th quantile of the samples $\{\hbX_{t,j}^{i,1}, \dots, \hbX_{t,j}^{i,B}\}$.
Note that for the imputation task, these univariate metrics (coverage, nCRPS) are evaluated exclusively over the missing entries where $\bO_{t,j}^i = 1$.

\begin{figure}[t]
    \centering
\includegraphics[width=1\linewidth]{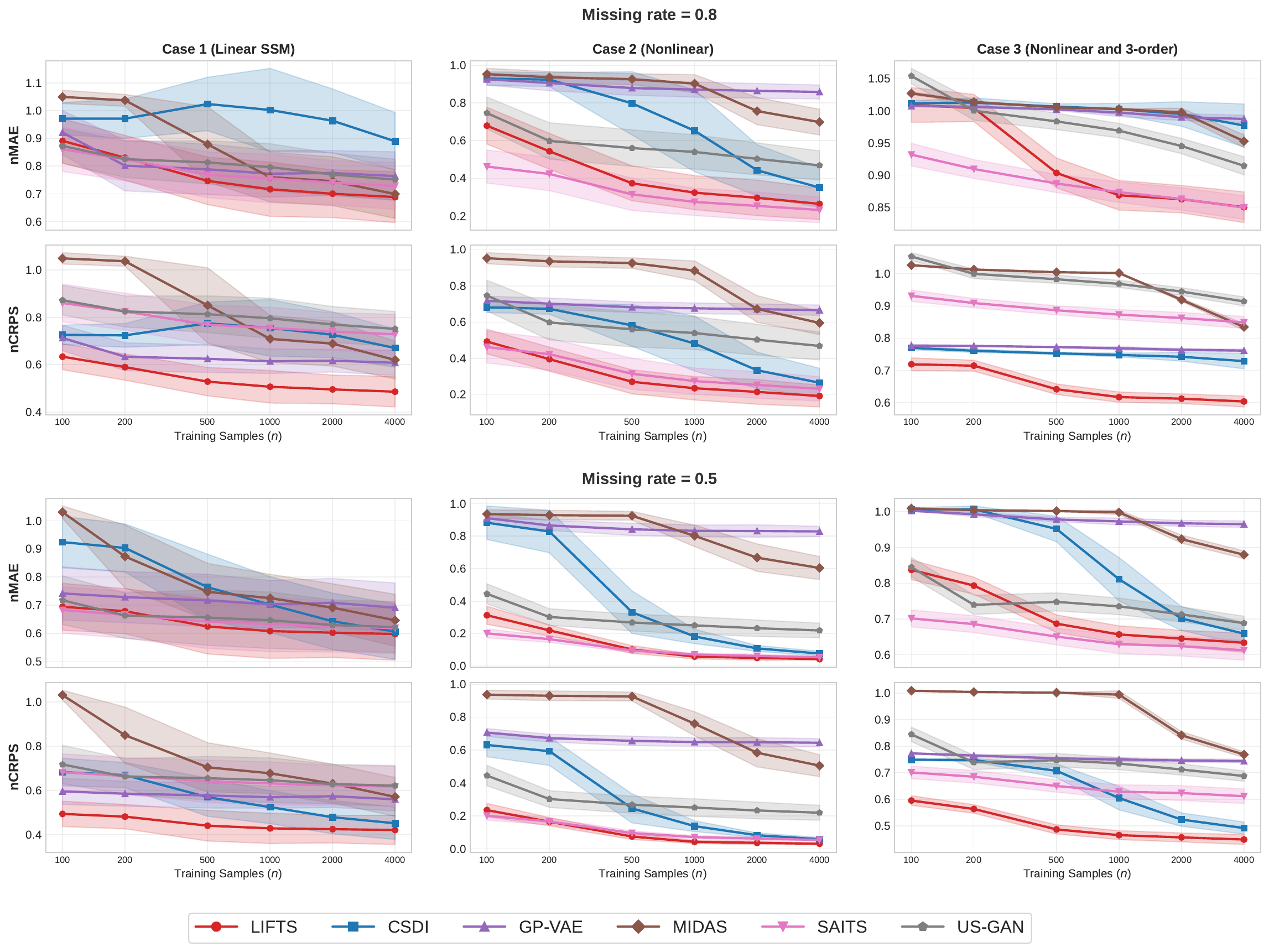}
    \caption{Performance across Cases 1--3 with increasing training sample size under LDM. The top two rows correspond to an $80\%$ missingness rate and the bottom two rows to a $50\%$ missingness rate; within each pair, the rows report nMAE and nCRPS. Lines and shaded bands denote the mean and one standard deviation over 10 random seeds.}
    \label{fig:sim-T10-p20-d5-LDM-0.2-0.5}
\end{figure}

For the quantitative comparison, we set $(d,p)=(5,20)$ and $T=10$, after
discarding a burn-in period of length $T_{\text{burn}}=100$. We consider
$n_{\text{train}}\in\{100,200,500,1000,2000,4000\}$, use a fixed test sample of
size $n_{\text{test}}=200$, and report averages over 10 independent
replications.

We compare LIFTS with two single-imputation methods, SAITS
\citep{du2023saits} and US-GAN \citep{miao2021generative}, and three methods
that produce multiple imputations, MIDAS \citep{gondara2018mida}, CSDI
\citep{tashiro2021csdi}, and GP-VAE \citep{fortuin2020gp}.

Figure~\ref{fig:sim-T10-p20-d5-LDM-0.2-0.5} compares performance
under the LDM mechanism at missingness rates of \(80\%\) and \(50\%\).
Errors are generally higher under the more severe missingness rate, and
the benefit of additional training data varies across methods and
data-generating mechanisms. The single-imputation methods remain
competitive in nMAE; in particular, SAITS attains the lowest point errors
in several nonlinear settings, but neither SAITS nor US-GAN provides a
predictive distribution. Among the methods that produce multiple
imputations, LIFTS achieves the lowest nCRPS in all three cases and at
both missingness rates, with its advantage already evident at the
smallest training sample sizes.

CSDI exhibits more case-dependent gains from increasing the training
sample size. Under \(80\%\) missingness, its errors decrease only modestly
in Cases~1 and~3, whereas larger improvements are observed in Case~2 and
under \(50\%\) missingness. This contrast can be understood from the
methods' distinct inductive structures. CSDI estimates a conditional score
over the full \(p\times T\) time--feature grid, whereas LIFTS uses a
state-space factorization to pool information sequentially through shared
transition and output maps. Under severe sparsity, this structured
factorization reduces the effective complexity of the learning problem and
thereby improves sample efficiency, particularly when the data-generating
process has a latent state-space structure, as in Cases~1 and~3. GP-VAE
likewise shows limited improvement with increasing training sample size in
most settings.

Furthermore, as shown in Figure \ref{fig:sim-coverage-case1}, LIFTS achieves empirical coverage close to the nominal levels while maintaining relatively narrow prediction intervals. Results under MCAR exhibit similar patterns; additional figures are provided in Supplementary Section \ref{supp:subsec:simulation}.
\begin{figure}
    \centering
    \includegraphics[width=1\linewidth]{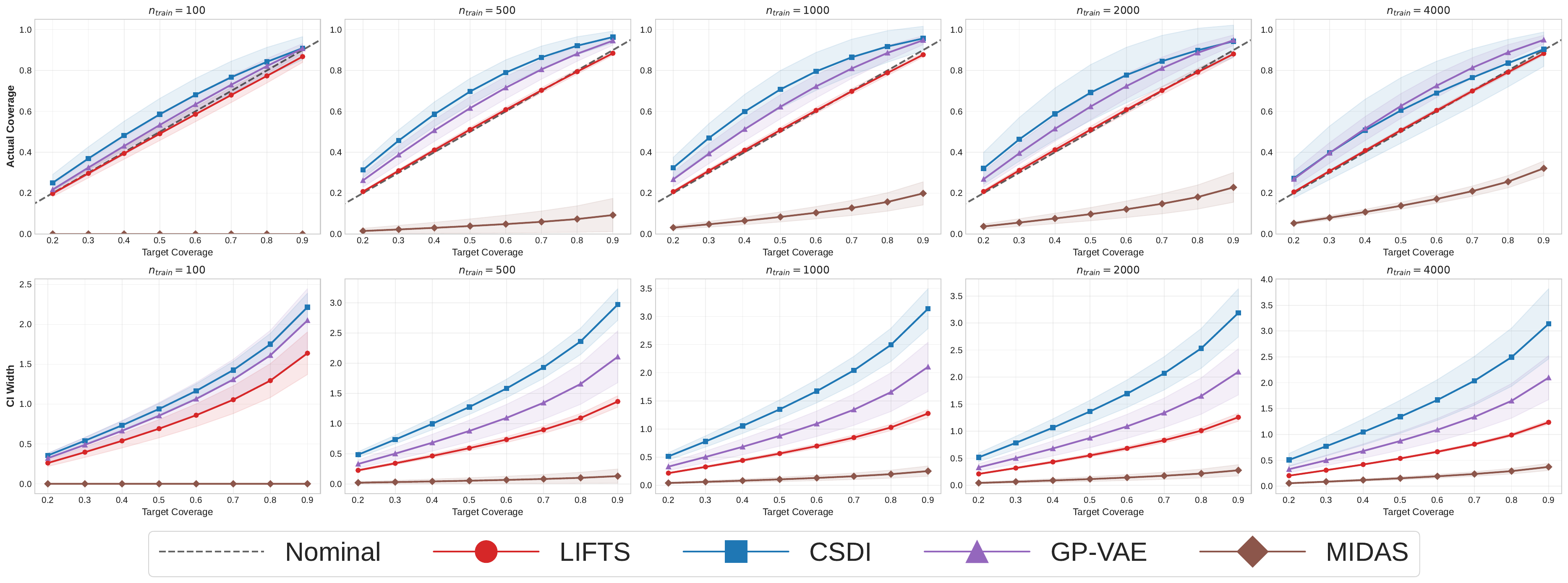}
    \caption{Empirical coverage and corresponding average prediction-interval width under Case 1 using the LDM mechanism with an $80\%$ missingness rate.}
    \label{fig:sim-coverage-case1}
\end{figure}

\section{Real data application: PhysioNet 2019}\label{sec:real-data}

\noindent
In this section, we evaluate the proposed framework using clinical data
from the PhysioNet Challenge 2019 \citep{reyna2020early}. The dataset consists
of partially observed multivariate time series from intensive care unit (ICU)
patients. Although originally designed for early sepsis prediction, it provides
a useful benchmark for missing-data imputation because
its measurements are extremely sparse.

The raw data originate from three electronic medical record systems and are
publicly available at
\href{https://physionet.org/content/challenge-2019/1.0.0/}{https://physionet.org/content/challenge-2019/1.0.0/},
where detailed variable descriptions are provided. Following the experimental
setup of \citet{du2024tsi}, we use subset A, which contains \(20,336\) patient
records. For each patient, \(40\) features were initially recorded, including
vital signs (e.g., heart rate, temperature), laboratory values (e.g., lactate,
blood urea nitrogen), clinical assessments (e.g., Glasgow Coma Scale), and
demographics (e.g., age, gender). For our analysis, we retain \(p=33\)
variables, excluding static demographic attributes and the platelet-count variable due
to near-complete missingness.

To ensure a uniform temporal structure, we select only records spanning at least $48$ hours and truncate longer sequences at $48$ hours. This filtering yields an analytic sample of $4,927$ patients with an average missing rate of $78.4\%$. Figure \ref{fig:physionet_missing_view} illustrates the missingness patterns for three randomly selected patients.
\begin{figure}[t]
    \centering
\includegraphics[width=1\linewidth]{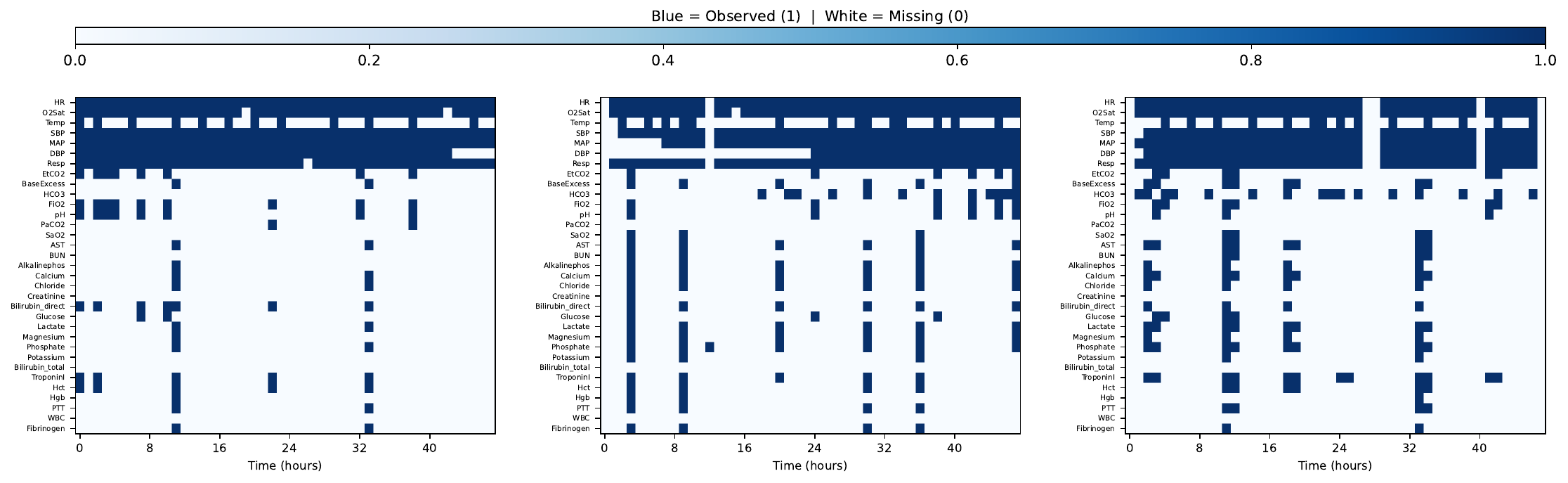}
\caption{Observation patterns for three randomly selected patients from the PhysioNet 2019 dataset.}
    \label{fig:physionet_missing_view}
\end{figure}
Since the complete trajectory \(\bX_t\in\bbR^{33}, t\in[48]\), is not observed
in real-world data, forecasting accuracy is assessed only on visible entries
(where \((\bO_t)_j=1\)). To evaluate imputation, we use a self-masking validation
scheme. Specifically, we construct a binary sub-mask \(\tbO_t\) by randomly
sampling \(80\%\) of the observed entries in the original mask \(\bO_t\). The
model infers values from the reduced observation
\((\bX_t^{\text{obs}}\odot\tbO_t,\tbO_t)\), while the entries indicated by
\(\bO_t-\tbO_t\) serve as the validation set.

To assess the robustness of the empirical comparison, we conduct $10$ independent experimental runs. In each repetition, the dataset is randomly partitioned into a training set of $4,000$ samples and a test set of $927$ samples. Moreover, to evaluate performance under varying degrees of  data scarcity, we train the methods on randomly sampled
subsets with sizes $n_{\text{train}} \in \{100, 200, 500, 1000, 2000, 4000\}$. 

We compare LIFTS with representative single-imputation methods (SAITS and
US-GAN) and multiple-imputation methods (MIDAS, GP-VAE, and CSDI). We use normalized mean
absolute error (nMAE) to evaluate point estimation accuracy and the normalized continuous
ranked probability score (nCRPS) to evaluate distributional prediction as in Section~\ref{sec:simulation}. For
uncertainty quantification, we report empirical coverage and average interval
width at nominal coverage levels
\(1-\alpha\in\{0.2,0.3,\ldots,0.9\}\).

\begin{figure}[t]
    \centering
\includegraphics[width=1\linewidth]{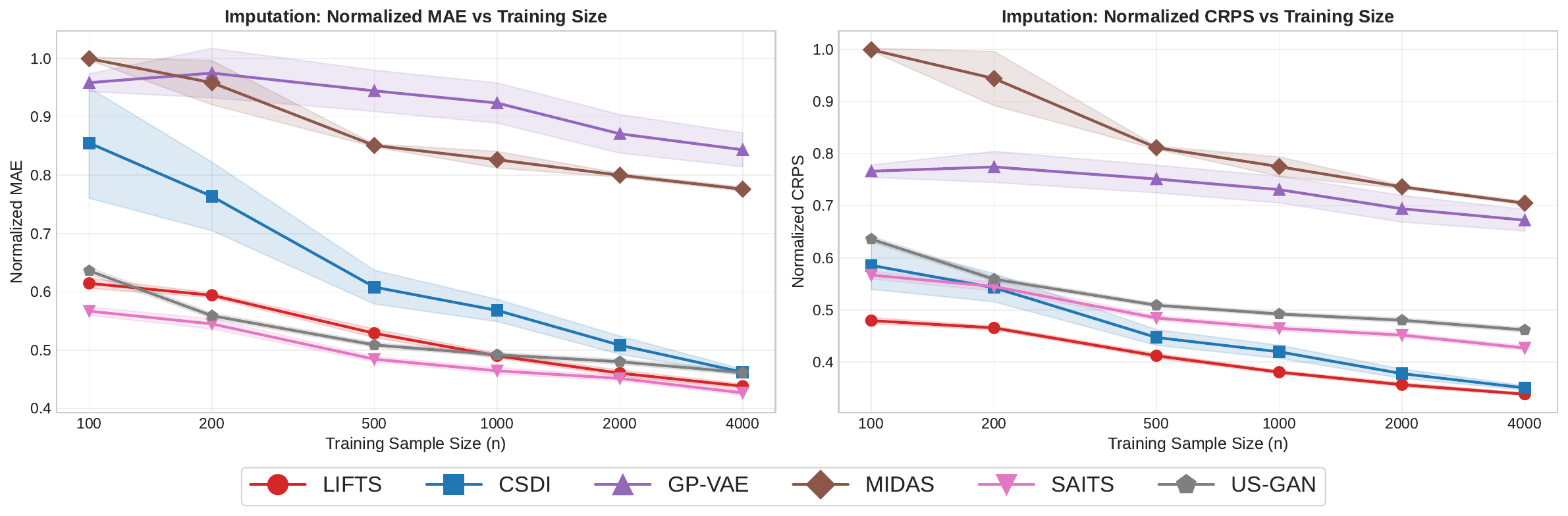}
\caption{Imputation performance across training sample sizes on the PhysioNet 2019 dataset.}
    \label{fig:real-sample_eff_all_methods}
\end{figure}
\begin{figure}[t]
    \centering
    \includegraphics[width=1\linewidth]{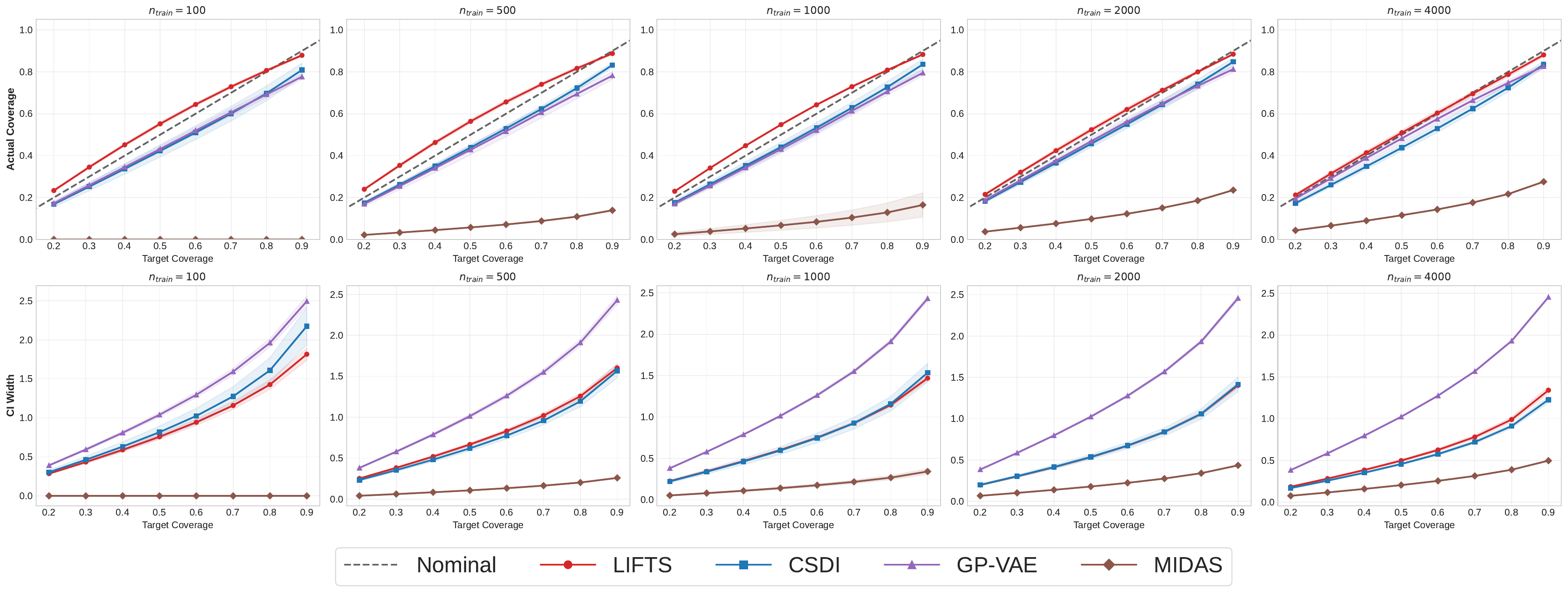}
    \caption{Empirical coverage and corresponding average interval width for four multiple-imputation methods.}
    \label{fig:real-coverage}
\end{figure}

For imputation, the mean and standard deviation over 10 independent runs are shown in Figures \ref{fig:real-sample_eff_all_methods} and \ref{fig:real-coverage}. Since SAITS \citep{du2023saits} and US-GAN \citep{miao2021generative} are single-imputation methods and do not provide uncertainty quantification, they are excluded from Figure \ref{fig:real-coverage}. For point estimation, SAITS attains the lowest nMAE, while LIFTS achieves comparable performance. For nCRPS, LIFTS shows superior sample efficiency and consistently outperforms the competing methods, particularly when the training sample size is small.

The performance of GP-VAE does not improve consistently as the sample size increases. This may reflect the difficulty of learning latent trajectories with a Gaussian-process prior under dynamic masking, which induces heterogeneous observation patterns across variables and subjects in complex physiological time series. CSDI, the second-best method, exhibits more stable improvement with increasing sample size, but remains less accurate than LIFTS across all sample sizes, especially when the training sample size is small. MIDAS performs poorly in this real-data setting, possibly because its denoising-autoencoder architecture is less effective at capturing complex temporal dependence and informative missingness patterns in clinical data.

As shown in Figure \ref{fig:real-coverage}, the empirical coverage of LIFTS is close to the nominal levels $1-\alpha$ across the evaluated coverage targets, while its average interval width remains comparable to that of CSDI. By contrast, GP-VAE produces substantially wider intervals, suggesting over-conservative uncertainty estimates. MIDAS exhibits under-dispersion, leading to low empirical coverage and narrow intervals. 

For forecasting, we conduct a rolling one-step-ahead evaluation over \(t=37,\ldots,48\). Given the observed history through time \(t-1\), LIFTS generates 200 paired draws \(\{(\widehat{\bX}_{t}^{(b)},\widehat{\bO}_{t}^{(b)})\}_{b=1}^{200}\) from the joint predictive distribution. For each observed test coordinate \(j\), we retain the measurement draws for which \((\widehat{\bO}_{t}^{(b)})_j=1\). We then compute nMAE using the mean of the retained draws and nCRPS using their empirical distribution.

The competing methods are designed primarily for imputation and do not
generate future observation masks. We adapt them to forecasting by setting all
entries from time \(t\) onward to missing, applying the fitted imputation model,
and using its output at time \(t\) as the one-step-ahead forecast. This
procedure is repeated for each forecast origin \(t-1=36,\ldots,47\). By contrast, LIFTS directly produces the one-step predictive
distribution at each forecast origin and jointly forecasts measurements
and masks, without repeatedly imputing artificially masked suffixes. As shown in Figure~\ref{fig:real-forecasting},
LIFTS achieves the lowest forecasting errors across all training sample sizes,
with the largest gains when training data are limited.
\begin{figure}[t]
    \centering
    \includegraphics[width=1\linewidth]{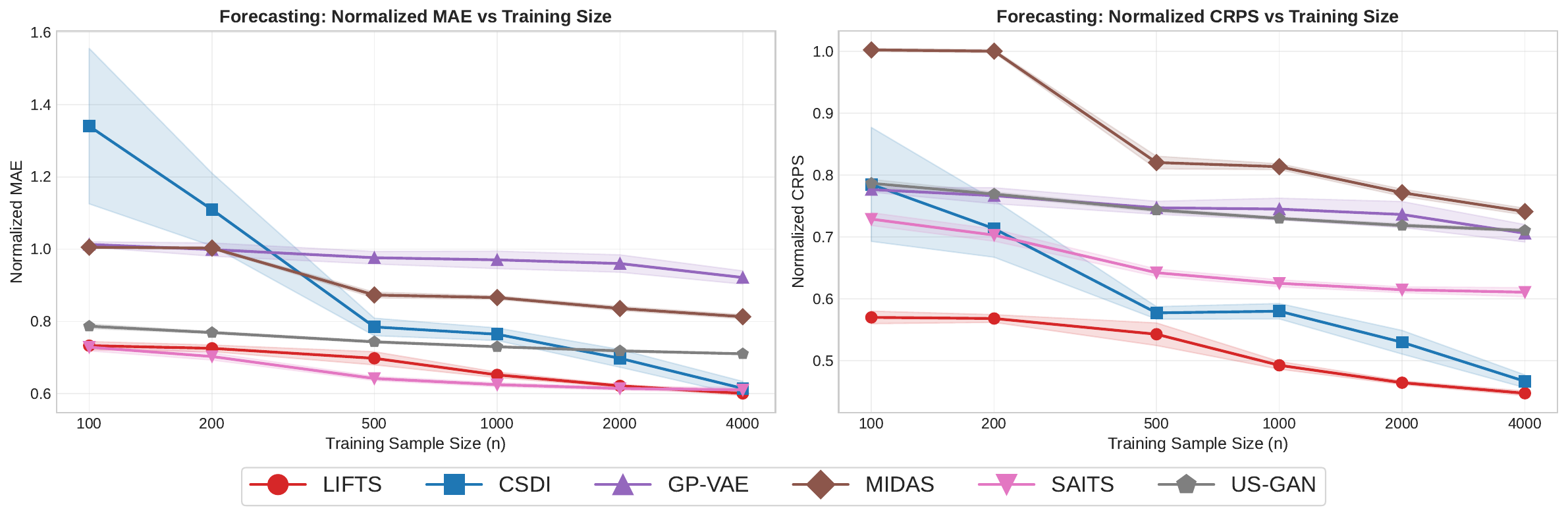}
    \caption{Mean and standard deviation of rolling one-step-ahead forecasting performance on the test set over $t=37,\ldots,48$, across different training sample sizes.} 
    \label{fig:real-forecasting}
\end{figure}

Computational results are reported in Supplementary Figure
\ref{fig:real-time}. Under a common hardware configuration, LIFTS requires
training time comparable to that of CSDI but is approximately 100 times faster
at inference. Together, these results provide evidence that LIFTS is accurate,
sample-efficient, and computationally efficient for partially observed ICU
time series.

\section{Discussion}\label{sec:conclusion}

\noindent
We have developed LIFTS, a generative framework for forecasting and
probabilistic imputation in partially observed multivariate time series
with informative missingness. LIFTS jointly models measurement trajectories and observation masks through a latent dynamic process, with forward-backward inference for filtering, forecasting, smoothing, and imputation. We establish full-law identification and consistency for an explicit structural subclass under the stated assumptions. Simulations and the real data analysis on PhysioNet 2019 demonstrate strong
distributional performance, particularly when training data are limited.

LIFTS can be readily extended to mixed-type longitudinal measurements. The masked energy score applies directly to counts and, after suitable encoding, to categorical variables; the measurement output function, parameterized by neural networks, would need to be adapted to generate discrete-valued draws. Beyond forecasting and imputation, joint draws of imputed trajectories could support downstream analysis, including subgroup comparisons and longitudinal causal inference.  Such analyses would need to account for imputation uncertainty, and causal conclusions would require additional assumptions beyond those established here.

{A central modelling assumption in LIFTS is that measurements and masks
are conditionally independent given the latent state. A natural extension is to allow both the measurement and missingness models to depend on observed static or time-varying covariates that may not be fully captured by the latent state. In clinical applications, these covariates may capture contextual factors such as insurance coverage, prior healthcare utilization, and care setting. The extended model would assume
conditional independence given both the latent state and these covariates. Future work could assess how violations of this assumption affect forecasting and imputation.
}

\section*{Acknowledgements}
\noindent
The research is supported by a National Science Foundation grant DMS-2544147, a National Institutes of Health grant 5R01AG087496, and the Columbia Data Science Institute Seed Fund (SF-120).
We thank Professor Long Feng, Sanyou Wu's PhD advisor at the University of Hong Kong, for his support of Sanyou's doctoral training and for helpful discussions and suggestions. Sanyou conducted part of this research during his visit to Columbia University, supported by Professor Kaizheng Wang's lab, while completing his PhD studies at the University of Hong Kong.

\begingroup
% \setstretch{1.15}
\bibliography{references}
\endgroup
\clearpage
% !TEX root = LIFTS_arxiv.tex
% Original supplementary body is preserved verbatim below.
% Only standalone document setup and bibliography are handled by the combined main file.
% Preserve the supplementary notation and sectionwise theorem numbering.
\let\hat\widehat
\let\tilde\widetilde
\def\bm{\mathbold{m}}
\def\bbZ{{\overline{\bZ}}}
\def\bbz{{\overline{\bz}}}
\numberwithin{theorem}{section}
\setcounter{theorem}{0}
% Distinct hyperlink destinations when counters restart in the supplement.
\renewcommand{\theHsection}{supp.\arabic{section}}
\renewcommand{\theHequation}{supp.\arabic{equation}}
\renewcommand{\theHfigure}{supp.\arabic{figure}}
\renewcommand{\theHtable}{supp.\arabic{table}}
\renewcommand{\theHtheorem}{supp.\arabic{section}.\arabic{theorem}}

\newcommand{\supplementtitle}{Supplementary Material to ``Latent-Driven Imputation and Forecasting for Partially Observed Time Series''}

\begin{center}
\begin{minipage}{0.98\textwidth}
\centering
{\Large\bfseries \supplementtitle\par}
\end{minipage}
\end{center}
\vspace{0.5em}

%%%%%%%%%%%%%%%%%%%%%% reference
%%%%%%%%%%%%%%%%%%%%%% reference
%%%%%%%%%%%%%%%%%%%%%% reference
%%%%%%%%%%%%%%%%%%%%%% reference
%%%%%%%%%%%%%%%%%%%%%% reference
%%%%%%%%%%%%%%%%%%%%%% reference

%%%%%%%%%%%%%%% appendix  %%%%%%%%%%%%
%%%%%%%%%%%%%%% appendix  %%%%%%%%%%%%
%%%%%%%%%%%%%%% appendix  %%%%%%%%%%%%
%%%%%%%%%%%%%%% appendix  %%%%%%%%%%%%
%%%%%%%%%%%%%%% appendix  %%%%%%%%%%%%
%%%%%%%%%%%%%%% appendix  %%%%%%%%%%%%
%%%%%%%%%%%%%%% appendix  %%%%%%%%%%%%
%%%%%%%%%%%%%%% appendix  %%%%%%%%%%%%
%%%%%%%%%%%%%%% appendix  %%%%%%%%%%%%
%%%%%%%%%%%%%%% appendix  %%%%%%%%%%%%
%%%%%%%%%%%%%%% appendix  %%%%%%%%%%%%
%%%%%%%%%%%%%%% appendix  %%%%%%%%%%%%
%%%%%%%%%%%%%%% appendix  %%%%%%%%%%%%
% \section{Appendix}
% \section{Supplementary material}
% \noindent 
The supplementary material is organized as follows.
Section~\ref{supp:sec:key_properties} collects the conditional independence
properties and the extension to \(k\)-order Markovian dynamics.  Section~\ref{appen:proof} contains the proofs
of the main-paper results.  The proofs for
conditional independence properties are collected in
Sections~\ref{supp:sec:proof_S1}. Section~\ref{supp:sec:proof_S2} presents the supporting
results for full-law consistency.
Finally,
Section~\ref{sec:appendix:lemmas} collects auxiliary lemmas, and
Section~\ref{supp:sec:experiment} gives additional experimental details.
\counterwithout{equation}{section}
\setcounter{equation}{0}
\def\theequation{S\arabic{equation}}
\setcounter{figure}{0}
\def\thefigure{S\arabic{figure}}
\setcounter{table}{0}
\def\thetable{S\arabic{table}}
\setcounter{section}{0}
\def\thesection{S\arabic{section}}
%%%%%%%%%%%%%%%%%%--------
\section{Key properties and model extensions}\label{supp:sec:key_properties}
\subsection{Conditional independence properties}
Before detailing the inference steps, we establish key conditional independence properties inherent to model \eqref{model:1}. These properties serve as the foundation for our forward--backward framework, enabling the decomposition of the relevant conditional laws into tractable recursive updates.
\begin{lemma}[Conditional Independence Properties]\label{lemma:fact-1}
Suppose model~\eqref{model:1} holds. Then the following properties hold:
\begin{list}{(\roman{enumi})}{%
\usecounter{enumi}%
\setlength{\leftmargin}{2.4em}%
\setlength{\labelwidth}{1.8em}%
\setlength{\labelsep}{0.6em}%
\setlength{\itemsep}{0.25em}%
\setlength{\parsep}{0pt}%
}
    \item \textit{(Forward Markov Property)} $
p\left(\bZ_{t+1} \mid \bZ_{1:t}, \wbX_{1:t}^{\text{obs}} \right) = p\left(\bZ_{t+1} \mid \bZ_{t} \right)
$
    \item \textit{(Markovian for Forecasting)} $
 p(\wbX_{t+1:T} \mid \bZ_{1:t+1}, \wbX_{1:t}^{\text{obs}}) = p(\wbX_{t+1:T} \mid \bZ_{t+1})$
    \item \textit{(Backward Markov Property)} $ p(\bZ_t \mid \bZ_{t+1:T}, \wbX_{1:T}^{\text{obs}})  = p(\bZ_t \mid \bZ_{t+1}, \wbX_{1:t}^{\text{obs}}) $
    \item \textit{(Memoryless for Imputation)} $
p(\bX_t \mid  \bZ_{1:T}, \wbX_{1:T}^{\text{obs}}) =  p(\bX_t \mid \bZ_t,\wbX_t^{\text{obs}})
    $
\end{list}
\end{lemma}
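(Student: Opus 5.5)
The plan is to read all four properties off a single factorization of the joint law implied by model~\eqref{model:1}. Since $\bZ_1$, $\{\bv_t\}$, $\{\br_t^x\}$ and $\{\br_t^o\}$ are mutually independent, the joint law of $(\bZ_{1:T},\bX_{1:T},\bO_{1:T})$ factorizes as
\[
p(\bz_{1:T},\bx_{1:T},\bo_{1:T}) = p(\bz_1)\prod_{t=1}^{T-1}p(\bz_{t+1}\mid\bz_t)\prod_{t=1}^{T}p(\bx_t\mid\bz_t)\,p(\bo_t\mid\bz_t),
\]
in the sense of Markov kernels; densities are used only for exposition, and the same argument runs with kernels. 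The observed pair $\wbX_t^{\text{obs}}=(\bX_t^{\text{obs}},\bO_t)$ is a measurable function of $(\bX_t,\bO_t)$ alone: coordinatewise, $\bX_t^{\text{obs}}$ keeps $(\bX_t)_j$ where $(\bO_t)_j=1$ and puts $\na$ elsewhere. Hence $(\bZ_{1:T},\wbX_{1:T}^{\text{obs}})$ is a hidden Markov model whose emission at time $t$ depends only on $\bZ_t$ and the fresh noises $(\br_t^x,\br_t^o)$. Each property is then a d-separation statement in the DAG of Figure~\ref{fig:dag}. To make it rigorous, I would write each conditioning variable as a function of independent noise blocks and check which blocks are shared.

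For (i), conditional on $\bZ_t$, the variable $\bZ_{t+1}=\trans(\bZ_t,\bv_t)$ depends only on $\bv_t$. The pair $(\bZ_{1:t-1},\wbX_{1:t}^{\text{obs}})$ is a function of $\bZ_1,\bv_{1:t-1},\br_{1:t}^{x},\br_{1:t}^{o}$, all of which are independent of $\bv_t$. This gives (i).

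For (ii), conditional on $\bZ_{t+1}$, the future block $\wbX_{t+1:T}$ is a function of $\bZ_{t+1}$ together with $\bv_{t+1:T-1}$ and $\br^{x,o}_{t+1:T}$. These noises are independent of $(\bZ_{1:t},\wbX_{1:t}^{\text{obs}})$, which is a function of $(\bZ_1,\bv_{1:t-1},\br^{x,o}_{1:t})$. This gives (ii).

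For (iii), I split $\wbX_{1:T}^{\text{obs}}$ into the past part $\wbX_{1:t}^{\text{obs}}$ and the future part $\wbX_{t+1:T}^{\text{obs}}$. Bayes' rule with the factorization gives
\[
p(\bz_t\mid \bz_{t+1:T},\wbx_{1:T}^{\text{obs}})\propto p(\bz_t,\wbx_{1:t}^{\text{obs}})\,p(\bz_{t+1}\mid\bz_t)\cdot R,
\]
where $R$ collects the remaining factors: the transitions $p(\bz_{s+1}\mid\bz_s)$ for $s\ge t+1$ and the emissions at times $s\ge t+1$. None of these involve $\bz_t$, so $R$ cancels on normalization in $\bz_t$. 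What remains is proportional to $p(\bz_t\mid\bz_{t+1},\wbx_{1:t}^{\text{obs}})$. Equivalently, $\bZ_{t+1}$ separates $\bZ_t$ from the future block $(\bZ_{t+2:T},\wbX_{t+1:T}^{\text{obs}})$ given the past observations.

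For (iv), conditional on $\bZ_{1:T}$, the pairs $(\bX_s,\bO_s)$ are independent across $s$, with law depending only on $\bZ_s$. So $p(\bx_t\mid\bz_{1:T},\wbx_{1:T}^{\text{obs}})=p(\bx_t\mid\bz_t,\wbx_t^{\text{obs}})$: the other time indices and the other latent states drop out.

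The main obstacle is technical rather than conceptual. The variable $\wbX_t^{\text{obs}}$ takes values in the mixed space $\widebar{\calX}^{\text{obs}}$, part discrete and part continuous, and $(\bX_t,\wbX_t^{\text{obs}})$ are deterministically linked on observed coordinates, so joint densities need not exist. I would handle this by stating every claim as a conditional-independence statement, e.g.\ $\bZ_{t+1}\independent(\bZ_{1:t-1},\wbX_{1:t}^{\text{obs}})\mid\bZ_t$. Each such statement would be proved by the noise-block independence argument above, using the standard fact that if $W=\phi(V,N)$ with $N\independent(V,U)$, then $W\independent U\mid V$, together with the weak-union and contraction rules for conditional independence. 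The density identities in the statement then hold for regular conditional distributions, almost surely.
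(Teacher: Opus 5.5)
Your proposal is correct and takes essentially the paper's route: the paper invokes the hidden-Markov factorization of Lemma~\ref{lemma:factorization} and reads (i)--(iv) off as d-separations in the state-space DAG. Your noise-block verification (writing $W=\phi(V,N)$ with $N$ independent of $(V,U)$, then applying weak union) and your remark about using regular conditional distributions on the mixed observed space only make explicit what the paper leaves to the d-separation theorem.
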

These properties are instrumental to our inference framework: $(i)$ and $(ii)$ show that, conditional on the current latent state, the future conditional law no longer depends on the earlier latent and observed history. $(iii)$ forms the cornerstone of the backward pass, ensuring that smoothing updates only depend on the immediate future state. Finally, $(iv)$ characterizes the observation mechanism as a {memoryless channel}: it establishes that $\bX_t$ depends solely on the current state $\bZ_t$  and partial observation $\wbX_t^{\text{obs}}$, thereby decoupling the global imputation into time-local conditional operations.  The proof of Lemma \ref{lemma:fact-1} is provided in Appendix \ref{pf:lemma:fact-1}.

\subsection{Extension to $k$-order markovian dynamics}\label{sec:k-order}
The first-order state-space model introduced in Section~\ref{sec:set-up} provides a convenient formulation for latent temporal dynamics under informative missingness. In some applications, however, a first-order transition may be too restrictive, since the latent state at time $t$ may depend on several previous states rather than only on $\bZ_{t-1}$. We therefore extend model~\eqref{model:1} to a $k$-order Markovian formulation. Formally, the model is given by
\bel{model:k-order}
\bZ_t &=& \trans(\bZ_{(t-k)_{+}:t-1}, \bnu_t) \quad \bigl(\text{transition model}\bigr); \cr
\bX_t &=& \out_x(\bZ_t,\br_t^x) \quad \bigl(\text{measurement model}\bigr); \cr
\bO_t &=& \out_o(\bZ_t,\br_t^o) \quad \bigl(\text{missingness model}\bigr).
\eel
where $(t-k)_{+} = \max(1, t-k)$. The forward--backward framework remains applicable, although the conditional independence structure and the associated recursive representations must be modified accordingly. Thus the extension modifies only the latent transition, while the measurement and missingness mechanisms remain conditionally memoryless given the current latent state.
\begin{lemma}[Conditional Independence Properties for $k$-order Markovian dynamics]\label{lemma:fact-2}
Suppose model~\eqref{model:k-order} holds. Then the following properties hold:
\begin{list}{(\roman{enumi})}{%
\usecounter{enumi}%
\setlength{\leftmargin}{2.4em}%
\setlength{\labelwidth}{1.8em}%
\setlength{\labelsep}{0.6em}%
\setlength{\itemsep}{0.25em}%
\setlength{\parsep}{0pt}%
}
    \item \textit{(Forward Markov Property)} $
p\left(\bZ_{t+1} \mid \bZ_{1:t}, \wbX_{1:t}^{\text{obs}} \right) = p\left(\bZ_{t+1} \mid \bZ_{(t-k+1)_+:t} \right)
$
    \item \textit{(Markovian for Forecasting)} $
 p(\wbX_{t+1:T} \mid \bZ_{1:t+1}, \wbX_{1:t}^{\text{obs}}) = p(\wbX_{t+1:T} \mid \bZ_{(t+2-k)_+:t+1})$
    \item \textit{(Backward Markov Property)} $ p(\bZ_t \mid \bZ_{t+1:T}, \wbX_{1:T}^{\text{obs}})  = p(\bZ_t \mid \bZ_{t+1:(t+k)\wedge T}, \wbX_{1:t}^{\text{obs}}) $
    \item \textit{(Memoryless for Imputation)} $
p(\bX_t \mid  \bZ_{1:T}, \wbX_{1:T}^{\text{obs}}) =  p(\bX_t \mid \bZ_t,\wbX_t^{\text{obs}})
    $
\end{list}
\end{lemma}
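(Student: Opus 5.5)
The plan is to realize model~\eqref{model:k-order} through its noise-outsourced representation and to read off each property as a conditional independence statement in the induced directed acyclic graph. I will write \(\bZ_1=\bZ_1\), \(\bZ_t=\trans(\bZ_{(t-k)_+:t-1},\bnu_t)\), and \(\wbX_t^{\text{obs}}=\Psi(\bZ_t,\br_t^x,\br_t^o)\). Here \(\Psi\) is the measurable map that produces \(\bX_t=\out_x(\bZ_t,\br_t^x)\) and \(\bO_t=\out_o(\bZ_t,\br_t^o)\) and then masks \(\bX_t\) according to~\eqref{def:obs}. Let \(\calN_s=(\br_s^x,\br_s^o)\). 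The exogenous families \(\bZ_1\), \(\{\bnu_t\}\) and \(\{\calN_t\}\) are mutually independent.

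The basic tool is the elementary fact that \(U\independent W\mid V\) whenever \(U=\phi(V,\xi)\) with \(\xi\independent(V,W)\). Every property will be reduced to this fact, possibly combined with the standard weak-union and contraction rules.

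\textbf{Properties \((i)\), \((ii)\) and \((iv)\).}
\begin{itemize}
\item For \((i)\): \(\bZ_{t+1}=\trans(\bZ_{(t-k+1)_+:t},\bnu_{t+1})\). The vector \((\bZ_{1:t},\wbX_{1:t}^{\text{obs}})\) is a function of \(\bZ_1,\bnu_{2:t},\calN_{1:t}\), all of which are independent of \(\bnu_{t+1}\). The basic fact then gives the claim.
\item For \((ii)\): \(\wbX_{t+1:T}^{\text{obs}}\) is a measurable function of \(\bZ_{(t+2-k)_+:t+1}\) and the future noises \(\bnu_{t+2:T}\), \(\calN_{t+1:T}\). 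This is shown by unrolling the recursion forward from the last \(k\) states. These future noises are independent of \((\bZ_{1:t+1},\wbX_{1:t}^{\text{obs}})\), so the same fact applies. The paper writes \(\wbX_{t+1:T}\); the complete values \((\bX_{t+1:T},\bO_{t+1:T})\) are handled identically.
\item For \((iv)\): given \(\bZ_{1:T}\), the variable \(\bX_t\) is a function of \(\bZ_t\) and \(\br_t^x\). I first note that \((\bX_t,\bO_t)\independent(\bZ_{-t},\wbX_{-t}^{\text{obs}})\mid \bZ_t\). This holds because the other quantities depend only on noises other than \(\calN_t\), together with \(\bZ_t\). It is not literally implied by the DAG for \(\bZ_{-t}\), since future states depend on \(\bZ_t\) only through \(\bZ_t\) and the \(\bnu\)'s, but the functional representation makes it immediate. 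Since \(\wbX_t^{\text{obs}}\) is a function of \((\bX_t,\bO_t)\), conditioning additionally on \(\wbX_t^{\text{obs}}\) preserves the independence by weak union, which gives \((iv)\).
\end{itemize}

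\textbf{Property \((iii)\), the expected main obstacle.} This is the step I expect to be the main obstacle, because the conditioning involves future observations. I will split \(\wbX_{1:T}^{\text{obs}}=(\wbX_{1:t}^{\text{obs}},\wbX_{t+1:T}^{\text{obs}})\), set \(A=(\bZ_{t+1:(t+k)\wedge T},\wbX_{1:t}^{\text{obs}})\), and argue in two stages.
\begin{itemize}
\item \textbf{Stage one.} I show \(\bZ_t\independent(\bZ_{(t+k)\wedge T+1:T},\wbX_{t+1:T}^{\text{obs}})\mid A\). By the forward unrolling, the future block is a function of \(\bZ_{t+1:(t+k)\wedge T}\) and the noises \(\bnu_{t+k+1:T}\), \(\calN_{t+1:T}\). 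Each later state depends on at most the \(k\) preceding states, and all of those lie in the conditioned window or later. These noises are independent of \((\bZ_t,A)\), since \(\bZ_{t+1:t+k}\) only involve \(\bnu_{t+1:t+k}\) and earlier variables. The basic fact gives the future block \(\independent\bZ_t\mid A\).
\item \textbf{Stage two.} Conditional independence is symmetric, so \(p(\bZ_t\mid A,\text{future block})=p(\bZ_t\mid A)\), which is exactly \((iii)\).
\end{itemize}
The delicate point is checking that no state beyond index \(t+k\) depends directly on \(\bZ_t\). This follows since \(\bZ_s\) uses \(\bZ_{s-k:s-1}\), and \(s-k>t\) for \(s>t+k\). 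Truncation at \(T\) and the \((\cdot)_+\) conventions will be handled by the same bookkeeping. All statements are interpreted via regular conditional distributions when densities are absent.
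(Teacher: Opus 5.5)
Your proof is correct. It establishes the same conditional independences as the paper's proof; your Stage one for $(iii)$ is exactly the paper's third separation, with window end $(t+k)\wedge T$. What differs is how each independence is justified.

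\begin{itemize}
\item \textbf{The paper's route.} It invokes the factorization of Lemma~\ref{lemma:factorization-k} and reads off the four statements as d-separations in the $k$-order DAG. The separations are asserted rather than checked, and the factorization is written with densities.
\item \textbf{Your route.} You work directly with the structural equations of model~\eqref{model:k-order}. You unroll the recursion so that each relevant future quantity becomes a function of the conditioning window plus fresh, independent noise. You then apply the elementary fact that $U=\phi(V,\xi)$ with $\xi\independent(V,W)$ gives $U\independent W\mid V$.
\end{itemize}

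Your argument is more self-contained and verifies each separation explicitly. It also runs unchanged with regular conditional distributions when densities do not exist. The paper's version is shorter, provided the global Markov property of DAGs is taken for granted.

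One side remark in your treatment of $(iv)$ is inaccurate, though harmless. The independence $(\bX_t,\bO_t)\independent(\bZ_{-t},\wbX_{-t}^{\text{obs}})\mid\bZ_t$ \emph{is} implied by the DAG. It is simply the local Markov property: $\bZ_t$ is the only parent of $(\bX_t,\bO_t)$, and every node in $(\bZ_{-t},\wbX_{-t}^{\text{obs}})$ is a non-descendant of it. The paper's fourth separation records the version that also conditions on $\wbX_t^{\text{obs}}$, which is the version you then obtain via symmetry and weak union.
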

Recall that $(t+1-k)_{+} \coloneq \max(t+1-k,1)$ and $(t+k)\wedge T \coloneq \min(t+k,T)$.
By Lemma~\ref{lemma:fact-2}~$(iii)$, for any \(\ell<t\),
\[
p(\bZ_\ell\mid\bZ_{\ell+1:t},\wbX_{1:t}^{\mathrm{obs}})
=
p(\bZ_\ell\mid\bZ_{\ell+1:(\ell+k)\wedge t},
\wbX_{1:\ell}^{\mathrm{obs}}).
\]
Thus, starting from
\(\bZ_t^s=\enc(\wbX_{1:t}^{\mathrm{obs}},\cdeps_t)\), the backward recursion
uses at most the next \(k\) latent states. The following theorem formalizes the
resulting smoothing and forecasting representations.
%%%%% k-order backward theorem
\begin{theorem}[Smoothing and forecasting under $k$-order dynamics]\label{thm:k-backward}\label{thm:k-forecasting}
Suppose model~\eqref{model:k-order} holds. For \(1<t\leq T\), set
\(\bZ_t^s\coloneq\enc(\wbX_{1:t}^{\mathrm{obs}},\cdeps_t)\). For
\(\ell\in[t-1]\), let \(b_\ell\coloneq(\ell+k)\wedge t\). There exists a
measurable function \(\backt\) such that
    \bel{eq:obj-backward-k-order}
    \bZ_\ell^{s} \coloneq \backt \left( \bZ_{\ell+1:b_\ell}^{s}, \wbX_{1:\ell}^{\text{obs}}, \tdxi_\ell\right), \quad \tdxi_\ell \sim \calN(\bzero, \bI_{d}), \ \ell \in [t-1].
    \eel
Consequently, the backward samples recover the joint smoothing law,
\bel{eq:obj-backward-k-order-joint}
\left(\bZ_1^{s}, \ldots, \bZ_t^{s} \right)  \deq  \left(\bZ_1, \ldots, \bZ_t \right) \mid \wbX_{1:t}^{\text{obs}}
\eel
Moreover, there exist measurable transition and output maps \(\btrans\) and
\(\bout\) such that
\bel{obj:forecast-K}
\left(\bX_{t+1},\bO_{t+1}\right) \mid \wbX_{1:t}^{\text{obs}}
\deq
\bout \left\{ \btrans \left( \bZ_{(t+1-k)_{+}:t}^{s},\deps_t \right),\deta_t\right\}.
\eel
All reference noises are mutually independent Gaussian vectors of the required
dimensions and are independent of \(\wbX_{1:t}^{\text{obs}}\).
\end{theorem}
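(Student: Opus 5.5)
The plan is to argue exactly as in the first-order case (Proposition~\ref{prop:existence_enc}, Corollary~\ref{thm:forecasting}, Theorem~\ref{thm:imputation}), replacing the first-order conditional independences with those of Lemma~\ref{lemma:fact-2}. The basic tool is noise outsourcing (the randomization lemma). Let $A$ be a standard Borel space and $B=\bbR^d$. For any measurable kernel $\kappa(\cdot\mid a)$ from $A$ to $B$, there is a measurable $G:A\times\bbR^d\to B$ with $G(a,\bxi)\sim\kappa(\cdot\mid a)$ for every $a$ when $\bxi\sim\calN(\bzero,\bI_d)$. This holds because a standard Gaussian coordinate can be mapped to a uniform variable by its c.d.f., and a Borel isomorphism plus conditional quantile transforms then realize any regular conditional law. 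Regular conditional distributions exist here since every space involved ($\widebar{\calX}^{\text{obs}}_{1:\ell}$ is a finite union of Euclidean pieces) is standard Borel. The encoder $\enc$ is then taken as in Proposition~\ref{prop:existence_enc}, whose proof does not use the order of the transition.

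\emph{Smoothing.} For each $\ell$ and each window length $m=b_\ell-\ell\le k$, take a regular version of $\law(\bZ_\ell\mid \bZ_{\ell+1:b_\ell},\wbX_{1:\ell}^{\text{obs}})$ and outsource its noise to get $\backt$. Inputs of different lengths are handled by letting $\backt$ act on the disjoint union of the domains; the index $\ell$ is read off from the length of the history. Joint equality in law is proved by backward induction on $\ell$. The claim at stage $\ell$ is
\[
(\bZ^s_{\ell},\ldots,\bZ^s_t)\deq(\bZ_\ell,\ldots,\bZ_t)\mid\wbX_{1:t}^{\text{obs}}.
\]
The base case $\ell=t$ is the encoder property. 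For the inductive step, the chain rule gives
\[
p(\bZ_{\ell:t}\mid\wbX^{\text{obs}}_{1:t})=p(\bZ_\ell\mid\bZ_{\ell+1:t},\wbX^{\text{obs}}_{1:t})\,p(\bZ_{\ell+1:t}\mid\wbX^{\text{obs}}_{1:t}).
\]
By Lemma~\ref{lemma:fact-2}(iii), applied with horizon $t$ in place of $T$ as in the display preceding the theorem, the first factor equals the kernel realized by $\backt$. Because $\tdxi_\ell$ is independent of everything previously generated, the pair $(\bZ^s_\ell,\bZ^s_{\ell+1:t})$ has exactly this product law. Conditional statements hold for almost every history, and the induction preserves that qualifier.

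\emph{Forecasting.} By Lemma~\ref{lemma:fact-2}(i), $\bZ_{t+1}\mid(\bZ_{1:t},\wbX^{\text{obs}}_{1:t})$ has the law of $\trans(\bZ_{(t+1-k)_+:t},\bnu_{t+1})$. Writing $\bnu=T_\nu(\deps)$ via outsourcing defines $\btrans(\bz_{(t+1-k)_+:t},\deps)=\trans(\bz_{(t+1-k)_+:t},T_\nu(\deps))$. Next, $(\bX_{t+1},\bO_{t+1})$ given $\bZ_{1:t+1}$ and the observed past depends only on $\bZ_{t+1}$, and has the law of $(\out_x(\bZ_{t+1},\br^x),\out_o(\bZ_{t+1},\br^o))$. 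Outsourcing $(\br^x,\br^o)$ from $\deta\sim\calN(\bzero,\bI_{2p})$ defines $\bout$, as in Remark~\ref{remark:gaussian_noise}. Integrating the chain
\[
\bZ_{(t+1-k)_+:t}\mid\wbX^{\text{obs}}_{1:t}\;\to\;\bZ_{t+1}\;\to\;(\bX_{t+1},\bO_{t+1})
\]
against the joint smoothing law from the previous step gives \eqref{obj:forecast-K}. This uses that the window $\bZ^s_{(t+1-k)_+:t}$ has the correct joint (not merely marginal) conditional law, and that $\deps_t,\deta_t$ are independent of it.

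The main obstacle is the bookkeeping in the backward step. One must verify that Lemma~\ref{lemma:fact-2}(iii) applies with conditioning on $\wbX^{\text{obs}}_{1:t}$ rather than $\wbX^{\text{obs}}_{1:T}$. This amounts to rerunning its d-separation argument on the truncated graph, where observations after $t$ are marginalized out. One must also handle the near-boundary windows $b_\ell=t<\ell+k$, together with measurable dependence of all kernels on the variable-length inputs. I would handle both uniformly by working with regular conditional distributions on the disjoint-union space and invoking the d-separation argument directly: given $\bZ_{\ell+1:b_\ell}$, every path from $\bZ_\ell$ to $\bZ_{b_\ell+1:t}$ or to $\wbX^{\text{obs}}_{\ell+1:t}$ is blocked.
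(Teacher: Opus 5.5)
Your proposal is correct and follows essentially the same route as the paper's proof: encoder draw at time $t$, backward kernels from Lemma~\ref{lemma:fact-2}~$(iii)$ (applied with horizon $t$, as in the display preceding the theorem) combined with noise outsourcing, the backward chain rule for the joint smoothing law, and then Lemma~\ref{lemma:fact-2}~$(i)$ with outsourced transition and output noises, substituting the jointly smoothed window $\bZ^s_{(t+1-k)_+:t}$. Your version mainly spells out what the paper leaves implicit: the backward induction, the almost-everywhere qualifiers, and the measurable assembly of $\backt$ on the disjoint union of variable-length inputs.
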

The proof is given in Appendix~\ref{pf:thm:k-backward}, which establishes the
smoothing and forecasting representations together.
The result also highlights an important distinction from the first-order case.
For \(k>1\), the predictive transition depends on the block
\(\bZ_{(t+1-k)_+:t}\), so forecasting cannot be based solely on the filtering
state at time \(t\). Instead, the backward recursion is first used to recover
the required smoothed latent block \(\bZ_{(t+1-k)_+:t}^{s}\), which is then
passed to the forward transition. Thus the LIFTS procedure remains sequential
and local, but the local context for forecasting is a smoothed \(k\)-state
history rather than a single filtered state.

% \section{Lemmas}\label{appen:lemma}
% \section{Supporting results for full-law consistency}\label{supp:sec:identifiability_consistency}
\section{Proofs for the main results}\label{appen:proof}
\subsection{Proof of Proposition~\ref{prop:existence_enc}}\label{pf:prop:existence_enc}
\begin{proof}
For each \(t\in[T]\), the history space
\(\widebar{\calX}_{1:t}^{\mathrm{obs}}\) and the latent space \(\bbR^d\)
are standard Borel spaces. Hence a regular conditional distribution of
\(\bZ_t\) given \(\wbX_{1:t}^{\mathrm{obs}}\) exists. By the
noise-outsourcing lemma (Lemma~\ref{lemma:noise_outsourcing}), there is a
measurable map
\[
\enc_t:\widebar{\calX}_{1:t}^{\mathrm{obs}}\times\bbR^d
\longrightarrow\bbR^d
\]
such that, for almost every
\(\wbx_{1:t}^{\mathrm{obs}}\in
\widebar{\calX}_{1:t}^{\mathrm{obs}}\),
\[
\bZ_t\mid
\bigl\{\wbX_{1:t}^{\mathrm{obs}}=\wbx_{1:t}^{\mathrm{obs}}\bigr\}
\deq
\enc_t(\wbx_{1:t}^{\mathrm{obs}},\cdeps_t),
\qquad \cdeps_t\sim\calN(\bzero,\bI_d).
\]
Define \(\enc\) on the finite union of the history spaces by
\[
\enc(\wbx_{1:t}^{\mathrm{obs}},\cdeps_t)
\coloneq
\enc_t(\wbx_{1:t}^{\mathrm{obs}},\cdeps_t),
\qquad t\in[T].
\]
The history length identifies the relevant component of the union, so this
piecewise definition is measurable. Taking
\(\cdeps_1,\ldots,\cdeps_T\) mutually independent and independent of
\(\wbX_{1:T}^{\mathrm{obs}}\) gives \eqref{obj:encoder} simultaneously for all
\(t\in[T]\).
\end{proof}
\subsection{Proof of Corollary~\ref{thm:forecasting}}\label{pf:thm:forecasting}
\begin{proof}
Because the transition noises are identically distributed, the
noise-outsourcing lemma gives a measurable map
\(a_v:\bbR^d\to\bbR^d\) such that
\[
a_v(\deps_t)\deq\bv_t,
\qquad \deps_t\sim\calN(\bzero,\bI_d).
\]
Applying the same lemma jointly to the measurement and missingness noises gives
measurable maps \(a_x:\bbR^{2p}\to\bbR^p\) and
\(a_o:\bbR^{2p}\to\bbR^p\) such that
\[
\bigl(a_x(\deta_t),a_o(\deta_t)\bigr)
\deq
(\br_{t+1}^x,\br_{t+1}^o),
\qquad \deta_t\sim\calN(\bzero,\bI_{2p}).
\]
Define
\[
\btrans(\bz,\deps)
\coloneq
\trans\bigl(\bz,a_v(\deps)\bigr)
\]
and
\[
\bout(\bz,\boldsymbol\eta)
\coloneq
\left(
\out_x\bigl(\bz,a_x(\boldsymbol\eta)\bigr),
\out_o\bigl(\bz,a_o(\boldsymbol\eta)\bigr)
\right).
\]
These maps are measurable and have the domains and codomains stated in the
corollary.

Fix \(t\in[T]\) and an observed history
\(\wbx_{1:t}^{\mathrm{obs}}\) for which the representation in
Proposition~\ref{prop:existence_enc} holds. The encoder gives
\[
\enc(\wbx_{1:t}^{\mathrm{obs}},\cdeps_t)
\deq
\bZ_t\mid
\bigl\{\wbX_{1:t}^{\mathrm{obs}}=\wbx_{1:t}^{\mathrm{obs}}\bigr\}.
\]
Since \(\deps_t\) and \(\deta_t\) are independent of this encoded state and
of the observed history, the definitions of \(\btrans\) and \(\bout\), together
with model~\eqref{model:1}, yield
\[
\bout\!\left\{
\btrans\!\left(
\enc(\wbx_{1:t}^{\mathrm{obs}},\cdeps_t),\deps_t
\right),\deta_t
\right\}
\deq
(\bX_{t+1},\bO_{t+1})
\mid
\bigl\{\wbX_{1:t}^{\mathrm{obs}}=\wbx_{1:t}^{\mathrm{obs}}\bigr\}.
\]
\end{proof}

%%%%%%%%%%%%%%%%%%%%%%%%%%%%%%%%
%%%%%%%%%%%%%%%%
\subsection{Proof of Theorem~\ref{thm:imputation}}\label{pf:thm:imputation}
\begin{proof}
Fix an observed trajectory
$\wbx_{1:T}^{\mathrm{obs}}\in\widebar{\calX}_{1:T}^{\mathrm{obs}}$ for which
the required conditional representations hold.
For each \(t\in[T-1]\), Lemmas~\ref{lemma:fact-1}~$(iii)$ and
\ref{lemma:noise_outsourcing} give a measurable map \(\backt_t\) such that
\[
\backt_t(\bz_{t+1},\wbx_{1:t}^{\mathrm{obs}},\tdxi_t)
\deq
\bZ_t\mid
\bigl\{\bZ_{t+1}=\bz_{t+1},
\wbX_{1:t}^{\mathrm{obs}}=\wbx_{1:t}^{\mathrm{obs}}\bigr\},
\qquad
\tdxi_t\sim\calN(\bzero,\bI_d).
\]
Combining these maps componentwise gives a measurable \(\backt\). Starting from
\[
\bZ_T^s\coloneq\enc(\wbx_{1:T}^{\mathrm{obs}},\cdeps_T),
\]
define recursively
\[
\bZ_t^s
\coloneq
\backt(\bZ_{t+1}^s,\wbx_{1:t}^{\mathrm{obs}},\tdxi_t),
\qquad t=T-1,\ldots,1.
\]
By Lemma~\ref{lemma:fact-1}~$(iii)$, each update is the conditional law of
\(\bZ_t\) given the sampled future latent states and the full observed
trajectory. The backward chain rule therefore yields
\begin{align}\label{supp:joint-Z}
    (\bZ_1^s,\ldots,\bZ_T^s)
\deq
(\bZ_1,\ldots,\bZ_T)\mid
\bigl\{\wbX_{1:T}^{\mathrm{obs}}=\wbx_{1:T}^{\mathrm{obs}}\bigr\}.
\end{align}
Finally, Lemma~\ref{lemma:fact-1}~$(iv)$ and the noise-outsourcing lemma give a
measurable decoder \(\dec\) satisfying
\[
\dec(\bz_t,\wbx_t^{\mathrm{obs}},\tdeta_t)
\deq
\bX_t\mid
\bigl\{\bZ_t=\bz_t,\wbX_t^{\mathrm{obs}}=\wbx_t^{\mathrm{obs}}\bigr\},
\qquad
\tdeta_t\sim\calN(\bzero,\bI_p).
\]
Substituting the smoothed sample \(\bZ_t^s\) gives
\[
\dec(\bZ_t^s,\wbx_t^{\mathrm{obs}},\tdeta_t)
\deq
\bX_t\mid
\bigl\{\wbX_{1:T}^{\mathrm{obs}}=\wbx_{1:T}^{\mathrm{obs}}\bigr\}.
\]
% as required, where all reference noises are mutually independent and independent
% of \(\wbX_{1:T}^{\mathrm{obs}}\).
Moreover, let
\(\tbX_t \coloneq\dec(\bZ_t^s,\wbx_t^{\text{obs}},\tdeta_t)\). By the conditional independence property in Lemma~\ref{lemma:fact-1}~$(iv)$ and \eqref{supp:joint-Z},
\bel{supp:joint-X}
(\bX_1,\ldots,\bX_T)\mid
\bigl\{\wbX_{1:T}^{\mathrm{obs}}=\wbx_{1:T}^{\mathrm{obs}}\bigr\}
&\deq& (\tbX_1,\ldots,\tbX_T).
\eel
\end{proof}
%%%%%%%%%%%%%%%%

%%%%%%%%%%%%%%%%
%%%%%%%%%%%%%%%%
%%%%%%%%%%%%%%%%
%%%%%%%%%%%%%%%%%%%%%%%%%%%%%%%%
\subsection{Proof of proposition \ref{prop:mask-proper}}\label{pf:prop:mask-proper}
%%%%%%%%%%%%%%%-- new on, July-13-2026
\begin{proof}
Define
\[
\Gamma(\wbx^{\mathrm{obs}})
\coloneq
\left(
(\bx^{\mathrm{obs}}\odot\bo)^\top,\bo^\top
\right)^\top
\in\bbR^{2p}.
\]
It is clear that \(\Gamma\) is injective on
\(\widebar{\calX}^{\mathrm{obs}}\). By \eqref{dist-proj-main},
\[
\operatorname{dist}(\wbx_1^{\mathrm{obs}},\wbx_2^{\mathrm{obs}})
=
\|\Gamma(\wbx_1^{\mathrm{obs}})
-\Gamma(\wbx_2^{\mathrm{obs}})\|_2.
\]
Thus \(S^{\mathrm{mask}}(\wbx^{\mathrm{obs}};Q)\) is the ordinary energy
score of \(\Gamma(\wbx^{\mathrm{obs}})\) under the pushforward law
\(\Gamma_{\#}Q\). Theorem~5 of \citet{gneiting2007strictly}, with
\(m=2p\) and \(\psi(t)=t^{1/2}\), gives strict propriety over laws with
finite first moments. Hence equality of the expected scores holds if and only
if
\(\Gamma_{\#}P_o=\Gamma_{\#}Q\), which, by injectivity of \(\Gamma\),
is equivalent to \(P_o=Q\).
\end{proof}

\subsection{Proof of theorem \ref{thm:full_law_consistency_main}}\label{pf:thm:full_law_consistency_main}
\begin{proof}
By Proposition~\ref{prop:partial_law_main},
\[
W_{1,o}\bigl(Q_{o,T}^{\hbtheta_n},Q_{o,T}^{\btheta^*}\bigr)
\overset{p}{\longrightarrow}0.
\]
Define the continuous functions
\[
G_{o,T}(\btheta)\coloneq W_{1,o}\bigl(Q_{o,T}^{\btheta},Q_{o,T}^{\btheta^*}\bigr),
\qquad
G_T(\btheta)\coloneq W_1\bigl(Q_T^{\btheta},Q_T^{\btheta^*}\bigr).
\]
Continuity follows from Lemma~\ref{lemma:theta_law_continuity}. Moreover, if
\(G_{o,T}(\btheta)=0\), then
\(Q_{o,T}^{\btheta}=Q_{o,T}^{\btheta^*}\). Lemma~\ref{lem:full_law_finiteT_main}
then gives \(Q_T^{\btheta}=Q_T^{\btheta^*}\). Since the full laws have finite
first moments under Assumption~\ref{assum:model_encoder_regularity}, this
implies
\[
G_{o,T}(\btheta)=0 \quad\Longrightarrow\quad G_T(\btheta)=0.
\]
Fix \(\varepsilon>0\), and consider the compact set
\[
K_\varepsilon\coloneq\{\btheta\in\Theta:G_T(\btheta)\ge\varepsilon\}.
\]
{If \(K_\varepsilon=\varnothing\), then
\(\bbP\{G_T(\hbtheta_n)\ge\varepsilon\}=0\) for every \(n\). Hence it remains
to consider the case \(K_\varepsilon\neq\varnothing\).} Since
\(G_{o,T}\) is continuous and \(K_\varepsilon\) is compact,
\[
\delta_\varepsilon\coloneq\inf_{\btheta\in K_\varepsilon}G_{o,T}(\btheta)>0.
\]
Therefore,
\[
\{G_T(\hbtheta_n)\ge\varepsilon\}
\subseteq
\{G_{o,T}(\hbtheta_n)\ge\delta_\varepsilon\}.
\]
Taking probabilities and using the observation-law consistency result yields
\[
\bbP\{G_T(\hbtheta_n)\ge\varepsilon\}
\le
\bbP\{G_{o,T}(\hbtheta_n)\ge\delta_\varepsilon\}
\longrightarrow0.
\]
Since \(\varepsilon>0\) was arbitrary,
\[
W_1\bigl(Q_T^{\hbtheta_n},Q_T^{\btheta^*}\bigr)
\overset{p}{\longrightarrow}0.
\]
\end{proof}
%%%%%%%%%%%%%%%%%%%%%%%%%%%%%-- 
%%%%%%%%%%%%%%%%%%%%%%%%
\subsection{Proof of Lemma~\ref{lemma:linear_gaussian_identifiability}}
\label{pf:lemma:linear_gaussian_identifiability}
\begin{proof}
The proof is divided into five steps.

\noindent \emph{Step 1: fully observed one-time law forces $\bF$ to have full column rank.} 
Since $Q_{o,T}^{\btheta}=Q_{o,T}^{\btheta^*}$, the first-time observed marginals are equal.  Conditional on $\bO_1={\mathbf 1}$, by \eqref{eq:full_pushforward_equal} we have
\begin{equation}
    \bF_\#\nu_{{\mathbf 1}}^{\btheta}
    =
    \bF^*_\#\nu_{{\mathbf 1}}^{\btheta^*}.
    \label{eq:full_pattern_deconv}
\end{equation}
Taking supports in \eqref{eq:full_pattern_deconv},
\[
    \Col(\bF)
    =
    \supp(\bF_\#\nu_{{\mathbf 1}}^{\btheta})
    =
    \supp(\bF^*_\#\nu_{{\mathbf 1}}^{\btheta^*})
    =
    \Col(\bF^*).
\]
Since $\rank(\bF^*)=d$, it follows that $\rank(\bF)=d$.  Thus the candidate full measurement map $z\mapsto \bF z$ is continuous and one-to-one.

In addition, according to \eqref{eq:f_relation_new}, there exists a unique invertible matrix
\[
    \bR=(\bF^{*\top}\bF^*)^{-1}\bF^{*\top}\bF
\]
such that
\begin{equation}
    \bF=\bF^*\bR .
    \label{eq:proof_F_equals_FstarR}
\end{equation}
\noindent\emph{Step 2: singleton-missing patterns identify the affine logits up to $\bR$.}
By \eqref{eq:nu_transport_new}, for each $j=1,\ldots,p$,
\[
    \bF_{-j}=\bF^*_{-j}\bR .
\]
Since $\bF^*_{-j}$ has full column rank and $\bR$ is invertible, $\bF_{-j}$ has full column rank.   
By \eqref{eq:h_mu_equal_new}, 
\begin{equation}
    a_j=a_j^*,
    \qquad
    \bw_j=\bR^\top\bw_j^*,
    \qquad j=1,\ldots,p .
    \label{eq:logit_identification}
\end{equation}

\noindent\emph{Step 3: $\bR$ is orthogonal.}
Let $\gamma=\calN(\bzero,\bI_d)$ and by \eqref{eq:mu_z_transformation}, we get
\[
    \bR_\#\gamma=\gamma.
\]
Equivalently, if $G\sim\calN(\bzero,\bI_d)$, then $\bR G\sim\calN(\bzero,\bI_d)$.  Hence $\bR\bR^\top=\bI_d$, so $\bR\in O(d)$.

\medskip
\noindent\emph{Step 4: the two-time observed law identifies $A$.}
Under the candidate model,
\[
    \bZ_2=\bA\bZ_1+\bv_1,
    \qquad \bv_1\sim\calN(\bzero,\bI_d),
\]
with $\bv_1$ independent of $\bZ_1\sim\calN(\bzero,\bI_d)$.  Hence
\[
    \bbE_{\btheta}[\bZ_2\bZ_1^\top]=\bA.
\]
Under the true model, $\bbE_{\btheta^*}[\bZ_2^*(\bZ_1^*)^\top]=\bA^*$.  By \eqref{eq:mu_z_transformation_T}, we have
\[
    \bR\bA\bR^\top=\bA^*.
\]
Since $\bR$ is orthogonal,
\begin{equation}
    \bA=\bR^\top\bA^*\bR .
    \label{eq:A_identified}
\end{equation}

\medskip
\noindent\emph{Step 5: converse inclusion.}
Suppose that for some $\bR\in O(d)$,
\[
    \bF=\bF^*\bR,
    \qquad
    \ba=\ba^*,
    \qquad
    \bw_j=\bR^\top\bw_j^*,
    \qquad
    \bA=\bR^\top\bA^*\bR.
\]
Let $\widetilde{\bZ}_t=\bR\bZ_t$ under the candidate model.  Since $\bR$ is orthogonal, $\widetilde{\bZ}_1\sim\calN(\bzero,\bI_d)$ and $\bR\bv_t\sim\calN(\bzero,\bI_d)$.  Moreover,
\[
    \widetilde{\bZ}_{t+1}
    =
    \bR\bA\bZ_t+\bR\bv_t
    =
    \bA^*\widetilde{\bZ}_t+\widetilde{\bv}_t,
    \qquad
    \widetilde{\bv}_t\sim\calN(\bzero,\bI_d).
\]
Thus $\widetilde{\bZ}_{1:T}$ has the same law as the true latent path.  The measurement signal and logits also match:
\[
    \bF\bZ_t=\bF^*\widetilde{\bZ}_t,
    \qquad
    a_j+\bw_j^\top\bZ_t
    =
    a_j^*+(\bw_j^*)^\top\widetilde{\bZ}_t .
\]
With the same output noise law, this gives $Q_{o,T}^{\btheta}=Q_{o,T}^{\btheta^*}$.
% \medskip
% \noindent\emph{Step 7: conclusion.}
% Steps 1--5 prove the necessary form of every parameter in $\Theta_{o,T}^{\mathrm{lin}}({\btheta^*})$.  Step 6 proves the reverse inclusion.  The full-rank conclusions follow from $\bF=\bF^*\bR$, $\bF_{-j}=\bF^*_{-j}\bR$, the full column rank of $\bF^*_{-j}$, and invertibility of $\bR$.
\end{proof}
%%%%%%%%%%%%%%%%%%%%%%
%%%justification for  self-mask
%%%%%%%%%%%%%%%%%%%%%%
%%%%%%%%%%%%%%%%%%%%%%%%
%%%%%%%%%%%%%%%%%%%%%%
%%%%%%%%%%%%%%%%%%%%%%%% Identifiability and consistency
%%%%%%%%%%%%%%%%%%%%%%
\section{Proofs for Section \ref{supp:sec:key_properties}}\label{supp:sec:proof_S1}
\subsection{Proof of lemma \ref{lemma:fact-1}}\label{pf:lemma:fact-1}
\begin{proof}
By Lemma~\ref{lemma:factorization}, the joint law factorizes according to the
first-order state-space DAG. Its d-separations give
\begin{align*}
\bZ_{t+1}
&\perp\!\!\!\perp
(\bZ_{1:t-1},\wbX_{1:t}^{\mathrm{obs}})
\mid \bZ_t,\\
\wbX_{t+1:T}
&\perp\!\!\!\perp
(\bZ_{1:t},\wbX_{1:t}^{\mathrm{obs}})
\mid \bZ_{t+1},\\
\bZ_t
&\perp\!\!\!\perp
(\bZ_{t+2:T},\wbX_{t+1:T}^{\mathrm{obs}})
\mid(\bZ_{t+1},\wbX_{1:t}^{\mathrm{obs}}),\\
\bX_t
&\perp\!\!\!\perp
(\bZ_{-t},\wbX_{-t}^{\mathrm{obs}})
\mid(\bZ_t,\wbX_t^{\mathrm{obs}}),
\end{align*}
where the subscript \(-t\) denotes all time indices other than \(t\).
These four conditional independences are exactly statements~$(i)$--$(iv)$.
\end{proof}

\subsection{Proof of lemma \ref{lemma:fact-2}}\label{pf:lemma:fact-2}
\begin{proof}
By Lemma~\ref{lemma:factorization-k}, each \(\bZ_s\) has latent parents
\(\bZ_{(s-k)_+:s-1}\), while the observation mechanism remains local in
\(\bZ_s\). Set
\[
a_t=(t-k+1)_+,\qquad c_t=(t+2-k)_+,\qquad b_t=(t+k)\wedge T.
\]
The corresponding graph separations are
\begin{align*}
\bZ_{t+1}
&\perp\!\!\!\perp
(\bZ_{1:a_t-1},\wbX_{1:t}^{\mathrm{obs}})
\mid\bZ_{a_t:t},\\
\wbX_{t+1:T}
&\perp\!\!\!\perp
(\bZ_{1:c_t-1},\wbX_{1:t}^{\mathrm{obs}})
\mid\bZ_{c_t:t+1},\\
\bZ_t
&\perp\!\!\!\perp
(\bZ_{b_t+1:T},\wbX_{t+1:T}^{\mathrm{obs}})
\mid(\bZ_{t+1:b_t},\wbX_{1:t}^{\mathrm{obs}}),\\
\bX_t
&\perp\!\!\!\perp
(\bZ_{-t},\wbX_{-t}^{\mathrm{obs}})
\mid(\bZ_t,\wbX_t^{\mathrm{obs}}).
\end{align*}
Empty index blocks are omitted. These are precisely statements~$(i)$--$(iv)$.
\end{proof}
\subsection{Proof of theorem \ref{thm:k-backward}}\label{pf:thm:k-backward}\label{pf:thm:k-forecasting}
\begin{proof}
For \(\ell<t\), let \(b_\ell=(\ell+k)\wedge t\). Lemma~\ref{lemma:fact-2}~$(iii)$
and the noise-outsourcing lemma give a measurable map \(\backt_\ell\) such that
\[
\backt_\ell(\bZ_{\ell+1:b_\ell},\wbX_{1:\ell}^{\text{obs}},\tdxi_\ell)
\deq
\bZ_\ell\mid(\bZ_{\ell+1:t},\wbX_{1:t}^{\text{obs}}).
\]
Combining these maps into \(\backt\), initializing with the encoder draw
\(\bZ_t^s\), and applying the backward chain rule give
\[
(\bZ_1^s,\ldots,\bZ_t^s)
\deq
(\bZ_1,\ldots,\bZ_t)\mid \wbX_{1:t}^{\text{obs}}.
\]
Let \(a_t=(t+1-k)_+\). By Lemma~\ref{lemma:fact-2}~$(i)$,
\[
p(\bZ_{t+1}\mid \bZ_{1:t},\wbX_{1:t}^{\text{obs}})
=
p(\bZ_{t+1}\mid \bZ_{a_t:t}).
\]
The noise-outsourcing lemma supplies the transition and joint output maps.
Since the smoothing result gives
\(\bZ_{a_t:t}^s\deq\bZ_{a_t:t}\mid\wbX_{1:t}^{\text{obs}}\), substitution yields
\[
\bout\left\{\btrans\left(\bZ_{a_t:t}^s,\deps_t\right),\deta_t\right\}
\deq
(\bX_{t+1},\bO_{t+1})\mid\wbX_{1:t}^{\text{obs}},
\]
which proves both claims.
\end{proof}
\section{Supporting results for full-law consistency}
\label{supp:sec:proof_S2}
The proof sketch of Theorem~\ref{thm:full_law_consistency_main} proceeds through two main steps:
observation-law consistency
(Proposition~\ref{prop:partial_law_main}), and finite-horizon full-law
identifiability
(Lemma~\ref{lem:full_law_finiteT_main}).
\subsection{Encoder-assisted partial-law identification and consistency}\label{sec:partial_identification}
The first ingredient behind Theorem~\ref{thm:full_law_consistency_main} is
that the encoder-assisted forecasting objective defined above consistently estimates the
observed law $Q_{o,T}^{\btheta^*}$. 

%%%%%%%%%%%%%%%%%%%%%%%%
%%%%%%%%%%%%%%%%%%%%%%%%%%%%%%%%%%
For observation laws, \(W_{1,o}\) is defined analogously on the space \((\calX^{\text{obs}}\times\calO)^T\). For
\(\wbx_{1:T}^{\mathrm{obs}}\) and \(\wbx_{1:T}^{\mathrm{obs},\prime}\), the metric is $\rho_o(\wbx_{1:T}^{\mathrm{obs}},\wbx_{1:T}^{\mathrm{obs},\prime}) = \sum_{t=1}^T \operatorname{dist(\wbx_t^{\mathrm{obs}},\wbx_t^{\mathrm{obs},\prime})}$, which is defined as Eq.~\ref{dist-proj-main}.
% $\rho^{\text{obs}}(\wbx_{1:T}^{\mathrm{obs}},\wbx_{1:T}^{\mathrm{obs},\prime})
% \coloneq
% \sum_{t=1}^T
% \left(
% \|\bx_t^{\mathrm{obs}}\odot\bo_t-\bx_t^{\mathrm{obs},\prime}\odot\bo_t'\|_2
% +\|\bo_t-\bo_t'\|_2
% \right).$
\begin{proposition}[Observation-law identifiability and consistency]
\label{prop:partial_law_main}
Consider the model given in~\eqref{eq:dgp_main}. Let Assumptions~\ref{assum:encoder_consistency} and  \ref{assum:model_encoder_regularity} hold. Then, for any fixed \(T<\infty\), every population minimizer 
\begin{align}
       (\tbtheta, \tbphi) \ \in \ \argmin_{(\btheta, \bphi)\in \calA} \calL(\btheta, \bphi)
\end{align} satisfies 
\begin{align}
 Q_{o,T}^{\tbtheta,\tbphi} \ = \ Q_{o,T}^{\btheta^*}, \quad  Q_{o,T}^{\tbtheta} \ = \ Q_{o,T}^{\btheta^*}. 
%  Q_{o,T}^{\tbtheta,\tbphi} \ = \ Q_{o,T}^{\btheta^*} ,\quad
\end{align}
Moreover, as $n \to \infty$, for the empirical estimator defined in \eqref{def:em}, 
\begin{align}
   W_{1,o}\left(Q_{o,T}^{\hbtheta_n, \hbphi_n},\ Q_{o,T}^{\btheta^*}\right) \overset{p}{\longrightarrow}0, \quad   W_{1,o}\left(Q_{o,T}^{\hbtheta_n},\ Q_{o,T}^{\btheta^*}\right) \overset{p}{\longrightarrow}0 . 
\end{align}
\end{proposition}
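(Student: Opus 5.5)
The plan splits into a population (identification) part and an empirical (consistency) part, followed by an argmin-consistency argument on the compact set $\calA$.

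\emph{Population part.} For $(\btheta,\bphi)\in\calA$, Assumption~\ref{assum:encoder_consistency} guarantees that some $\bphi'\in\Phi$ represents the $\btheta$-filtering laws exactly. The first step is to show that every minimizer of $\calL_{\mathrm{syn},\enc}^{\btheta}$ also represents them. The synthetic loss is the expected energy score of $\bZ_t^{\mathrm{syn}}$ against the encoder law. Its value at $\bphi'$ is the minimum possible, by strict propriety of the energy score applied conditionally on $\wbX_{1:t}^{\mathrm{obs},\mathrm{syn}}$. Hence any $\bphi\in\tbPhi(\btheta)$ attains that minimum, and strict propriety forces its law to equal the filtering law for $Q_{o,T}^{\btheta}$-a.e.\ history. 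Corollary~\ref{thm:forecasting} then gives $Q_{o,t\mid 1:t-1}^{\btheta,\bphi}=\law_{\btheta}(\wbX_t^{\mathrm{obs}}\mid\wbX_{1:t-1}^{\mathrm{obs}})$, and multiplying the one-step laws yields $Q_{o,T}^{\btheta,\bphi}=Q_{o,T}^{\btheta}$ on $\calA$.

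\emph{Minimizer identification.} We have $(\btheta^*,\bphi^*)\in\calA$, and for it each one-step predictive law equals the true conditional law. Applying Proposition~\ref{prop:mask-proper} conditionally on the history at each $t$ (with Assumption~\ref{assum:model_encoder_regularity}(ii) giving the needed first moments), we get $\calL(\btheta,\bphi)\ge\calL(\btheta^*,\bphi^*)$ for every $(\btheta,\bphi)$. Equality forces the one-step conditionals to agree with the truth for a.e.\ history. Induction over $t$, using the chain-rule factorization of $Q_{o,T}$ from Section~\ref{subsec:joint-training}, then gives $Q_{o,T}^{\tbtheta,\tbphi}=Q_{o,T}^{\btheta^*}$. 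Combined with the previous step, this gives $Q_{o,T}^{\tbtheta}=Q_{o,T}^{\btheta^*}$.

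\emph{Consistency.} This follows the standard M-estimation route. The steps, in order:
\begin{enumerate}
\item[(a)] Show that $\calA$ is compact, i.e.\ closed in the compact set $\Theta\times\Phi$. This holds if $(\btheta,\bphi)\mapsto\calL_{\mathrm{syn},\enc}^{\btheta}(\bphi)$ is continuous and $\Phi$ is compact, via Berge's maximum theorem: the argmin correspondence is upper hemicontinuous, so its graph is closed.
\item[(b)] Show that $\calL$ is continuous on $\Theta\times\Phi$. ReLU networks and Transformers are continuous with linear growth, so the generated draws move continuously in the parameters pathwise. Together with the finite second moments, this gives uniform integrability, and dominated convergence passes the limit through the distances appearing in $S^{\mathrm{mask}}$.
\item[(c)] Prove a uniform law of large numbers, $\sup_{\Theta\times\Phi}|\calL_n-\calL|\overset{p}{\to}0$. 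This uses continuity of $\ell_{\btheta,\bphi}$ in the parameters for fixed data, an integrable envelope of order $1+\sum_t\operatorname{dist}(\wbx_t^{\mathrm{obs}},0)$, and compactness, via the standard Glivenko--Cantelli argument with brackets built from continuity.
\item[(d)] Deduce that $\calL(\hbtheta_n,\hbphi_n)\overset{p}{\to}\min_{\calA}\calL$. Let $\calA^*$ be the set of minimizers. For each $\varepsilon>0$, the set $\{(\btheta,\bphi)\in\calA: G(\btheta,\bphi)\ge\varepsilon\}$ is compact, where $G(\btheta,\bphi)=W_{1,o}(Q_{o,T}^{\btheta,\bphi},Q_{o,T}^{\btheta^*})$. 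On this set $\calL$ is strictly above its minimum, by the identification step, and the gap is bounded away from zero by continuity. This is the same argument as in the proof of Theorem~\ref{thm:full_law_consistency_main}.
\item[(e)] Conclude $W_{1,o}(Q_{o,T}^{\hbtheta_n,\hbphi_n},Q_{o,T}^{\btheta^*})\overset{p}{\to}0$. This needs continuity of $G$, and of the analogous map $\btheta\mapsto W_{1,o}(Q_{o,T}^{\btheta},Q_{o,T}^{\btheta^*})$, in the parameters; both follow from the pathwise coupling used in (b). Since $Q_{o,T}^{\btheta,\bphi}=Q_{o,T}^{\btheta}$ on $\calA$, the second convergence is identical to the first.
\end{enumerate}

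\emph{Main obstacle.} The hardest step is continuity, both of $\calL$ and of $\calL_{\mathrm{syn},\enc}^{\btheta}$, as needed for (a)--(b). The difficulty is the Bernoulli mask, whose draw $\mathbf{1}\{U<\sigma(\pi_j^{\btheta}(\bz))\}$ is discontinuous in the parameters. I would handle this by coupling through a common uniform $U$: the indicator changes only on an event whose probability tends to zero as the parameters converge. Because $\operatorname{dist}$ is bounded by a square-integrable envelope, that event contributes vanishing expectation, so continuity in $L^1$, and hence in $W_1$, still holds. The same coupling also supplies the continuity of $\ell_{\btheta,\bphi}$ needed in (c).
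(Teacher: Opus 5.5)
Your proposal is correct and follows essentially the same route as the paper. The matching steps are: encoder alignment on $\calA$ via strict propriety of the calibration energy score; identification of the population minimizer via conditional strict propriety of $S^{\mathrm{mask}}$ and the chain rule; compactness of $\calA$ from joint continuity of the calibration loss; a bracketing Glivenko--Cantelli uniform law with a linear envelope; and a positive risk gap on the compact set $\{G\ge\varepsilon\}$. Your common-uniform coupling for the Bernoulli masks is exactly the device the paper uses to obtain the required continuity of the model-implied and encoder-assisted laws.
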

Proposition~\ref{prop:partial_law_main} has two implications. First, at the
population level, any minimizer of the proposed objective recovers the true
observation law \(Q_{o,T}^{\btheta^*}\). This holds both for the
encoder-assisted law \(Q_{o,T}^{\tbtheta,\tbphi}\) and for the generative
law \(Q_{o,T}^{\tbtheta}\), because the encoder is aligned with the filtering
distribution under Assumption~\ref{assum:encoder_consistency}. Second, the empirical estimator is consistent: as the sample size increases, the
learned observed law converges to the true observed law \(Q_{o,T}^{\btheta^*}\)
in \(W_{1,o}\).
%%%%%%%%%%%%%%%%%%%%%%%%%%%%%%%%%%%%%%%%%%%%%%%%%%%%%%%%%%%%%%%%%%%%%%%%%%%%%%%%%%%%%%%%
\subsection{Full-law identifiability}\label{sec:full_identification}
We study when the full law \(Q_T^{\btheta}\) can be recovered from the
observation law \(Q_{o,T}^{\btheta}\) for \(T<\infty\). We first consider
the static case \((T=1)\); the finite-horizon result then extends this one-time
argument. 
Throughout this subsection, we compare a candidate parameter \(\btheta\) with the
data-generating parameter \(\btheta^*\), assuming that they share the same
initial latent law and noise laws.

\noindent\emph{Static backbone} (\(T=1\)).
Consider
\begin{align}
\bX &= f_x^{\btheta}(\bZ)+\br, \notag\\
\bbP(\bO_j=1\mid \bZ=\bz)
&=
\sigma\bigl(\pi_j^{\btheta}(\bz)\bigr),
\qquad j=1,\dots,p,
\label{eq:static_full_law_model}
\end{align} 
where \(\bZ\sim \mu_{\bZ}\) and $\br \sim \law(\br)$ is independent of \(\bZ\). Let $Q_1^{\btheta}\coloneq \law(\bX_1, \bO_1)$ and $Q_{o,1}^{\btheta}\coloneq \law(\bX_1^{\text{obs}},\bO_1)$ denote the full and observed one-time laws induced by $\btheta$.

\begin{lemma}[Static full-law identifiability]
\label{lem:full_law_static_main}
Consider the model given in~\eqref{eq:static_full_law_model}. Let  Assumption~\ref{assum:static_base_main} holds. Then, if $Q_{o,1}^{\btheta}=Q_{o,1}^{\btheta^*},$ there exists a measurable latent reparameterization
\(\calT:\bbR^d\to\bbR^d\), one-to-one \(\mu_{\bZ}\)-almost everywhere, such that
\[
\calT_{\#}\mu_{\bZ} = \mu_{\bZ},\qquad f_x^{\btheta}=f_x^{\btheta^*}\circ\calT,\qquad
\pi_j^{\btheta}=\pi_j^{\btheta^*}\circ\calT,\quad j=1,\dots,p,
\]
where the functional identities hold \(\mu_{\bZ}\)-almost everywhere.
Consequently,
\begin{align}
Q_1^{\btheta}=Q_1^{\btheta^*}.
\end{align}
\end{lemma}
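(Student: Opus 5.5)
Fix \(\btheta\) with \(Q_{o,1}^{\btheta}=Q_{o,1}^{\btheta^*}\). The plan is to work pattern by pattern. For each \(\bo\in\{0,1\}^p\), the observed law determines the sub-probability measure
\[
B\mapsto \bbP_{\btheta}\bigl(\Pi_{\bo}\bX\in B,\ \bO=\bo\bigr).
\]
Let \(\nu_{\bo}^{\btheta}\) denote the finite measure on \(\bbR^d\) with density
\[
\prod_{j\in I(\bo)}\sigma(\pi_j^{\btheta}(\bz))\prod_{j\notin I(\bo)}\bigl\{1-\sigma(\pi_j^{\btheta}(\bz))\bigr\}
\]
with respect to \(\mu_{\bZ}\). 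The measure above is then \((f_x^{\btheta,\bo})_\#\nu_{\bo}^{\btheta}\) convolved with \(\law(\Pi_{\bo}\br)\). The characteristic function of the projected noise is a restriction of that of \(\br\), so it is non-vanishing by Assumption~\ref{assum:static_base_main}$(iii)$. Dividing Fourier transforms therefore gives the deconvolution identity
\[
(f_x^{\btheta,\bo})_\#\nu_{\bo}^{\btheta}=(f_x^{\btheta^*,\bo})_\#\nu_{\bo}^{\btheta^*}\quad\text{for every }\bo.
\]

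\emph{Step 1 (full pattern).} Take \(\bo=\mathbf 1\). The density of \(\nu_{\mathbf 1}^{\btheta}\) is strictly positive, so \(\nu_{\mathbf 1}^{\btheta}\) and \(\mu_{\bZ}\) are mutually absolutely continuous. Since \(f_x^{\btheta}\) and \(f_x^{\btheta^*}\) are continuous and injective (Assumption~\ref{assum:static_base_main}$(ii)$), they are Borel isomorphisms onto their (Borel) images by Lusin–Souslin. Both push-forwards are the same measure, so the images agree up to null sets. Define
\[
\calT\coloneq (f_x^{\btheta^*})^{-1}\circ f_x^{\btheta}
\]
on the \(\mu_{\bZ}\)-full set where \(f_x^{\btheta}(\bz)\in f_x^{\btheta^*}(\bbR^d)\). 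Then \(\calT\) is measurable and injective there, \(f_x^{\btheta}=f_x^{\btheta^*}\circ\calT\) a.e., and \(\calT_\#\nu_{\mathbf 1}^{\btheta}=\nu_{\mathbf 1}^{\btheta^*}\).

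\emph{Step 2 (singleton-missing patterns).} For \(\bo=\bo_{\setminus\{j\}}\), Step 1 gives \(f_x^{\btheta,\bo}=f_x^{\btheta^*,\bo}\circ\calT\), and \(f_x^{\btheta^*,\bo}\) is injective. Pulling the deconvolution identity back through \(f_x^{\btheta^*,\bo}\) therefore gives \(\calT_\#\nu_{\bo}^{\btheta}=\nu_{\bo}^{\btheta^*}\). Next I compare Radon–Nikodym derivatives. The ratio of the densities of \(\nu_{\bo_{\setminus\{j\}}}\) and \(\nu_{\mathbf 1}\) is \(e^{-\pi_j}\), so
\[
e^{-\pi_j^{\btheta}}=e^{-\pi_j^{\btheta^*}}\circ\calT\quad\nu_{\mathbf 1}^{\btheta}\text{-a.e.}
\]
The justification is that for \(\varphi\ge0\),
\[
\int\varphi\circ\calT\,d\nu_{\bo}^{\btheta}=\int\varphi\,d\nu_{\bo}^{\btheta^*}=\int\varphi\,e^{-\pi_j^{\btheta^*}}d\nu_{\mathbf 1}^{\btheta^*}=\int(\varphi e^{-\pi_j^{\btheta^*}})\circ\calT\,d\nu_{\mathbf 1}^{\btheta}.
\]
Because \(\calT\) is injective, functions of the form \(\varphi\circ\calT\) generate the whole \(\sigma\)-field modulo null sets. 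Hence \(\pi_j^{\btheta}=\pi_j^{\btheta^*}\circ\calT\) \(\mu_{\bZ}\)-a.e. for every \(j\).

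\emph{Step 3 (measure preservation and conclusion).} With all logits matched, the density of \(\nu_{\mathbf 1}^{\btheta}\) equals \(g\circ\calT\), where \(g=\prod_j\sigma(\pi_j^{\btheta^*})\) is the density of \(\nu_{\mathbf 1}^{\btheta^*}\) with respect to \(\mu_{\bZ}\). Then, for \(\varphi\ge0\),
\[
\int \varphi\,g\,d(\calT_\#\mu_{\bZ})=\int(\varphi g)\circ\calT\,d\mu_{\bZ}=\int\varphi\circ\calT\,d\nu_{\mathbf 1}^{\btheta}=\int\varphi\,g\,d\mu_{\bZ}.
\]
Since \(g>0\), this yields \(\calT_\#\mu_{\bZ}=\mu_{\bZ}\). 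Finally, \((\bX,\bO)\) under \(\btheta\) is a measurable function of \((\bZ,\br,U)\), where \(U\) is the uniform noise driving the masks. Replacing \(\bZ\) by \(\calT(\bZ)\), which has law \(\mu_{\bZ}\) and satisfies \(f_x^{\btheta}(\bZ)=f_x^{\btheta^*}(\calT\bZ)\) and \(\pi^{\btheta}(\bZ)=\pi^{\btheta^*}(\calT\bZ)\) a.s., shows \(Q_1^{\btheta}=Q_1^{\btheta^*}\).

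The main obstacle is Step 1's measure-theoretic bookkeeping. One must define \(\calT\) measurably, check that the image sets match up to null sets, and transport the absolute-continuity statements through \(\calT\). I would handle this with Lusin–Souslin, which gives Borel images and Borel inverses for continuous injections, together with the equivalence of \(\nu_{\mathbf 1}^{\btheta}\) and \(\mu_{\bZ}\).
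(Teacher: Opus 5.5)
Your proposal is correct and follows essentially the same route as the paper. Both arguments deconvolve on the fully observed pattern to build $\calT=(f_x^{\btheta^*})^{-1}\circ f_x^{\btheta}$, transport the singleton-missing pattern measures through the injective submaps, and compare Radon--Nikodym derivatives to get $\pi_j^{\btheta}=\pi_j^{\btheta^*}\circ\calT$; they then recover $\calT_{\#}\mu_{\bZ}=\mu_{\bZ}$ by reweighting with the strictly positive full-observation probability and reassemble $Q_1^{\btheta}=Q_1^{\btheta^*}$. The only differences are minor refinements: your unnormalized pattern measures remove the need for the constants $c_{\bo}$ and the equalities $\bbP_{\btheta}(\bO=\bo)=\bbP_{\btheta^*}(\bO=\bo)$, and your Lusin--Souslin and $\sigma(\calT)$ argument makes explicit the Radon--Nikodym transport step that the paper only cites.
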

Lemma~\ref{lem:full_law_static_main} shows that equality of the observed
one-time laws identifies the full one-time law, with the latent distribution,
measurement map, and missingness logits identified up to a common invertible
reparameterization.

\noindent\emph{Finite-horizon model.}
We now return to model~\eqref{eq:dgp_main}. To propagate the static
identification argument over time, we impose the following nondegeneracy
conditions on the latent transition.

\begin{lemma}[Finite-horizon full-law identifiability]
\label{lem:full_law_finiteT_main}
Fix \(T<\infty\).  Consider the model given in~\eqref{eq:dgp_main}. Let Assumptions~\ref{assum:static_base_main} holds. If $Q_{o,T}^{\btheta}=Q_{o,T}^{\btheta^*},$ 
then
\[
Q_T^{\btheta}=Q_T^{\btheta^*}.
\]
\end{lemma}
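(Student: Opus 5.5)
The plan is to bootstrap the static result, Lemma~\ref{lem:full_law_static_main}, into a statement about the whole latent path. At each time we identify the latent coordinates up to a reparameterization. We then show that the reparameterizations at different times are linked consistently, so that the joint observed law pins down the joint law of $(f_x^{\btheta}(\bZ_t),\pi^{\btheta}(\bZ_t))_{t\in[T]}$.

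\emph{Step 1: marginal identification at each time.} Marginalizing $Q_{o,T}^{\btheta}=Q_{o,T}^{\btheta^*}$ to time $t$ gives equality of the one-time observed laws of $(\bX_t^{\text{obs}},\bO_t)$. Here the latent law is $\mu_t^{\btheta}=\law_{\btheta}(\bZ_t)$ rather than a common known $\mu_{\bZ}$. Assumption~\ref{assum:static_base_main}(v)--(vi) ensures both $\mu_t^{\btheta}$ and $\mu_t^{\btheta^*}$ are absolutely continuous. I would rerun the argument of Lemma~\ref{lem:full_law_static_main} in a form that allows different latent laws, roughly as follows:
\begin{itemize}
\item[(a)] On the pattern $\bO_t=\mathbf 1$, deconvolve using the nonvanishing characteristic function of $\law(\br)$.
\item[(b)] On the singleton-missing patterns $\widetilde{\calO}$, use injectivity of $f_x^{\btheta^*,\bo}$ to transport between patterns and identify the logits $\pi_j$.
\end{itemize}
This should yield a measurable, a.e.\ injective $\calT_t$ with $(\calT_t)_\#\mu_t^{\btheta}=\mu_t^{\btheta^*}$, $f_x^{\btheta}=f_x^{\btheta^*}\circ\calT_t$ and $\pi_j^{\btheta}=\pi_j^{\btheta^*}\circ\calT_t$. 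Since $f_x^{\btheta^*}$ is injective, $\calT_t=(f_x^{\btheta^*})^{-1}\circ f_x^{\btheta}$ on the relevant support. Thus $\calT_t\equiv\calT$ is the same map for all $t$, determined by $f_x^{\btheta}$ and $f_x^{\btheta^*}$ alone.

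\emph{Step 2: joint identification of the signal path.} Let $\bS_t=f_x(\bZ_t)$. It suffices to show $\law_{\btheta}(\bS_{1:T})=\law_{\btheta^*}(\bS_{1:T})$. The reason is that, given $\bZ_{1:T}$, the pairs $(\bX_t,\bO_t)$ are conditionally independent across $t$ with kernels depending on $\bZ_t$ only through $(f_x(\bZ_t),\pi(\bZ_t))=(f_x^{\btheta^*},\pi^{\btheta^*})(\calT\bZ_t)$. Hence $Q_T^{\btheta}$ is the pushforward of $\law(\calT\bZ_{1:T})$ through the true kernels, and we need $\law_{\btheta}(\calT\bZ_{1:T})=\law_{\btheta^*}(\bZ_{1:T})$. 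For this, condition on the all-ones mask event $\{\bO_{1:T}=\mathbf 1\}$, which has positive probability because the logits are finite. On this event the observed law equals the law of $\bS_{1:T}+\br_{1:T}$ tilted by $\prod_{t,j}\sigma(\pi_j(\bZ_t))$. Joint deconvolution in $\bbR^{pT}$, using that the characteristic function of $\law(\br)^{\otimes T}$ does not vanish, identifies the tilted measure
\[
\prod_{t,j}\sigma\bigl(\pi_j^{\btheta}(\bZ_t)\bigr)\,\dd\law_{\btheta}(\bZ_{1:T})\circ(f_x^{\btheta})^{-1}.
\]
The tilt is a function of $\bS_{1:T}$ that is already identified and strictly positive by Step 1. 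Dividing it out identifies $\law(\bS_{1:T})$. Injectivity then transfers this to $\law(\calT\bZ_{1:T})$, and the conclusion $Q_T^{\btheta}=Q_T^{\btheta^*}$ follows.

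\emph{Main obstacle.} The delicate point is Step 1. Lemma~\ref{lem:full_law_static_main} is stated with a common known latent law, whereas for $t\ge2$ the marginals $\mu_t^{\btheta}$ and $\mu_t^{\btheta^*}$ differ a priori. I must check that its proof uses the latent law only through absolute continuity, which is what (vi) is meant to supply, and through a.e.\ statements that remain valid. I would handle this by running the static argument directly at the level of the pushforwards $f_{x\#}\mu_t$ on $\bbR^p$, where absolute continuity is not needed and only injectivity matters. The "a.e." qualifiers would then be taken with respect to $\mu_1$ and propagated via (v).
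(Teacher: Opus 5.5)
Your proposal is correct. Your Step 2 is essentially the paper's Steps 2--3: restrict to $E_T=\{\bO_{1:T}=\mathbf 1\}$, deconvolve jointly in $\bbR^{pT}$, remove the strictly positive tilt $\prod_{t,j}\sigma(\pi_j(\bZ_t))$, use injectivity of the product map, and rebuild $Q_T$ from the kernels given $\bZ_{1:T}$. The paper pulls back through $\calT_T$ before dividing out the tilt, whereas you divide in signal space first; that difference is cosmetic.

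The genuine difference is Step 1.
\begin{itemize}
\item The paper invokes Lemma~\ref{lem:full_law_static_main} only at $t=1$, where the latent law is the common $\mu_{\bZ}$. It then transfers the $\mu_{\bZ}$-a.e.\ identities $f_x^{\btheta}=f_x^{\btheta^*}\circ\calT$ and $\pi_j^{\btheta}=\pi_j^{\btheta^*}\circ\calT$ to every time point. Condition (v) makes $\mu_{\bZ}$-null sets Lebesgue-null, and (vi) makes Lebesgue-null sets $\law(\bZ_t)$-null.
\item You instead rerun the static argument at each $t$. The obstacle you flag does not actually arise. The static proof touches the latent law only through $\prod_j\sigma(\pi_j)>0$, which makes $\law(\bZ_t\mid\bO_t=\mathbf 1)$ and $\law(\bZ_t)$ mutually absolutely continuous. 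Its prior-recovery step is a pure transport identity that works verbatim with two different latent laws. It gives $\calT_{\#}\mu_t^{\btheta}=\mu_t^{\btheta^*}$, with the logit identities holding $\mu_t^{\btheta}$-a.e.
\end{itemize}
Your observation that $\calT=(f_x^{\btheta^*})^{-1}\circ f_x^{\btheta}$ is the same map for every $t$ is what makes the tilt a single, $t$-independent function of $\bS_{1:T}$. Carried out this way, your argument uses neither (v) nor (vi). It needs only (i), the injectivity in (ii), and the noise condition (iii), at the cost of re-deriving the static argument rather than citing it. The paper's route is shorter but depends on both (v) and (vi).

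One correction to your final sentence: if you take the a.e.\ qualifiers with respect to $\mu_1$ and propagate, (v) alone does not suffice. You also need (vi) to pass from Lebesgue-null to $\law(\bZ_t)$-null sets, and that is exactly the paper's Step 1.
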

Lemma~\ref{lem:full_law_finiteT_main} extends the static identification
result to finite trajectories: under the stated structural conditions, the
partial observation law determines the corresponding full law.
\subsection{Proof of proposition \ref{prop:partial_law_main}}
\label{pf:prop:partial_law_main}

\begin{proof}
We divide the proof into five steps.

\medskip
\noindent\emph{Step 1: continuity of the masked energy score
\(S^{\text{mask}}\).}
The triangle inequality and \eqref{dist-proj-main} give
\begin{align}
|S^{\mathrm{mask}}(\wbx^{\mathrm{obs}};Q)|
&\le
\|\bx^{\mathrm{obs}}\odot\bo\|_2
+2\bbE_Q\|\bX^{\mathrm{obs}}\odot\bO\|_2
+\frac32\sqrt p,
\label{supp:S-mask_bounded}
\end{align}
for every predictive law \(Q\) with a finite first moment. Since
\(\|\bX^{\mathrm{obs}}\odot\bO\|_2\le\|\bX\|_2\), this bound is uniform over
the model-induced predictive laws at each fixed history by
Lemma~\ref{lemma:network_moments}.

Fix \(t\in[T]\), a history \(\wbx_{1:t-1}^{\mathrm{obs}}\), and
\((\btheta_n,\bphi_n)\to(\btheta,\bphi)\). Write
\[
Q_n=Q_{o,t\mid1:t-1}^{\btheta_n,\bphi_n}
(\cdot\mid\wbx_{1:t-1}^{\mathrm{obs}}),
\qquad
Q=Q_{o,t\mid1:t-1}^{\btheta,\bphi}
(\cdot\mid\wbx_{1:t-1}^{\mathrm{obs}}).
\]
Apply the conditional analogue of the coupling in the proof of
Lemma~\ref{lemma:theta_law_continuity} independently to the two predictive
draws. The two distance terms defining the score then converge almost surely.
By Lemma~\ref{lemma:network_moments} and the triangle inequality, their second
moments are uniformly bounded, and hence they are uniformly integrable. The Vitali convergence theorem
\citep[Theorem~4.6.3]{durrett2019probability} then gives
\[
S^{\mathrm{mask}}(\wbx^{\mathrm{obs}};Q_n)
\longrightarrow
S^{\mathrm{mask}}(\wbx^{\mathrm{obs}};Q).
\]
\medskip
\noindent\emph{Step 2: encoder-assisted law equals the true conditional law. }
Fix \((\btheta,\bphi)\in\calA\). Since
\(\bphi\in\tbPhi(\btheta)\), strict propriety of the energy score in
\(\calL_{\mathrm{syn},\enc}\), together with
Assumption~\ref{assum:encoder_consistency}, implies that, for each
\(t=2,\ldots,T\) and
\(Q_{o,1:t-1}^{\btheta}\)-almost every history,
\[
\enc^{\bphi}(\wbx_{1:t-1}^{\text{obs}},\cdeps_t) \ \sim \ 
\law_{\btheta}\{\bZ_{t-1}\mid\wbX_{1:t-1}^{\text{obs}}=\wbx_{1:t-1}^{\text{obs}}\}.
\]
Starting from the filtered state generated by the encoder, we generate the next
step using the transition, output, and missingness mechanisms indexed by
\(\btheta\). Hence
\[
Q_{o,t\mid1:t-1}^{\btheta,\bphi}(\cdot\mid\wbx_{1:t-1}^{\text{obs}})
=
Q_{o,t\mid1:t-1}^{\btheta}(\cdot\mid\wbx_{1:t-1}^{\text{obs}})
\]
for each \(t=2,\ldots,T\) and \(Q_{o,1:t-1}^{\btheta}\)-almost every history. The equality is immediate
for \(t=1\). By the chain rule for regular conditional distributions,
\begin{align}
Q_{o,T}^{\btheta,\bphi}=Q_{o,T}^{\btheta},
\qquad (\btheta,\bphi)\in\calA.
\label{eq:aligned_law_equals_exact_law}
\end{align}

%%%%%%%%%%%%%%%%%>>> L_1 bracketing
\medskip
\noindent\emph{Step 3: uniform convergence of the encoder-assisted empirical risk.}
Recall that 
\[
\calA \coloneq
\left\{(\btheta, \bphi)\in \Theta \times \Phi : \bphi \in \tbPhi(\btheta)\right\}
\]
Since \(\Theta\) and \(\Phi\) are compact, and
Lemma~\ref{lemma:encoder_calibration_loss_continuity} shows that
\((\btheta,\bphi)\mapsto
\calL_{\mathrm{syn},\enc}^{\btheta}(\bphi)\) is jointly continuous, the
argmin graph \(\calA\) is closed in \(\Theta\times\Phi\), and hence compact.
Let
\[
P\coloneq Q_{o,T}^{\btheta^*},
\qquad
P_n f\coloneq\frac1n\sum_{i=1}^n f(\wbX_{i,1:T}^{\text{obs}}), \qquad P f \coloneq  \int fdP
\]
and define the class
\[
\mathcal L_T
\coloneq
\{\ell_{\btheta,\bphi}:(\btheta,\bphi)\in\calA\}.
\]
Then
\[
\calL_n(\btheta,\bphi)=P_n\ell_{\btheta,\bphi},
\qquad
\calL(\btheta,\bphi)=P\ell_{\btheta,\bphi}.
\]
By Lemma \ref{lemma:encoder_assisted_continuity}, for each fixed trajectory
\(\wbx_{1:T}^{\text{obs}}\), the map
\((\btheta,\bphi)\mapsto \ell_{\btheta,\bphi}(\wbx_{1:T}^{\text{obs}})\)
is continuous on \(\calA\). Equations~\eqref{supp:S-mask_bounded} and
\eqref{eq:conditional-network-moment} give the common envelope
\begin{align}
|\ell_{\btheta,\bphi}(\wbx_{1:T}^{\text{obs}})|
&\le F_T(\wbx_{1:T}^{\text{obs}})
\coloneq C_T\left(1+\sum_{t=1}^T
\|\bx_t^{\text{obs}}\odot\bo_t\|_2\right),
\qquad (\btheta,\bphi)\in\calA,
\label{eq:envelope-1}
\end{align}
which is \(P\)-integrable by Lemma~\ref{lemma:network_moments}.
Thus \(\mathcal L_T\) is indexed by a compact parameter set; for each fixed
trajectory, \(\ell_{\btheta,\bphi}\) is continuous in \((\btheta,\bphi)\); and
the class admits a \(P\)-integrable envelope. By the standard bracketing
criterion for such classes \citep[Theorem~19.4 and Example~19.8]{van2000asymptotic},
for every \(\varepsilon>0\),
\begin{align}
    N_{[]}\bigl(\varepsilon,\mathcal L_T,L_1(P)\bigr)<\infty .
\end{align}
By Theorem 19.4 of \citet{van2000asymptotic}, \(\mathcal L_T\) is \(P\)-Glivenko--Cantelli.
Therefore,
\[
\sup_{(\btheta,\bphi)\in\calA}
\left|
(P_n-P)\ell_{\btheta,\bphi}
\right|
\overset{a.s.}{\longrightarrow}0 .
\]
Equivalently,
% \[
% \sup_{(\btheta,\bphi)\in\calA}
% \left|
% \calL_n(\btheta,\bphi)-\calL(\btheta,\bphi)
% \right|
% \overset{a.s.}{\longrightarrow}0 .
% \]
% In particular,
\begin{align}
    \sup_{(\btheta,\bphi)\in\calA}
\left|
\calL_n(\btheta,\bphi)-\calL(\btheta,\bphi)
\right|
\overset{a.s.}{\longrightarrow}0 .
\label{eq:uniform_lln_A}
\end{align}

\medskip
\noindent\emph{Step 4: population minimizer identifies the observation law.}
By Assumption~\ref{assum:encoder_consistency} and Step 2, there is a feasible
aligned pair \((\btheta^*,\bphi^*)\in\calA\) whose one-step predictive laws
equal the true conditional laws. Hence these laws are attainable in \(\calA\).
Recall that, for \((\btheta,\bphi)\in\calA\),
\[
\calL(\btheta,\bphi)
=
-\frac1T\sum_{t=1}^T
\bbE_{\btheta^*}
\left[
S^{\text{mask}}\left(
\wbX_t^{\text{obs}};
Q_{o,t\mid1:t-1}^{\btheta,\bphi}(\cdot\mid\wbX_{1:t-1}^{\text{obs}})
\right)
\right].
\]
Let
\[
\mathcal M\coloneq\operatorname*{arg\,min}_{(\btheta,\bphi)\in\calA}\calL(\btheta,\bphi).
\]
The set \(\mathcal M\) is nonempty by compactness of \(\calA\) and continuity of
\(\calL\). In addition, 
by strict propriety of \(S^{\mathrm{mask}}\), each conditional term is minimized
only when
\[
Q_{o,t\mid1:t-1}^{\btheta,\bphi}(\cdot\mid\wbX_{1:t-1}^{\text{obs}})
=
Q_{o,t\mid1:t-1}^{\btheta^*}(\cdot\mid\wbX_{1:t-1}^{\text{obs}})
\]
for \(Q_{o,1:t-1}^{\btheta^*}\)-almost every history. Therefore, any
\((\tbtheta,\tbphi)\in\mathcal M\) must attain them for every
\(t\). The chain rule therefore gives
\[
Q_{o,T}^{\tbtheta,\tbphi}=Q_{o,T}^{\btheta^*}.
\]
Finally, since \((\tbtheta,\tbphi)\in\calA\), Step 2 yields
\begin{align}
Q_{o,T}^{\tbtheta}=Q_{o,T}^{\tbtheta,\tbphi}=Q_{o,T}^{\btheta^*}.
\label{eq:population_minimizer_induce_true_law}
\end{align}

\medskip
\noindent\emph{Step 5: consistency of the induced observation law. }
Let \((\tbtheta,\tbphi)\in\mathcal M\). By definition of \((\hbtheta_n,\hbphi_n)\),
\[
\calL_n(\hbtheta_n,\hbphi_n)
\le
\calL_n(\tbtheta,\tbphi).
\]
Therefore,
\begin{align*}
0
&\le
\calL(\hbtheta_n,\hbphi_n)-\calL(\tbtheta,\tbphi)\\
&\le
2\sup_{(\btheta,\bphi)\in\calA}
\left|
\calL_n(\btheta,\bphi)-\calL(\btheta,\bphi)
\right|,
\end{align*}
and uniform convergence \eqref{eq:uniform_lln_A} in Step 3 implies
\begin{align}
\calL(\hbtheta_n,\hbphi_n)-\calL(\tbtheta,\tbphi)
\overset{p}{\longrightarrow}0.
\label{eq:risk_excess_to_zero}
\end{align}
Define
\[
G_{o,T}(\btheta,\bphi)
\coloneq
W_{1,o}\{Q_{o,T}^{\btheta,\bphi},Q_{o,T}^{\btheta^*}\}.
\]
By Lemma \ref{lemma:encoder_assisted_continuity}, \(G_{o,T}\) {is continuous on
\(\calA\)}. Step 4, \eqref{eq:population_minimizer_induce_true_law} shows that
every population minimizer satisfies \(G_{o,T}=0\). Hence, for each
\(\varepsilon>0\), the compact set
\[
K_\varepsilon\coloneq\{(\btheta,\bphi)\in\calA:\ G_{o,T}(\btheta,\bphi)\ge\varepsilon\}
\]
contains no population minimizer. {If \(K_\varepsilon=\varnothing\), then
\(G_{o,T}(\btheta,\bphi)=0\) for all \((\btheta,\bphi)\in\calA\). Hence it remains
to consider the case \(K_\varepsilon\neq\varnothing\).}  {Since \(\calL\) is continuous and
\(K_\varepsilon\) is compact, it has a positive risk gap:}
\[
\Delta_\varepsilon
\coloneq
\inf_{(\btheta,\bphi)\in K_\varepsilon}\calL(\btheta,\bphi)
-
\calL(\tbtheta,\tbphi)
>0.
\]
Using \eqref{eq:risk_excess_to_zero},
\[
\bbP\{G_{o,T}(\hbtheta_n,\hbphi_n)\ge\varepsilon\}
\le
\bbP\left
\{
\calL(\hbtheta_n,\hbphi_n)-\calL(\tbtheta,\tbphi) \ge\Delta_\varepsilon
\right\}
\to0.
\]
Thus
\[
W_{1,o}\left(Q_{o,T}^{\hbtheta_n,\hbphi_n},Q_{o,T}^{\btheta^*}\right)
\overset{p}{\longrightarrow}0.
\]
Since \((\hbtheta_n,\hbphi_n)\in\calA\), Step 2 also gives
\(Q_{o,T}^{\hbtheta_n,\hbphi_n}=Q_{o,T}^{\hbtheta_n}\). Therefore
\[
W_{1,o}\left(Q_{o,T}^{\hbtheta_n},Q_{o,T}^{\btheta^*}\right)
\overset{p}{\longrightarrow}0.
\]
\end{proof}
%%%%%%
%%%%%%%%%%%%%%%%%%%%%%%%%%%%%%%%%%%%%%%%%%%%%%%%%%%%%%%%%%%%%%%%%%%%%%%%%%%%%%%%%%%%%%%

% \begin{assumption}[Identifiable mask patterns]
% \label{assum:general_selected_main}
% There exists a collection
% \[
% \widetilde{\calO}=\{\mathbf 1,\bo^{(1)},\dots,\bo^{(m)}\}\subseteq\{0,1\}^p
% \]
% such that:
% \begin{enumerate}
% \item for every \(\bo\in\widetilde{\calO}\), the maps \(f_{x}^{\bo}\) and \(f_{x}^{\bo,\prime}\)  are one-to-one;

% \item the incidence matrix \(M\in\{0,1\}^{m\times p}\), defined by
% \[
% M_{kj}\coloneq \mathds{1} \{j\in S(\bo^{(k)})\},
% \qquad
% k=1,\dots,m,\ \ j=1,\dots,p,
% \]
% has full column rank.
% \end{enumerate}
% \end{assumption}
% Assumption~\ref{assum:general_selected_main} requires sufficiently many
% informative mask patterns to identify the coordinate-wise missingness logits.
% A simple sufficient condition is the singleton-missing case, in which
% \(\widetilde{\calO}\) contains the fully observed pattern and the \(p\)
% patterns that each remove exactly one coordinate; then \(M=I_p\) has full column rank.
%%%%%%%%%%%%%%%%%%%%%%%%%%%%%%%%%%%%%%%%%%%%%%%%%%%%%%%%%%%%%%%%%%%%%%%%%%%%%%%%%%%%%%%
%%%%%%%%%%%%%%%%%%%%%%%%%%%%%%%%%%%%%%%%%%%%%%%%%%%%%%%%%%%%%%%%%%%%%%%%%%%%%%%%%%%%%%%
%%%%%%%%%%%%%%%%%%%%%%%%%%%%%%%%%%%%%%%%%%%%%%%%%%%%%%%%%%%%%%%%%%%%%%%%%%%%%%%%%%%%%%%

\subsection{Proof of lemma~\ref{lem:full_law_static_main}}
\label{pf:lem:full_law_static_main}
\begin{proof}
For brevity, write primed quantities for the model indexed by \(\btheta^*\); for instance,
\(f_x'=f_x^{\btheta^*}\), \(\bZ'=\bZ^*\), and
\(\nu_{\bo}^{\btheta'}=\nu_{\bo}^{\btheta^*}\).
For \(\bo \in \{0,1\}^p\), define
\[
\bX^{\bo} \coloneq (\bX_j : \bo_j=1),\qquad
\br^{\bo} \coloneq (\br_j : \bo_j=1),
\]
and 
\[
\nu_{\bo}^{\btheta}\coloneq\law_{\btheta}(\bZ\mid \bO=\bo),
\qquad
\nu_{\bo}^{\btheta^*}\coloneq\law_{\btheta^*}(\bZ^*\mid \bO=\bo).
\]
We divide the proof into five steps.

\medskip
\noindent\emph{Step 1: the fully observed pattern identifies the latent reparameterization.}
Recall that 
\begin{align}
    \Theta_o(\btheta^*) \coloneq \{\btheta\in\Theta: Q_{o,1}^{\btheta}=Q_{o,1}^{\btheta^*}\}.
\end{align}
Since $\btheta \in \Theta_o(\btheta^*)$, we have
\[
\law_{\btheta}(\bX\mid \bO=\mathbf 1)
=
\law_{\btheta^*}(\bX^*\mid \bO=\mathbf 1).
\]
On the other hand, by the model definition, $\bX = f_x^{\btheta}(\bZ)+\br $ and $
\bX^* = f_x^{\btheta^*}(\bZ^*)+\br,$ we have
\[
\law_{\btheta}(\bX\mid \bO=\mathbf 1)
=
(f_{x}^{\btheta})_{\#}\nu_{\mathbf 1}^{\btheta} * \law(\br), \qquad
\law_{\btheta^*}(\bX^*\mid \bO=\mathbf 1)
=
(f_x^{\btheta^*})_{\#}\nu_{\mathbf 1}^{\btheta^*} * \law(\br).
\]
where the symbol $*$ means convolution of probability laws. Taking Fourier transforms and using the characteristic function of \(\br\)
is non-vanishing on \(\mathbb R^p\), we obtain
\begin{equation}\label{eq:full_pushforward_equal}
(f_{x}^{\btheta})_{\#}\nu_{\mathbf 1}^{\btheta}
=
(f_{x}^{\btheta^*})_{\#}\nu_{\mathbf 1}^{\btheta^*}.
\end{equation}
Thus \(f_x^{\btheta}(\bz)\in f_x^{\btheta^*}(\mathbb R^d)\) for
\(\nu_{\mathbf 1}^{\btheta}\)-almost every \(\bz\), and conversely
\(f_x^{\btheta^*}(\bz^*)\in f_x^{\btheta}(\mathbb R^d)\) for
\(\nu_{\mathbf 1}^{\btheta^*}\)-almost every \(\bz^*\). Almost surely under the respective latent laws, define
\[
\calT(\bz)=(f_x^{\btheta^*})^{-1}\{f_x^{\btheta}(\bz)\},
\qquad
\calT^{-1}(\bz^*)=(f_x^{\btheta})^{-1}\{f_x^{\btheta^*}(\bz^*)\},
\]
and set \(\calT(\bz)=\bzero\) elsewhere. The two expressions are mutual
inverses on the relevant full-probability sets, so \(\calT\) is measurable and
one-to-one \(\nu_{\mathbf 1}^{\btheta}\)-almost everywhere. Thus,
\begin{equation}\label{eq:f_relation_new}
f_x^{\btheta}=f_x^{\btheta^*}\circ\calT,
\qquad \nu_{\mathbf 1}^{\btheta}\text{-a.s.}
\end{equation}
Moreover, by \eqref{eq:full_pushforward_equal} and injectivity of \(f_x^{\btheta^*}\),
\begin{align}\label{eq:nu_one}
    \nu_{\mathbf 1}^{\btheta^*}=\calT_{\#}\nu_{\mathbf 1}^{\btheta}.
\end{align}
\medskip
\noindent\emph{Step 2: selected patterns identify subset sums of the logits}. 
Recall that the singleton-missing pattern collection
\begin{align}\label{supp:def:Otilde}
    \widetilde{\calO}
=
\{\bo^{(1)},\dots,\bo^{(p)}\},
\end{align}
where \(\bo^{(j)}\) is the pattern missing only the \(j\)-th coordinate.
For \(\bo\in\widetilde{\calO}\), let
\(\Pi_{\bo}:\mathbb R^p\to\mathbb R^{|\bo|}\) denote the coordinate projection onto
the observed coordinates of pattern \(\bo\). Then
\[
f_{x}^{\btheta,\bo}=\Pi_{\bo}\circ f_{x}^{\btheta},
\qquad
f^{\btheta^*,\bo}_x=\Pi_{\bo}\circ f^{\btheta^*}_{x}.
\]
Since \(\nu_{\bo}^{\btheta}\ll\nu_{\mathbf 1}^{\btheta}\), \eqref{eq:f_relation_new} gives
\begin{equation}\label{eq:submap_relation}
f_{x}^{\btheta,\bo}
=
\Pi_{\bo}\circ f_x^{\btheta}
=
\Pi_{\bo}\circ f^{\btheta^*}_{x}\circ\calT
=
f^{\btheta^*,\bo}_x\circ\calT,
\qquad \nu_{\bo}^{\btheta}\text{-a.s.}
\end{equation}
Fix \(\bo\in\widetilde{\calO}\). Equality of the observed law gives 
\[
\law_{\btheta}(\bX^{\bo}\mid \bO=\bo)
=
\law_{\btheta^*}\bigl(\bX^{\bo,*}\mid \bO=\bo\bigr).
\]
Then, we arrive at
\[
(f_{x}^{\btheta,\bo})_{\#}\nu_{\bo}^{\btheta} * \law(\br^{\bo}) = (f_{x}^{\btheta^*,\bo})_{\#}\nu_{\bo}^{\btheta^*} * \law(\br^{\bo}).
\]
Since the characteristic function of $\br$ is non-vanishing on \(\mathbb R^p\), it follows that the characteristic function of
\({\br^{\bo}}\) is non-vanishing on \(\mathbb R^{|\bo|}\). Hence deconvolution yields
\begin{equation}\label{eq:pattern_pushforward_equal_new}
(f_{x}^{\btheta,\bo})_{\#}\nu_{\bo}^{\btheta}
=
(f_{x}^{\btheta^*,\bo})_{\#}\nu_{\bo}^{\btheta^*}.
\end{equation}
On the other hand, by \eqref{eq:submap_relation},
\[
(f_{x}^{\btheta,\bo})_{\#}\nu_{\bo}^{\btheta}=
(f^{\btheta^*,\bo}_x\circ\calT)_{\#}\nu_{\bo}^{\btheta}
=
(f^{\btheta^*,\bo}_x)_{\#}(\calT_{\#}\nu_{\bo}^{\btheta}).
\]
Combining this with \eqref{eq:pattern_pushforward_equal_new}, we obtain
\begin{align}
    (f^{\btheta^*,\bo}_x)_{\#}(\calT_{\#}\nu_{\bo}^{\btheta})
=
(f^{\btheta^*,\bo}_x)_{\#}\nu_{\bo}^{\btheta^{*}}.
\end{align}
%%%%% this is why we need the injectivity assumption on Q_{o,T} rather than Q_{T}.
{By Assumption \ref{assum:static_base_main}, \(f^{\btheta^*,\bo}_x\) is injective for every \(\bo\in\widetilde{\calO}\), its pushforward map is
injective on probability measures. Therefore}
\begin{equation}\label{eq:nu_transport_new}
\nu_{\bo}^{\btheta^*}=\calT_{\#}\nu_{\bo}^{\btheta},
\qquad  \bo\in\widetilde{\calO}.
\end{equation}
From \eqref{eq:nu_one} and \eqref{eq:nu_transport_new}, we have
\begin{align}
    \nu_{\mathbf 1}^{\btheta^*}=\calT_{\#}\nu_{\mathbf 1}^{\btheta}, \qquad 
\nu_{\bo}^{\btheta^*}=\calT_{\#}\nu_{\bo}^{\btheta} \qquad \forall \bo\in\widetilde{\calO}.
\end{align}
We now compare the corresponding likelihood ratios. Define
\[
r_{\bo}^{\btheta}(\bz)\coloneq\frac{d\nu_{\bo}^{\btheta}}{d\nu_{\mathbf 1}^{\btheta}}(\bz),
\qquad
r_{\bo}^{\btheta^*}(\bz^*)\coloneq\frac{d\nu_{\bo}^{\btheta^*}}{d\nu_{\mathbf 1}^{\btheta^*}}(\bz^*).
\]
By \eqref{eq:nu_one} and \eqref{eq:nu_transport_new}, \(\calT\) transports both
the numerator and denominator laws to their starred counterparts. By change of variables and the uniqueness of Radon--Nikodym derivatives
\citep[Theorems~1.6.9 and A.4.8]{durrett2019probability}, applying the same
one-to-one transformation to both laws preserves their likelihood ratio.
Therefore,
\begin{equation}\label{eq:RN_relation_new}
\frac{\dd\nu_{\bo}^{\btheta}}{\dd\nu_{\mathbf 1}^{\btheta}}(\bz)
=
\frac{\dd\nu_{\bo}^{\btheta^*}}{\dd\nu_{\mathbf 1}^{\btheta^*}}(\calT(\bz))
\qquad
\nu_{\mathbf 1}^{\btheta}\text{-a.e. }\bz.
\end{equation}
By Bayes' rule,
\[
\frac{\dd\nu_{\bo}^{\btheta}}{\dd\nu_{\mathbf 1}^{\btheta}}(\bz)
=
\frac{\bbP_{\btheta}(\bO=\mathbf 1)}{\bbP_{\btheta}(\bO=\bo)}
\cdot
\frac{\bbP_{\btheta}(\bO=\bo\mid \bZ=\bz)}
{\bbP_{\btheta}(\bO=\mathbf 1\mid \bZ=\bz)}.
\]
Using conditional independence of \(\bO_1,\dots,\bO_p\) given \(\bZ\), we obtain
\begin{align}
    \frac{\bbP_{\btheta}(\bO=\bo\mid \bZ=\bz)}
{\bbP_{\btheta}(\bO=\mathbf 1\mid \bZ=\bz)}
=
\prod_{j \in S(\bo)}
\frac{1-\sigma(\pi_j(\bz))}{\sigma(\pi_j(\bz))}
=
\exp\Bigl(-\sum_{j \in S(\bo)} \pi_j(\bz)\Bigr)
\end{align}
where $S(\bo) \coloneq \{j\in [p]: \bo_j = 0\}$ for the missing-coordinate set. 
Since the fact
\begin{align}
    \frac{1-\sigma(u)}{\sigma(u)}=e^{-u}.
\end{align}
Therefore,
\[
\frac{d\nu_{\bo}^{\btheta}}{d\nu_{\mathbf 1}^{\btheta}}(\bz)
=
c_{\bo}\exp\Bigl(\sum_{j \in S(\bo)}- \pi_j(\bz)\Bigr),
\qquad
c_{\bo}\coloneq
\frac{\bbP_{\btheta}(\bO=\mathbf 1)}{\bbP_{\btheta}(\bO=\bo)}.
\]
The same formula holds under \(\btheta^*\), with the same constant \(c_{\bo}\), because
\(Q_{o,1}^{\btheta}=Q_{o,1}^{\btheta^*}\) implies
\[
\bbP_{\btheta}(\bO=\bo)=\bbP_{\btheta^*}(\bO=\bo),
\qquad \forall \bo.
\]
Substituting into \eqref{eq:RN_relation_new}, we obtain
\begin{equation}\label{eq:subset_sum_equal_new}
\sum_{j\in S(\bo)} \pi^{\btheta}_j(\bz)
=
\sum_{j\in S(\bo)} \pi^{\btheta^*}_j(\calT(\bz))
\qquad
\nu_{\mathbf 1}^{\btheta}\text{-a.e. }\bz,
\quad \forall \bo\in\widetilde{\calO}.
\end{equation}
Recall that the singleton-missing pattern collection
\begin{align}
    \widetilde{\calO}
=
\{\bo^{(1)},\dots,\bo^{(p)}\},
\end{align}
where \(\bo^{(j)}\) is the pattern missing only the \(j\)-th coordinate.
 Hence $S(\bo^{(j)}) = \{j\}$ and we arrive at
\begin{equation}\label{eq:h_ae_equal_new}
\pi^{\btheta}_j(\bz)=\pi^{\btheta^*}_j(\calT(\bz))
\qquad
\nu_{\mathbf 1}^{\btheta}\text{-a.e. }\bz,
\quad j=1,\dots,p.
\end{equation}
Since \(\nu_{\mathbf 1}^{\btheta}\ll \mu_{\bZ}\) and \( \mu_{\bZ} \ll \nu_{\mathbf 1}^{\btheta}\), this implies
\begin{equation}\label{eq:h_mu_equal_new}
\pi^{\btheta}_j(\bz)=\pi^{\btheta^*}_j(\calT(\bz))
\qquad
\mu_{\bZ}\text{-a.e. }\bz,
\quad j=1,\dots,p.
\end{equation}
\noindent\emph{Step 4: recover the latent prior.}
It remains to show that
\begin{align}
    \mu_{\bZ}=\calT_{\#}\mu_{\bZ}.
\end{align}
For \(\eta\in\{\btheta,\btheta^*\}\), define the strictly positive quantity
\begin{align}
\psi_{\eta}(\bz)
\coloneq
\bbP_{\eta}(\bO=\mathbf 1\mid \bZ=\bz)
=
\prod_{j=1}^p \sigma\bigl(\pi_j^{\eta}(\bz)\bigr)\in(0,1).
\end{align}
Since the latent prior is common under \(\btheta\) and \(\btheta^*\), for any
Borel set \(A\subseteq\mathbb R^d\),
\[
\mu_{\bZ}(A)
=
\bbP_{\btheta^*}(\bO=\mathbf 1)
\int_A \psi_{\btheta^*}(\bz^*)^{-1}\,
\nu_{\mathbf 1}^{\btheta^*}(\dd\bz^*).
\]
Using $\nu_{\mathbf 1}^{\btheta^*}=\calT_{\#}\nu_{\mathbf 1}^{\btheta}$ in  \eqref{eq:nu_one}, the equality
\(\bbP_{\btheta^*}(\bO=\mathbf 1)=\bbP_{\btheta}(\bO=\mathbf 1)\), and
\eqref{eq:h_mu_equal_new}, we obtain
\begin{align*}
\mu_{\bZ}(A)
&=
\bbP_{\btheta}(\bO=\mathbf 1)
\int_{\calT^{-1}(A)}
\psi_{\btheta^*}(\calT(\bz))^{-1}\,\nu_{\mathbf 1}^{\btheta}(\dd\bz)\\
&=
\bbP_{\btheta}(\bO=\mathbf 1)
\int_{\calT^{-1}(A)}
\psi_{\btheta}(\bz)^{-1}\,\nu_{\mathbf 1}^{\btheta}(\dd\bz)\\
&=
\mu_{\bZ}(\calT^{-1}(A)).
\end{align*}
Hence
\begin{align}\label{eq:mu_z_transformation}
    \mu_{\bZ}=\calT_{\#}\mu_{\bZ}.
\end{align}
\noindent \emph{Step 5: reconstruct the full one-time law. }
Finally, let \(\bZ\sim\mu_{\bZ}\) and set \(\widetilde{\bZ}\coloneq\calT(\bZ)\).
By Step 4, \(\widetilde{\bZ}\sim\mu_{\bZ}\), the common latent prior under
\(\btheta^*\). Moreover, by \eqref{eq:f_relation_new} and
\eqref{eq:h_mu_equal_new},
\[
f_x^{\btheta^*}(\widetilde{\bZ})=f_x^{\btheta}(\bZ)
\qquad\text{a.s.},
\]
and
\[
\pi_j^{\btheta^*}(\widetilde{\bZ})=\pi_j^{\btheta}(\bZ)
\qquad\text{a.s.},\quad j=1,\dots,p.
\]
Therefore,
\[
\law_{\btheta^*}(\bO\mid \widetilde{\bZ})
=
\law_{\btheta}(\bO\mid \bZ)
\qquad\text{a.s.}.
\]
Consequently,
\[
\law_{\btheta^*}\bigl(f_x^{\btheta^*}(\widetilde{\bZ}),\bO\bigr)
=
\law_{\btheta}\bigl(f_x^{\btheta}(\bZ),\bO\bigr).
\]
Adding the same noise yields
\[
Q_1^{\btheta}\ = \ Q_1^{\btheta^*}.
\]
\end{proof}

%%%%%%%%%%%%%%

\subsection{Proof of lemma~\ref{lem:full_law_finiteT_main}}
\label{pf:lem:full_law_finiteT_main}

\begin{proof}
For \(\eta\in\{\btheta,\btheta^*\}\), write
\[
\mu_{1:T}^{\eta}\coloneq \law_{\eta}(\bZ_{1:T}).
\]
We proceed in three steps.

\medskip
\noindent\emph{Step 1: identify the one-time emission and missingness mechanisms.}
Since
\[
\law_{\btheta}(\bX_{1:T}^{\,o},\bO_{1:T})
=
\law_{\btheta^*}(\bX_{1:T}^{\,o},\bO_{1:T}),
\]
their one-time marginals at \(t=1\) also coincide:
\[
\law_{\btheta}(\bX_1^{\,o},\bO_1)
=
\law_{\btheta^*}(\bX_1^{\,o},\bO_1).
\]
Applying Lemma~\ref{lem:full_law_static_main} yields a latent
reparameterization \(\calT:\bbR^d\to\bbR^d\) such that
\begin{align}\label{result:T=1}
\calT_{\#}\mu_{\bZ}=\mu_{\bZ},\qquad
f_x^{\btheta}=f_x^{\btheta^*}\circ\calT,\qquad
\pi_j^{\btheta}=\pi_j^{\btheta^*}\circ\calT,\quad j=1,\dots,p,
\quad \mu_{\bZ}\text{-a.e.}
\end{align}
% {By Lemma~\ref{lemma:positivity -latent-density}, every time-specific latent
% state \(\bZ_t\) has a nondegenerate density on \(\bbR^d\). Therefore the
% one-time identities in \eqref{result:T=1} can be applied at each time point:
% for \(t=1,\ldots,T\),
% \[
% f_x^{\btheta}(\bZ_t)=f_x^{\btheta^*}\{\calT(\bZ_t)\},
% \qquad
% \pi_j^{\btheta}(\bZ_t)=\pi_j^{\btheta^*}\{\calT(\bZ_t)\},\quad j=1,\ldots,p,
% \]
% almost surely under the finite trajectory law.}
Since \(\mu_{\bZ}\) has an almost-everywhere positive density, the functional
identities in \eqref{result:T=1} hold for Lebesgue-almost every \(\bz\).
Condition (vi) of Assumption~\ref{assum:static_base_main} then implies that,
under either model, they hold almost surely at every time point:
for \(t=1,\ldots,T\),
\[
f_x^{\btheta}(\bZ_t)=f_x^{\btheta^*}\{\calT(\bZ_t)\},
\qquad
\pi_j^{\btheta}(\bZ_t)=\pi_j^{\btheta^*}\{\calT(\bZ_t)\},
\quad j=1,\ldots,p.
\]

\medskip
\noindent\emph{Step 2: align the latent trajectory law.}
Let
\[
E_T\coloneq\{\bO_1=\mathbf 1,\dots,\bO_T=\mathbf 1\}
\]
be the fully observed mask event. Since
\[
\law_{\btheta}(\bX_{1:T}^{\,o},\bO_{1:T})
=
\law_{\btheta^*}(\bX_{1:T}^{\,o},\bO_{1:T}),
\]
their restrictions to the event \(E_T\) coincide. On \(E_T\), the full vector
\(\bX_{1:T}\) is observed. Define
\[
F_T^{\eta}:(\bbR^{d})^{T}\to(\bbR^{p})^{T},
\qquad
F_T^{\eta}(\bz_{1:T})
=
\bigl(f_x^{\eta}(\bz_1),\dots,f_x^{\eta}(\bz_T)\bigr),
\qquad \eta\in\{\btheta,\btheta^*\},
\]
and define the coordinatewise lift of \(\calT\) by
\[
\calT_T:(\bbR^{d})^{T}\to(\bbR^{d})^{T},
\qquad
\calT_T(\bz_{1:T})=(\calT(\bz_1),\dots,\calT(\bz_T)).
\]
Then
\[
F_T^{\btheta}
=
F_T^{\btheta^*}\circ\calT_T
\qquad
\mu_{1:T}^{\btheta}\text{-a.e.}
\]
Let
\[
\nu_{E}^{\btheta} \coloneq
\law_{\btheta}(\bZ_{1:T} \mid E_T),
\qquad
\nu_{E}^{\btheta^*} \coloneq
\law_{\btheta^*}(\bZ_{1:T} \mid E_T).
\]
Because the output noises are independent with known law and non-vanishing characteristic function, deconvolution on the event \(E_T\) yields
\begin{align}
    (F_T^{\btheta})_{\#}\nu^{\btheta}_{E}
=
(F_T^{\btheta^*})_{\#}\nu^{\btheta^*}_{E}.
\end{align}
Since \(F_T^{\btheta^*}\) is one-to-one and \(F_T^{\btheta}
=
F_T^{\btheta^*}\circ\calT_T\) on a set of full \(\nu_{E}^{\btheta}\)-measure,
\begin{align}\label{eq:latent_align_T}
    \nu_{E}^{\btheta^*} = (\calT_{T})_{\#}\nu_{E}^{\btheta}.
\end{align}
Next, define
\[
\psi_{\eta}(\bz_{1:T})
\coloneq
\prod_{t=1}^{T}\prod_{j=1}^{p}\sigma\bigl(\pi_j^{\eta}(\bz_t)\bigr),
\qquad \eta\in\{\btheta,\btheta^*\}.
\]
Bayes' rule gives
\begin{align}
    \nu^{\btheta}_{E}(d\bz_{1:T}) \cdot \bbP_{\btheta}(E_T)
=
\psi_{\btheta}(\bz_{1:T})\,\mu_{1:T}^{\btheta}(d\bz_{1:T}),
\qquad
\nu^{\btheta^*}_{E}(d\bz_{1:T})\cdot \bbP_{\btheta^*}(E_T)
=
\psi_{\btheta^*}(\bz_{1:T})\,\mu_{1:T}^{\btheta^*}(d\bz_{1:T}).
\label{eq:latent_align_T2}
\end{align}
By Step 1,
\[
\psi_{\btheta}
=
\psi_{\btheta^*}\circ\calT_T
\qquad
\mu_{1:T}^{\btheta}\text{-a.e.}
\]
Since \(\sigma(u)\in(0,1)\) for every \(u\in\bbR\), both \(\psi_{\btheta}\) and
\(\psi_{\btheta^*}\) are strictly positive. Using \eqref{eq:latent_align_T},
\eqref{eq:latent_align_T2}, and
\(\bbP_{\btheta}(E_T)=\bbP_{\btheta^*}(E_T)\), we obtain, for every Borel set
\(A\subseteq(\bbR^{d})^{T}\),
\begin{align*}
\mu_{1:T}^{\btheta^*}(A)
&= \bbP_{\btheta^*}(E_T)
\int_A \psi_{\btheta^*}(\bz_{1:T})^{-1}\,\nu^{\btheta^*}_{E}(d\bz_{1:T})\\
&= \bbP_{\btheta^*}(E_T)
\int_A \psi_{\btheta^*}(\bz_{1:T})^{-1}\,((\calT_T)_{\#}\nu^{\btheta}_{E})(d\bz_{1:T})\\
&= \bbP_{\btheta^*}(E_T)
\int_{\calT_T^{-1}(A)}
\psi_{\btheta^*}(\calT_T(\bz_{1:T}))^{-1}\,\nu^{\btheta}_{E}(d\bz_{1:T})\\
&=\bbP_{\btheta}(E_T)
\int_{\calT_T^{-1}(A)}
\psi_{\btheta}(\bz_{1:T})^{-1}\,\nu^{\btheta}_{E}(d\bz_{1:T})\\
&=
\mu_{1:T}^{\btheta}(\calT_T^{-1}(A)).
\end{align*}
Thus
\begin{align}\label{eq:mu_z_transformation_T}
\mu_{1:T}^{\btheta^*} \ = \ (\calT_T)_{\#}\mu_{1:T}^{\btheta}.
\end{align}

\medskip
\noindent\emph{Step 3: reconstruct the full trajectory law.}
For \(\bz_{1:T}\in(\bbR^{d})^{T}\), let \(K_{\eta}(\bz_{1:T},\cdot)\) be the
conditional law of \((\bX_{1:T},\bO_{1:T})\) given
\(\bZ_{1:T}=\bz_{1:T}\) under \(\eta\). \syb{By the coordinatewise identities from
Step 1 and the common noise laws,}
\begin{align}
    K_{\btheta}(\bz_{1:T},\cdot)
=
K_{\btheta^*}(\calT_T(\bz_{1:T}),\cdot)
\qquad
\mu_{1:T}^{\btheta}\text{-a.e.}
\end{align}
Let \(\varphi\) be any bounded measurable function on $(\bbR^{p})^{T}\times(\{0,1\}^{p})^{T}$.
Then
\begin{align*}
\int \varphi d Q_{T}^{\btheta}
&=
\int K_{\btheta}(\bz_{1:T},\varphi)\,\mu_{1:T}^{\btheta}(d\bz_{1:T})\\
&=
\int K_{\btheta^*}(\calT_T(\bz_{1:T}),\varphi)\,\mu_{1:T}^{\btheta}(d\bz_{1:T})\\
&=
\int K_{\btheta^*}(\bz_{1:T},\varphi)\,\mu_{1:T}^{\btheta^*}(d\bz_{1:T})\\
&=
\int \varphi d Q_{T}^{\btheta^*}.
\end{align*}
Therefore,
\begin{align}
    Q_T^{\btheta}  = Q_{T}^{\btheta^*}.
\end{align}
\end{proof}
% %%%%%%%%%%%%%%

%%%%%%%%%%%% proof for k-order

%%%%%%%%%%%% Lemmas
\section{Auxiliary lemmas and their proofs}\label{sec:appendix:lemmas}
Recall that $\wbX_{1:T} = (\bX_{1:T}, \bO_{1:T})$, and that $\bX_{1:T}^{\text{obs}}$ is the observed part of $\bX_{1:T}$.

\begin{lemma}\label{lemma:factorization}
   Suppose model \eqref{model:1} holds. Then, the joint distribution can be factored as follows: 
    \bes
    p\blp \bZ_{1:T}, \wbX_{1:T}, \bX_{1:T}^{\text{obs}} \brp  = p(\bZ_1)p(\wbX_1\mid \bZ_1)p(\bX_1^{\text{obs}}\mid \wbX_1) \cdot \prod_{s = 2}^T p(\bZ_s\mid \bZ_{s-1})p(\wbX_s\mid \bZ_{s}) p(\bX_s^{\text{obs}}\mid \wbX_s)
    \ees
\end{lemma}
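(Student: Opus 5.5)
We would prove this factorization by viewing model~\eqref{model:1} as a recursive (noise-outsourced) structural system and computing the joint law through a chain of conditionals, each of which is simplified using the independence of the noise sequences. Throughout, ``density'' is understood with respect to a suitable product dominating measure (Lebesgue on the latent and measurement coordinates, counting measure on $\calO$), or, when densities do not exist, as a statement about iterated regular conditional distributions (Markov kernels). The argument is the same in both readings.

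First, we would order the variables as $\bZ_1,\wbX_1,\bX_1^{\text{obs}},\bZ_2,\wbX_2,\bX_2^{\text{obs}},\ldots,\bZ_T,\wbX_T,\bX_T^{\text{obs}}$ and apply the chain rule. This writes the joint law as a product of conditionals of each block given all earlier blocks. Next, we would identify each conditional by tracing the generating equations.

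\begin{itemize}
\item $\bX_s^{\text{obs}}$ is a deterministic measurable function of $\wbX_s=(\bX_s,\bO_s)$, via the masking map in \eqref{def:obs}. Hence its conditional law given everything earlier is the point mass $p(\bX_s^{\text{obs}}\mid\wbX_s)$.
\item $\wbX_s=(\out_x(\bZ_s,\br_s^x),\out_o(\bZ_s,\br_s^o))$. The noises $(\br_s^x,\br_s^o)$ are independent of $\bZ_1$, of $\{\bv_t\}$, and of $\{\br_t^x,\br_t^o\}_{t\neq s}$. All earlier blocks are measurable functions of exactly these other variables, so the conditional law of $\wbX_s$ given the past equals its conditional law given $\bZ_s$ alone, namely $p(\wbX_s\mid\bZ_s)$.
\item $\bZ_s=\trans(\bZ_{s-1},\bv_{s-1})$. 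Here $\bv_{s-1}$ is independent of $\bZ_1,\bv_{1:s-2}$ and of all output noises up to time $s-1$, which generate every earlier block. Hence the conditional law of $\bZ_s$ given the past reduces to $p(\bZ_s\mid\bZ_{s-1})$.
\end{itemize}

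The tool used in each case is the standard ``freezing'' lemma. If $\bY=\phi(\bW,\bU)$ with $\bU$ independent of $(\bW,\bV)$, then $\law(\bY\mid\bW,\bV)=\law(\phi(\bw,\bU))|_{\bw=\bW}$. This law depends only on $\bW$.

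The main care point, rather than a real obstacle, is verifying the independence claims precisely. The $\sigma$-field generated by all blocks preceding a given variable must be contained in the $\sigma$-field of $\bZ_1$ and the noises with strictly earlier indices, or of the same index but a different type. This follows by induction on the ordering, since each block is a measurable function of $\bZ_1$, $\bv_{1:s-1}$, and $(\br_{1:s}^x,\br_{1:s}^o)$. Multiplying the identified conditionals then gives the stated factorization, with $p(\bZ_1)$ Gaussian. This is exactly the Markov factorization relative to the DAG in Figure~\ref{fig:dag}, which is what the d-separation arguments in Lemma~\ref{lemma:fact-1} require.
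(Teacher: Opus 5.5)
Your argument is correct, and it is more self-contained than what the paper does. The paper gives no derivation. It states that the factorization follows directly from the factorization property of hidden Markov models (citing Wainwright and Jordan) and points to the DAG in Figure~\ref{fig:dag}, so it takes for granted that model~\eqref{model:1} \emph{is} a directed graphical model with that graph.

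You instead derive this from the structural equations. Your ingredients are:
\begin{itemize}
\item the chain rule along the ordering $\bZ_1,\wbX_1,\bX_1^{\text{obs}},\bZ_2,\ldots$;
\item the inductive observation that all blocks preceding $\wbX_s$ are measurable with respect to $\sigma(\bZ_1,\bv_{1:s-1},\br^x_{1:s-1},\br^o_{1:s-1})$, and those preceding $\bZ_s$ with respect to the same $\sigma$-field with $\bv_{1:s-2}$ in place of $\bv_{1:s-1}$;
\item the substitution (``freezing'') lemma for a function of a conditioning variable and an independent noise.
\end{itemize}

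What your route buys is exactly the step the citation skips: the passage from a recursive system driven by mutually independent exogenous noises to the Markov kernels $p(\bZ_s\mid\bZ_{s-1})$, $p(\wbX_s\mid\bZ_s)$ and $p(\bX_s^{\text{obs}}\mid\wbX_s)$. It has three further advantages.
\begin{itemize}
\item It works at the level of kernels, so it needs no densities. This matters here, since $\bX_s^{\text{obs}}$ is a deterministic function of $\wbX_s$ and $\out_x,\out_o$ need not produce densities.
\item It uses only mutual independence of $\bZ_1$ and the noises, not their identical distribution.
\item It transfers verbatim to Lemma~\ref{lemma:factorization-k} by changing the parent set of $\bZ_s$ to $\bZ_{(s-k)_+:s-1}$.
\end{itemize}
The paper's route is shorter but asks the reader to accept the standard equivalence between structural models with independent noises and Bayesian-network factorizations.
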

Lemma \ref{lemma:factorization} follows directly from the factorization property of hidden Markov model \citep{wainwright2008graphical}. This factorization of the joint law can be represented graphically as shown in the Figure \ref{fig:dag}. Analogously, the generalization to $k$-order is as follows.
\begin{lemma}\label{lemma:factorization-k}
   Suppose model \eqref{model:k-order} holds. Then, the joint distribution can be factored as follows: 
    \bes
    && p\blp \bZ_{1:T}, \wbX_{1:T}, \bX_{1:T}^{\text{obs}} \brp  \cr
    &=& p(\bZ_1)p(\wbX_1\mid \bZ_1)p(\bX_1^{\text{obs}}\mid \wbX_1) \cdot \prod_{s = 2}^T p(\bZ_s\mid \bZ_{(s-k)_{+}:s-1})p(\wbX_s\mid \bZ_{s}) p(\bX_s^{\text{obs}}\mid \wbX_s)
    \ees
    where $(s-k)_{+} \coloneq \max\{1, s-k\}$.
\end{lemma}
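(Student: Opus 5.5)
The factorization in Lemma~\ref{lemma:factorization-k} follows by applying the chain rule to the joint law of $(\bZ_{1:T},\wbX_{1:T},\bX^{\text{obs}}_{1:T})$ in a topological order of the DAG induced by model~\eqref{model:k-order}. The order is $\bZ_1,\wbX_1,\bX_1^{\text{obs}},\bZ_2,\wbX_2,\bX_2^{\text{obs}},\ldots$. Each conditional factor is then simplified using the structural equations and the independence of the noise sequences. Concretely, we write
\[
p(\bZ_{1:T},\wbX_{1:T},\bX^{\text{obs}}_{1:T})=\prod_{s=1}^T p(\bZ_s\mid \mathcal H_{s-1})\,p(\wbX_s\mid \bZ_s,\mathcal H_{s-1})\,p(\bX_s^{\text{obs}}\mid \wbX_s,\bZ_s,\mathcal H_{s-1}),
\]
where $\mathcal H_{s-1}$ collects $(\bZ_{1:s-1},\wbX_{1:s-1},\bX^{\text{obs}}_{1:s-1})$. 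We interpret densities as regular conditional distributions whenever densities do not exist, as the paper does.

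\textbf{Transition factor.} We have $\bZ_s=\trans(\bZ_{(s-k)_+:s-1},\bnu_s)$. The history $\mathcal H_{s-1}$ is a measurable function of $\bZ_1$, $\bnu_{2:s-1}$, $\br^x_{1:s-1}$ and $\br^o_{1:s-1}$. The noise $\bnu_s$ is independent of all of these, so conditionally on $\mathcal H_{s-1}$ the law of $\bZ_s$ is the pushforward of $\law(\bnu_s)$ under $\trans(\bz_{(s-k)_+:s-1},\cdot)$. This gives $p(\bZ_s\mid\bZ_{(s-k)_+:s-1})$, and for $s=1$ the factor is simply $p(\bZ_1)$.

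\textbf{Emission factor.} We have $\wbX_s=(\out_x(\bZ_s,\br^x_s),\out_o(\bZ_s,\br^o_s))$. The pair $(\br^x_s,\br^o_s)$ is independent of $(\bZ_s,\mathcal H_{s-1})$, because $(\bZ_s,\mathcal H_{s-1})$ depends only on $\bZ_1$, $\bnu_{2:s}$ and earlier output noises. The same pushforward argument therefore gives $p(\wbX_s\mid\bZ_s)$.

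\textbf{Masking factor.} $\bX^{\text{obs}}_s$ is a deterministic measurable function of $\wbX_s=(\bX_s,\bO_s)$ through \eqref{def:obs}. Its conditional law is the point mass at that function, which depends only on $\wbX_s$ and equals $p(\bX_s^{\text{obs}}\mid\wbX_s)$.

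Multiplying the three factors over $s$ yields the claimed factorization. This is also the factorization of a DAG in which each $\bZ_s$ has parents $\bZ_{(s-k)_+:s-1}$, which is what Lemma~\ref{lemma:fact-2} later relies on.

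The only delicate point is the measure-theoretic justification of replacing conditioning on the full history by conditioning on the parents. I would handle it with the standard lemma: if $Y=\phi(U,V)$ with $V$ independent of $(U,W)$, then a version of $\law(Y\mid U,W)$ is $\phi(U,\cdot)_\#\law(V)$. This lemma is proved by Fubini on product sets, and applying it three times at each $s$ completes the argument.
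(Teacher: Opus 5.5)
Your proposal is correct and is essentially the paper's argument made explicit. The paper invokes the standard DAG (hidden Markov) factorization used for Lemma~\ref{lemma:factorization}, with the latent parent set of $\bZ_s$ enlarged to $\bZ_{(s-k)_+:s-1}$. Your chain rule in topological order, combined with the independence of $\bnu_s$ and $(\br_s^x,\br_s^o)$ from the past and the deterministic masking map, is exactly how that factorization is derived from the structural equations, and it relies, as the paper implicitly does, on the noise sequences in model~\eqref{model:k-order} being mutually independent and independent of $\bZ_1$ as stated for model~\eqref{model:1}.
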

This follows from the same graphical factorization argument as
Lemma~\ref{lemma:factorization}, with the latent parent of \(\bZ_s\) replaced
by \(\bZ_{(s-k)_+:s-1}\).
\begin{lemma}\label{lemma:noise_outsourcing}(Noise-outsourcing lemma.)
	Let $(\bX,\bY)$ be a random vector. Suppose random variables $\hbX \overset{\text{d}}{=} \bX$ and $\deta \sim \calN(\bzero,\bI_{m})$ for any given $m\geq 1$,  which are independent of $\hbX$. Then there exists a random variable $\hbY$ and a measurable function $g$ such that $\hbY= g(\deta, \hbX)$ and 
    \bes
    (\bX,\bY)\overset{\text{d}}{=}(\hbX,\hbY).
    \ees
    
\end{lemma}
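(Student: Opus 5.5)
The plan is to prove this transfer statement directly, using a regular conditional distribution of $\bY$ given $\bX$ and a conditional quantile transform of a single uniform variable extracted from $\deta$. Throughout I assume, as everywhere in the paper, that $\bX$ and $\bY$ take values in standard Borel spaces. These are finite products of Euclidean spaces, finite sets, and the augmented observed-data spaces $\widebar{\calX}_{1:t}^{\mathrm{obs}}$, which are finite unions of Euclidean pieces indexed by masks.

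\emph{Step 1: reductions.} Let $\Phi$ be the standard normal distribution function and set $U\coloneq\Phi(\deta_1)$. Then $U\sim\mathrm{Unif}(0,1)$, and $U$ is independent of $\hbX$; the remaining coordinates of $\deta$ are not used. Since $\bY$ lives in a standard Borel space, the Borel isomorphism theorem gives a bimeasurable bijection $\iota$ from that space onto a Borel subset of $\bbR$. It therefore suffices to treat a real-valued $Y'\coloneq\iota(\bY)$ and at the end set $\hbY\coloneq\iota^{-1}(\hat Y')$. Because the image of $\iota$ is Borel and carries full $\law(Y')$-mass, I will redefine $\iota^{-1}$ off that image by a fixed point, which keeps it measurable.

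\emph{Step 2: the conditional quantile map.} Since $\bbR$ is standard Borel, there is a regular conditional distribution $\kappa(\bx,\cdot)$ of $Y'$ given $\bX=\bx$. Define
\[
F(\bx,y)\coloneq\kappa\bigl(\bx,(-\infty,y]\bigr),
\]
which is measurable in $\bx$ for fixed $y$ and a distribution function in $y$ for fixed $\bx$. Set
\[
g_0(\bx,u)\coloneq\inf\{q\in\mathbb{Q}:F(\bx,q)\ge u\},\qquad u\in(0,1).
\]

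\emph{Step 3: joint measurability and the quantile identity (main obstacle).} The key technical point is that $g_0$ is jointly measurable in $(\bx,u)$. This holds because, for every real $c$,
\[
\{g_0<c\}=\bigcup_{q\in\mathbb{Q},\,q<c}\{(\bx,u):F(\bx,q)\ge u\},
\]
a countable union of measurable sets. By right-continuity of $F(\bx,\cdot)$, the infimum over rationals coincides with the generalized inverse $F^{-1}(\bx,u)$. The standard quantile argument then gives $g_0(\bx,U)\sim\kappa(\bx,\cdot)$ for every fixed $\bx$, since $g_0(\bx,u)\le y$ if and only if $u\le F(\bx,y)$.

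\emph{Step 4: verifying the joint law.} Define $g(\deta,\bx)\coloneq\iota^{-1}\bigl(g_0(\bx,\Phi(\deta_1))\bigr)$ and $\hbY\coloneq g(\deta,\hbX)$. For Borel sets $A$ and $B$, independence of $U$ and $\hbX$ together with Fubini's theorem gives
\[
\bbP(\hbX\in A,\hbY\in B)=\bbE\bigl[\bone_A(\hbX)\,\kappa(\hbX,\iota(B))\bigr]=\bbE\bigl[\bone_A(\bX)\,\kappa(\bX,\iota(B))\bigr]=\bbP(\bX\in A,\bY\in B).
\]
The middle equality uses $\hbX\deq\bX$. Measurable rectangles form a $\pi$-system generating the product $\sigma$-field, so the joint laws agree. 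Hence $(\bX,\bY)\deq(\hbX,\hbY)$.
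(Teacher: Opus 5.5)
Your proof is correct, and it follows the route the paper relies on: the paper gives no argument of its own but cites Kallenberg's uniform-noise transfer theorem and Zhou et al.\ for the Gaussian case. The standard proof of the former is your construction (regular conditional distribution, Borel isomorphism into $\bbR$, jointly measurable conditional quantile map), and the latter amounts to your reduction $U=\Phi(\deta_1)$ with $\Phi$ the standard normal distribution function. The one point you leave implicit is that $\bbP\{\iota^{-1}(g_0(\bx,U))\in B\}=\kappa(\bx,\iota(B))$ holds only for $\law(\bX)$-almost every $\bx$, namely where $\kappa(\bx,\cdot)$ puts full mass on the image of $\iota$; this is harmless because $\hbX\deq\bX$ makes the exceptional set $\law(\hbX)$-null as well.
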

This result is a Gaussian extension of the noise-outsourcing lemma (Theorem~5.10 in \citealt{Kallenberg2021FoundationsOM}), which originally assumes uniform noise. A proof for the Gaussian case can be found in Lemma 2.1 of \citet{zhou2023deep}.
% \begin{lemma}[Ergodicity preserved under measurable observation]\label{lemma:ergodicity}

% \begin{lemma}[Glivenko-Cantelli via Bracketing]\label{lemma:gc}
%     Let $\mathcal{F}$ be a class of measurable functions and let $P$ be a probability measure. If the $L_1(P)$ bracketing number satisfies
%     %
%     \[
%     N_{[ ]}(\varepsilon, \mathcal{F}, L_1(P)) < \infty \quad \text{for every } \varepsilon > 0,
%     \]
%     %
%     then $\mathcal{F}$ is a P-Glivenko-Cantelli class. That is,
%     %
%     \[
%     \| P_n - P \|_{\mathcal{F}} = \sup_{f \in \mathcal{F}} \left| \frac{1}{n} \sum_{i=1}^n f(X_i) - \int f \, dP \right| \xrightarrow{\text{a.s.}} 0.
%     \]
% \end{lemma}
% Lemma \ref{lemma:gc} is .
%%%%%%%%%%%%%%%%%%%%%%%%%%%%% consistency of M-estimator
% \begin{lemma}[Consistency of M-estimators]\label{lemma:m_estimator_consistency}
%     Let \( M_n \) be random functions and let \( M \) be a fixed function of \( \btheta \) such that for every \( \varepsilon > 0 \),
%     %
%     \begin{align*}
%         \sup_{\btheta \in \Theta} \left| M_n(\btheta) - M(\btheta) \right| &\xrightarrow{p} 0, \\
%         \sup_{\btheta: d(\btheta, \btheta_0) \geq \varepsilon} M(\btheta) &< M(\btheta_0).
%     \end{align*}
%     %
%     Then any sequence of estimators \( \hat{\btheta}_n \) satisfying
%     %
%     \[
%         M_n(\hat{\btheta}_n) \geq M_n(\btheta_0) - o_p(1)
%     \]
%     %
%     converges in probability to \( \btheta_0 \).
% \end{lemma}
%%%%%%%%%%%%%%----
\begin{lemma}[Network growth and moments]
\label{lemma:network_moments}
Under Assumption~\ref{assum:model_encoder_regularity}, the model-generated
trajectories, the recursively encoder-assisted trajectories, and the encoder
outputs on model-generated histories are uniformly square integrable over
\(\Theta\times\Phi\). Moreover, for some \(C_T<\infty\), every fixed history
\(\wbx_{1:t-1}^{\mathrm{obs}}\) satisfies
\begin{align}
\sup_{(\btheta,\bphi)\in\Theta\times\Phi}
\bbE_{Q_{t\mid1:t-1}^{\btheta,\bphi}
(\cdot\mid\wbx_{1:t-1}^{\mathrm{obs}})}\|\bX_t\|_2^2
\le
C_T\left(1+\sum_{s<t}
\|\bx_s^{\mathrm{obs}}\odot\bo_s\|_2^2\right),
\label{eq:conditional-network-moment}
\end{align}
with \(\bphi\) omitted when \(t=1\).
\end{lemma}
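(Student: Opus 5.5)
This is an induction over time, with a uniform second-moment bound as the induction hypothesis. Under Assumption~\ref{assum:model_encoder_regularity}$(i)$, every network in the compact classes $\Theta$ and $\Phi$ has at most linear growth with a uniform constant. Concretely, there is $L<\infty$ such that, for all $(\btheta,\bphi)\in\Theta\times\Phi$,
\[
\|\trans^{\btheta}(\bz,\bv)\|_2\le L(1+\|\bz\|_2+\|\bv\|_2),\qquad \|f_x^{\btheta}(\bz)\|_2\le L(1+\|\bz\|_2),
\]
\[
\|\enc^{\bphi}(\wbx_{1:t}^{\mathrm{obs}},\cdeps)\|_2\le L\Bigl(1+\sum_{s\le t}\|\bx_s^{\mathrm{obs}}\odot\bo_s\|_2+\|\cdeps\|_2\Bigr).
\]
I would first justify these bounds. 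For ReLU networks, the Lipschitz constant is at most the product of the layer operator norms, and this product is bounded on the compact weight set; biases are bounded too. For Transformers, softmax attention outputs convex combinations of value vectors, so each attention layer has linear growth in its input norm. Layer norms, residual connections and MLP blocks with bounded weights preserve linear growth, and compactness makes all constants uniform. The masked history enters the encoder only through $(\bx^{\mathrm{obs}}\odot\bo,\bo)$, and $\|\bo\|_2\le\sqrt p$ is absorbed into the constant.

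\emph{Model-generated trajectories.} Set $m_t=\sup_{\btheta}\bbE_{\btheta}\|\bZ_t\|_2^2$. By $(a+b+c)^2\le3(a^2+b^2+c^2)$ and Assumption~\ref{assum:model_encoder_regularity}$(ii)$,
\[
m_{t+1}\le 3L^2\bigl(1+m_t+\bbE\|\bv\|_2^2\bigr),
\]
and $m_1<\infty$. Hence $\max_{t\le T}m_t<\infty$. It follows that
\[
\sup_{\btheta}\bbE\|\bX_t\|_2^2\le 2\Bigl(2L^2(1+m_t)+\bbE\|\br\|_2^2\Bigr)<\infty,
\]
and the masks are bounded. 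The same bound gives square integrability, uniformly over $\btheta$, of the encoder outputs evaluated on model-generated histories. This uses $\|\bx_s^{\mathrm{obs}}\odot\bo_s\|_2\le\|\bX_s\|_2$ and Cauchy--Schwarz on the finite sum over $s\le t$. Uniform square integrability then yields uniform integrability of the first moments, as needed downstream.

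\emph{Conditional bound \eqref{eq:conditional-network-moment}.} Fix a history $\wbx_{1:t-1}^{\mathrm{obs}}$ and write $H=\sum_{s<t}\|\bx_s^{\mathrm{obs}}\odot\bo_s\|_2^2$. For $t\ge2$, the encoder-assisted draw is
\[
\bZ_{t-1}^f=\enc^{\bphi}(\wbx_{1:t-1}^{\mathrm{obs}},\cdeps),\qquad
\bZ_t=\trans^{\btheta}(\bZ_{t-1}^f,\bv),\qquad
\bX_t=f_x^{\btheta}(\bZ_t)+\br.
\]
Composing the three growth bounds and squaring gives
\[
\|\bX_t\|_2^2\le C\Bigl(1+\Bigl(\sum_{s<t}\|\bx_s^{\mathrm{obs}}\odot\bo_s\|_2\Bigr)^2+\|\cdeps\|_2^2+\|\bv\|_2^2+\|\br\|_2^2\Bigr).
\]
By Cauchy--Schwarz, $\bigl(\sum_{s<t}\|\bx_s^{\mathrm{obs}}\odot\bo_s\|_2\bigr)^2\le(t-1)H\le TH$. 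Taking expectations over the independent noises, which have finite second moments, gives $C_T(1+H)$ with $C_T$ depending only on $L$, $T$ and the noise moments. For $t=1$ the bound is just $m_1$-based. For the recursively encoder-assisted trajectories, I would apply this bound step by step along the generated history and combine it with the unconditional recursion above, again yielding uniform square integrability.

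\emph{Main obstacle.} The delicate step is the uniform linear-growth bound for Transformer encoders. Softmax is bounded but not affine, and normalization layers can be non-Lipschitz near zero variance. I would handle both by bounding outputs in sup norm via the convex-combination property and by noting that LayerNorm outputs are bounded by $\sqrt{\dim}$ times a bounded gain plus a bias. If the architecture includes positional encodings over histories of variable length $t\le T$, these are finitely many bounded vectors and affect only constants.
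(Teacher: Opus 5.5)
Your proof is correct and follows the paper's route. Compactness of the weights gives uniform linear growth of the fixed-architecture networks; induction over the finite horizon then uses the finite second moments of $\bZ_1$ and the noises; and the conditional bound comes from rerunning the same estimates with the history held fixed. Your justification of linear growth for the Transformer encoder is more detailed than the paper's, which simply asserts it.

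The one difference is in the induction. The paper iterates the growth bounds pathwise, with common source noises, to get a single parameter-free envelope $M_T$ with $\bbE M_T^2<\infty$. This gives uniform integrability of the squared norms themselves. You instead recurse on $\sup\bbE\|\cdot\|_2^2$, which gives uniform $L^2$-boundedness. That is all the downstream uniform-integrability steps actually use. Your pathwise growth inequalities would also produce the paper's envelope immediately if the stronger reading is wanted.
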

%%%%%%%%%%%---
\begin{lemma}[Continuity of the model-implied laws]
\label{lemma:theta_law_continuity}
Under Assumption~\ref{assum:model_encoder_regularity}, the maps
\[
\btheta\longmapsto Q_T^{\btheta},
\qquad
\btheta\longmapsto Q_{o,T}^{\btheta}
\]
are continuous on \(\Theta\) under \(W_1\) and \(W_{1,o}\), respectively, for
each fixed \(T<\infty\).
\end{lemma}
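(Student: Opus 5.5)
The plan is a coupling argument: build, for each $\btheta$, all model-generated quantities on one probability space from the same reference noises, and show almost-sure convergence of trajectories as $\btheta_n\to\btheta$. Convergence in $W_1$ then follows from uniform integrability. Concretely, draw $\bZ_1$, the transition noises $\bv_{1:T-1}$, the output noises $\br_{1:T}$, and independent uniforms $U_{t,j}\sim\mathrm{Unif}(0,1)$ once. Then set
\[
\bZ_1^{\btheta}=\bZ_1,\qquad \bZ_{t+1}^{\btheta}=\trans^{\btheta}(\bZ_t^{\btheta},\bv_t),\qquad \bX_t^{\btheta}=f_x^{\btheta}(\bZ_t^{\btheta})+\br_t,
\]
and generate the masks by $\bO_{t,j}^{\btheta}=\bfone\{U_{t,j}\le\sigma(\pi_j^{\btheta}(\bZ_t^{\btheta}))\}$. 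This realises $Q_T^{\btheta}$ for every $\btheta$ simultaneously. Applying the fixed map $(\bX_{1:T},\bO_{1:T})\mapsto(\bX_{1:T}^{\text{obs}},\bO_{1:T})$ realises $Q_{o,T}^{\btheta}$.

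\emph{Step 1 (pathwise continuity).} Under Assumption~\ref{assum:model_encoder_regularity}$(i)$, fixed-architecture ReLU networks are jointly continuous in (weights, input). This applies to $\trans^{\btheta}$, $f_x^{\btheta}$ and $\pi_j^{\btheta}$. By induction on $t\le T$, $\bZ_t^{\btheta_n}\to\bZ_t^{\btheta}$ and $\bX_t^{\btheta_n}\to\bX_t^{\btheta}$ surely. The masks are discontinuous only when $U_{t,j}=\sigma(\pi_j^{\btheta}(\bZ_t^{\btheta}))$. Since $U_{t,j}$ is independent of $\bZ_t^{\btheta}$ and has a continuous law, this event has probability zero. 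Hence $\bO_t^{\btheta_n}\to\bO_t^{\btheta}$ almost surely, because masks take values in $\{0,1\}^p$ and are eventually equal. It follows that $\rho(\wbX_{1:T}^{\btheta_n},\wbX_{1:T}^{\btheta})\to0$ almost surely. For the observed law, note that $\operatorname{dist}$ equals $\|\Gamma(\cdot)-\Gamma(\cdot)\|_2$, with $\Gamma$ as in the proof of Proposition~\ref{prop:mask-proper}. Moreover, $\Gamma(\bx^{\text{obs}},\bo)=(\bx\odot\bo,\bo)$ is continuous in $(\bx,\bo)$. Therefore $\rho_o$ also converges to zero almost surely.

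\emph{Step 2 (uniform integrability).} By the linear-growth property with weights bounded on compact $\Theta$, we have $\|\bZ_{t+1}^{\btheta}\|\le C(1+\|\bZ_t^{\btheta}\|+\|\bv_t\|)$ and $\|\bX_t^{\btheta}\|\le C(1+\|\bZ_t^{\btheta}\|)+\|\br_t\|$, with $C$ uniform in $\btheta$. Iterating over $t\le T$ gives a single dominating variable
\[
D\coloneq C_T\Bigl(1+\|\bZ_1\|_2+\sum_t\|\bv_t\|_2+\sum_t\|\br_t\|_2\Bigr).
\]
This variable is independent of $\btheta$ and square integrable by Assumption~\ref{assum:model_encoder_regularity}$(ii)$; this is essentially Lemma~\ref{lemma:network_moments}. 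Then $\rho(\wbX^{\btheta_n},\wbX^{\btheta})\le 2D+2T\sqrt p$. Dominated convergence gives $\bbE\rho(\wbX^{\btheta_n}_{1:T},\wbX^{\btheta}_{1:T})\to0$, and the same holds for $\rho_o$.

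\emph{Step 3 (conclusion).} Every coupling upper-bounds $W_1$. Hence
\[
W_1(Q_T^{\btheta_n},Q_T^{\btheta})\le\bbE\rho(\wbX^{\btheta_n}_{1:T},\wbX^{\btheta}_{1:T})\to0,
\]
and likewise for $W_{1,o}$. Finite first moments hold by Step 2. The only delicate point is the discontinuity of the Bernoulli mask map. Handling it requires the uniform-threshold representation together with the probability-zero boundary argument of Step 1, rather than an arbitrary Gaussian noise-outsourcing map, which need not be continuous in $\btheta$. If the paper's $\out_o$ is given through a different noise representation, I would first replace it by this threshold coupling; doing so does not change the law.
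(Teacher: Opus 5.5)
Your proposal is correct and is essentially the paper's proof. You use the same common-noise coupling with uniform-threshold masks, get almost-sure convergence of the coupled trajectories by induction from joint continuity of the networks, and close the bound $W_1\le\bbE\rho$ (resp.\ $W_{1,o}\le\bbE\rho_o$) with the same integrable envelope as Lemma~\ref{lemma:network_moments}. Your explicit argument that the boundary event $U_{t,j}=\sigma(\pi_j^{\btheta}(\bZ_t^{\btheta}))$ has probability zero supplies a detail the paper leaves implicit when it derives almost-sure mask convergence from continuity of the logits.
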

%%%%%%%%%%%---
\begin{lemma}[Continuity of the encoder calibration loss]
\label{lemma:encoder_calibration_loss_continuity}
Under Assumption~\ref{assum:model_encoder_regularity},
 for \(\btheta\in\Theta\), write
\(\calL_{\mathrm{syn},\enc}^{\btheta}(\bphi)\) for the calibration loss
generated under \(\btheta\). The map
\[
(\btheta,\bphi)\mapsto
\calL_{\mathrm{syn},\enc}^{\btheta}(\bphi)
\]
is jointly continuous on \(\Theta\times\Phi\).
\end{lemma}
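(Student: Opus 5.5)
The plan is to write $\calL_{\mathrm{syn},\enc}^{\btheta}(\bphi)$ as a finite sum over $t\in[T]$ of expectations of distance functionals. Then I would build, for a convergent sequence $(\btheta_n,\bphi_n)\to(\btheta,\bphi)$, a single coupling on one probability space and pass to the limit inside the expectations by Vitali's theorem, exactly as in Step~1 of the proof of Proposition~\ref{prop:partial_law_main}.

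\emph{The coupling.} Fix the reference randomness once: the initial state $\bZ_1$, the transition noises $\bv_{1:T-1}$, the output noises $\br_{1:T}$, a sequence of i.i.d.\ uniforms $U_{t,j}$ driving the Bernoulli mask coordinates, and the encoder noises $\cdeps_{t,1},\cdeps_{t,2}$. For parameter $\btheta_n$, generate the synthetic latent path by
\[
\bZ_{t+1}^{(n)}=\trans^{\btheta_n}\bigl(\bZ_t^{(n)},\bv_t\bigr),\qquad \bX_t^{(n)}=f_x^{\btheta_n}\bigl(\bZ_t^{(n)}\bigr)+\br_t,
\]
and set $\bO_{t,j}^{(n)}=\bone\{U_{t,j}\le\sigma(\pi_j^{\btheta_n}(\bZ_t^{(n)}))\}$. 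The encoder draws are $\hbz_{t,k}^{(n)}=\enc^{\bphi_n}(\wbX_{1:t}^{\mathrm{obs},(n)},\cdeps_{t,k})$. In this representation the loss is the expectation of $\|\bZ_t^{(n)}-\hbz_{t,1}^{(n)}\|_2-\tfrac12\|\hbz_{t,1}^{(n)}-\hbz_{t,2}^{(n)}\|_2$, averaged over $t$.

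\emph{Almost sure convergence.} By Assumption~\ref{assum:model_encoder_regularity}(i), ReLU networks and Transformers are jointly continuous in parameters and inputs. Induction on $t$ then gives $\bZ_t^{(n)}\to\bZ_t$ and $\bX_t^{(n)}\to\bX_t$ almost surely.

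The main obstacle is the mask, which is discontinuous. The indicator $\bO_{t,j}^{(n)}$ converges to $\bO_{t,j}$ except on the event $\{U_{t,j}=\sigma(\pi_j^{\btheta}(\bZ_t))\}$. Since $U_{t,j}$ is uniform and independent of $\bZ_t$, this event has probability zero. Because masks are discrete, $\bO_t^{(n)}=\bO_t$ eventually almost surely, and then $\bX_t^{\mathrm{obs},(n)}\odot\bO_t^{(n)}\to\bX_t^{\mathrm{obs}}\odot\bO_t$. The encoder must accept the history encoded through $\Gamma$, i.e.\ via $(\bx^{\mathrm{obs}}\odot\bo,\bo)$, on which it is continuous. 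Hence $\hbz_{t,k}^{(n)}\to\hbz_{t,k}$ almost surely, and so do the distance integrands.

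\emph{Uniform integrability.} By Lemma~\ref{lemma:network_moments}, the synthetic latent states and the encoder outputs on model-generated histories are uniformly square integrable over $\Theta\times\Phi$. The triangle inequality bounds each integrand by $\|\bZ_t^{(n)}\|_2+2\|\hbz_{t,1}^{(n)}\|_2+\|\hbz_{t,2}^{(n)}\|_2$, so the integrands have uniformly bounded second moments and are uniformly integrable. Vitali's theorem then gives convergence of each expectation. Summing the finitely many terms over $t\in[T]$ yields
\[
\calL_{\mathrm{syn},\enc}^{\btheta_n}(\bphi_n)\to\calL_{\mathrm{syn},\enc}^{\btheta}(\bphi),
\]
which is sequential, hence joint, continuity.
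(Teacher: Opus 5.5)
Your proposal is correct and follows essentially the same route as the paper: a common-noise coupling (shared source noises, shared uniforms for the mask coordinates, shared encoder noises), almost-sure convergence of the per-time integrands, uniform integrability from Lemma~\ref{lemma:network_moments}, and Vitali's theorem summed over the finitely many $t$. You also spell out why the mask indicators converge (the boundary event $\{U_{t,j}=\sigma(\pi_j^{\btheta}(\bZ_t))\}$ has probability zero), a step the paper leaves implicit.
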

%%%%%%%%%%%---
\begin{lemma}[Continuity of the encoder-assisted laws]
\label{lemma:encoder_assisted_continuity}
Under Assumption \ref{assum:model_encoder_regularity}, for each fixed \(T<\infty\),
let \(W_{1,o}\) be induced by the observed-space metric
\(\operatorname{dist}(\cdot,\cdot)\) defined as Eq.~\eqref{dist-proj-main}.
Then:
\begin{enumerate}[label=(\roman*)]
\item for every \(t\in[T]\) and every fixed history \(\wbx_{1:t-1}^{\text{obs}}\), the map
\[
(\btheta,\bphi)
\mapsto
Q_{o,t\mid1:t-1}^{\btheta,\bphi}(\cdot\mid\wbx_{1:t-1}^{\text{obs}})
\]
is continuous under \(W_{1,o}\), with \(\bphi\)
omitted when \(t=1\);

\item the map
\[
(\btheta,\bphi)\mapsto Q_{o,T}^{\btheta,\bphi}
\]
is continuous under \(W_{1,o}\);

\item for each fixed trajectory \(\wbx_{1:T}^{\text{obs}}\), the map
\[
(\btheta,\bphi)
\mapsto
\ell_{\btheta,\bphi}(\wbx_{1:T}^{\text{obs}})
\]
is continuous. Consequently, \((\btheta,\bphi)\mapsto\calL(\btheta,\bphi)\) is continuous on any compact subset of \(\Theta\times\Phi\).
\end{enumerate}
\end{lemma}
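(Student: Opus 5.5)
The plan is to prove all three parts by one coupling argument. For each $(\btheta,\bphi)$, realize every encoder-assisted draw on a common probability space using the same reference noises. The encoder noise $\cdeps_t$, the transition noise $\bv_{t-1}$ and the output noise $\br_t$ are fixed across parameters. Each mask coordinate is generated as $\bO_{t,j}=\bone\{U_{t,j}\le\sigma(\pi_j^{\btheta}(\bZ_t))\}$, with i.i.d.\ $U_{t,j}\sim\mathrm{Unif}(0,1)$ independent of everything else.

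Take $(\btheta_n,\bphi_n)\to(\btheta,\bphi)$. By Assumption~\ref{assum:model_encoder_regularity}, the fixed-architecture ReLU networks and Transformers are jointly continuous in weights and inputs. Hence, for a fixed history $\wbx_{1:t-1}^{\text{obs}}$, the following converge surely:
\begin{itemize}
\item the filtered state $\enc^{\bphi_n}(\wbx_{1:t-1}^{\text{obs}},\cdeps_t)$;
\item the propagated state $\trans^{\btheta_n}(\cdot,\bv_{t-1})$;
\item the logits $\pi_j^{\btheta_n}(\bZ_t)$ and the measurement $f_x^{\btheta_n}(\bZ_t)+\br_t$.
\end{itemize}
The mask indicator is discontinuous only on the event $\{U_{t,j}=\sigma(\pi_j^{\btheta}(\bZ_t))\}$. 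Because $U_{t,j}$ is continuous and independent of $\bZ_t$, this event is null. So the coupled masks are eventually equal to the limit masks almost surely, and $\operatorname{dist}$ between the $n$-th draw and the limit draw tends to $0$ almost surely.

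For (i), we bound $W_{1,o}$ by the coupling expectation $\bbE\operatorname{dist}(\wbX_{t,n}^{\text{obs}},\wbX_t^{\text{obs}})$. This integrand is dominated by $\|\bX_{t,n}\|_2+\|\bX_t\|_2+\sqrt{p}$. By \eqref{eq:conditional-network-moment} in Lemma~\ref{lemma:network_moments}, its second moments are bounded uniformly in $n$, so the family is uniformly integrable. The Vitali convergence theorem then gives $W_{1,o}\to0$.

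For (ii), the trajectory is generated sequentially, and each encoder call is fed the previously generated history. We argue by induction on $t$ that the coupled generated histories converge almost surely. The mask components become eventually equal, the observed real components converge, and the $\na$ entries are fixed once the masks agree. Continuity of the encoder in its real inputs, with the discrete mask inputs held fixed, propagates this convergence one step. Summing over $t\le T$ gives almost sure convergence of $\rho_o$. The uniform square integrability of the recursively encoder-assisted trajectories, also from Lemma~\ref{lemma:network_moments}, makes the family uniformly integrable, and Vitali again gives $W_{1,o}$-continuity.

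For (iii), fix a trajectory. Each term $S^{\mathrm{mask}}(\wbx_t^{\text{obs}};Q_n)$ is an expectation of $\tfrac12\operatorname{dist}(Y_n,Y_n')-\operatorname{dist}(\wbx_t^{\text{obs}},Y_n)$. Here $Y_n,Y_n'$ are two independent coupled copies built as in (i). The integrand converges almost surely and is uniformly integrable by the same moment bound, so the score converges. Averaging over $t$ gives continuity of $\ell_{\btheta,\bphi}(\wbx_{1:T}^{\text{obs}})$.

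For continuity of $\calL$ on a compact subset of $\Theta\times\Phi$, combine the bound \eqref{supp:S-mask_bounded} with \eqref{eq:conditional-network-moment}. Together they give a parameter-free envelope $F_T$ as in \eqref{eq:envelope-1}, and $F_T$ is integrable under $Q_{o,T}^{\btheta^*}$ by the same lemma. Dominated convergence then yields continuity of $\calL$.

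The main obstacle is the discontinuity introduced by the binary masks. In (ii) this is compounded by feeding generated masks back into the encoder, which changes which inputs are $\na$. The plan handles it through the uniform-threshold representation, which confines all discontinuities to a null set and makes masks eventually constant. The one point to check carefully is that $\sigma(\pi_j^{\btheta}(\bZ_t))$ is a genuine random variable independent of $U_{t,j}$ at every recursion step. This holds because $U_{t,j}$ is drawn independently of all earlier randomness.
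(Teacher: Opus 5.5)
Your proposal is correct and follows essentially the same route as the paper. Both use a common-noise coupling with uniform-threshold masks, almost-sure convergence of the coupled draws (recursively, feeding generated histories into the encoder, for (ii)), uniform integrability from Lemma~\ref{lemma:network_moments} with Vitali to get $W_{1,o}\to 0$, the two-independent-copies argument of Step~1 of Proposition~\ref{prop:partial_law_main} for (iii), and dominated convergence with the envelope \eqref{eq:envelope-1} for $\calL$; your explicit check that $\{U_{t,j}=\sigma(\pi_j^{\btheta}(\bZ_t))\}$ is a null set, so masks are eventually constant and the encoder's discrete inputs stabilize, usefully fills in a step the paper only asserts.
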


%%%%%%%%%%%
% \begin{lemma}[Positivity density of the latent trajectory law]\label{lemma:positivity -latent-density}
% Under Assumption~\ref{assum:static_base_main}, for each fixed \(T<\infty\) and each
% \(\eta\in\{\btheta,\btheta'\}\), the latent trajectory law
% \[
% \mu_{1:T}^{\eta}\coloneq\law_{\eta}(\bZ_{1:T}^{\eta})
% \]
% is absolutely continuous with respect to Lebesgue measure on \((\bbR^d)^T\).
% Moreover, its density is positive for Lebesgue-a.e. on 
% \((\bbR^d)^T\).
% \end{lemma}
%%%%%%%%%%%%%%%%%%
%%%%%%%%%%%%%%
\subsection{Proof of lemma \ref{lemma:network_moments}}
\begin{proof}
Compactness of the parameter spaces uniformly bounds all weights and biases.
Since the architectures are fixed, the ReLU networks and Transformers have
uniform linear-growth bounds. Thus, for constants independent of
\((\btheta,\bphi)\),
\[
\|\trans^{\btheta}(\bz,\bv)\|_2+\|f_x^{\btheta}(\bz)\|_2
\le C(1+\|\bz\|_2+\|\bv\|_2)
\]
and
\[
\|\enc^{\bphi}(\wbx_{1:t}^{\mathrm{obs}},\cdeps)\|_2
\le C_T\left(1+\sum_{s\le t}
\|\bx_s^{\mathrm{obs}}\odot\bo_s\|_2
+\|\cdeps\|_2\right),
\]
where the bounded mask inputs are absorbed into the constants. Using common
source noises for all \((\btheta,\bphi)\), define
\[
M_T=C_T\left\{1+\|\bZ_1\|_2
+\sum_{s=1}^{T-1}\|\bv_s\|_2
+\sum_{s=1}^{T}\bigl(\|\br_s\|_2+\|\cdeps_s\|_2\bigr)\right\}.
\]
Induction over the fixed horizon shows that the norm of every model-generated
or encoder-assisted state, observation, and encoder output is bounded by
\(M_T\), uniformly over \(\Theta\times\Phi\). Since the initial state and
reference noises have finite second moments, \(\bbE M_T^2<\infty\). This common
\(L^2\) envelope gives the claimed uniform square integrability. Applying the
same bounds conditionally on a fixed history gives
\eqref{eq:conditional-network-moment}.
\end{proof}
%%%%%%%%%%%%%%%
\subsection{Proof of lemma \ref{lemma:theta_law_continuity}}
\begin{proof}
Let \(\btheta_n\to\btheta\), and write
\[
\trans_n=\trans^{\btheta_n},\qquad
f_{x,n}=f_x^{\btheta_n},\qquad
\pi_{n,j}=\pi_j^{\btheta_n},
\qquad
\trans=\trans^{\btheta},\qquad
f_x=f_x^{\btheta},\qquad
\pi_j=\pi_j^{\btheta}.
\]
	Let the common source noises be
	\[
	\bA=(\bZ_1,\bv_1,\ldots,\bv_{T-1},\br_1,\ldots,\br_T),
	\]
	and let \(\{\bU_{t,j}:t\in[T],\,j\in[p]\}\) be independent uniform random
	variables, independent of \(\bA\).
For each \(n\), define recursively
\[
\bZ^{(n)}_1=\bZ_1,
\qquad
\bZ^{(n)}_{t+1}=\trans_n(\bZ^{(n)}_t,\bv_t),
\qquad
\bX^{(n)}_t=f_{x,n}(\bZ^{(n)}_t)+\br_t.
\]
Define \((\bZ_t,\bX_t)\) analogously under \(\btheta\). Continuity of the
fixed networks and induction over \(t\) give
	\begin{align}\label{eq:coupling-ZX}
	(\bZ^{(n)}_{1:T},\bX^{(n)}_{1:T})
	\longrightarrow
	(\bZ_{1:T},\bX_{1:T})
	\qquad\text{for common source noises }\bU.
	\end{align}
	Couple the masks by
	\[
	\bO_{t,j}^{(n)}
	=
	\mathds{1}\{\bU_{t,j}\le\sigma(\pi_{n,j}(\bZ_t^{(n)}))\},
	\qquad
	\bO_{t,j}
	=
	\mathds{1}\{\bU_{t,j}\le\sigma(\pi_j(\bZ_t))\}.
	\]
	The continuity of the logits gives \(\bO_{1:T}^{(n)}\to\bO_{1:T}\) almost
	surely. Hence the coupled full trajectories
\begin{align}\label{coupling:X}
    \{(\bX_1^{(n)},\bO_1^{(n)}),\ldots,(\bX_T^{(n)},\bO_T^{(n)})\} \longrightarrow \{(\bX_1,\bO_1),\ldots,(\bX_T,\bO_T)\} \qquad\text{a.s.}
\end{align}
	Lemma~\ref{lemma:network_moments} implies uniform integrability of the full
	trajectory distance. Therefore,
	\[
	W_1(Q_T^{\btheta_n},Q_T^{\btheta})\le \ \bbE \rho\left((\bX_{1:T}^{(n)},\bO_{1:T}^{(n)}),(\bX_{1:T},\bO_{1:T})\right) \ \to0.
	\]
	The corresponding observed trajectories
	\[
	\{(\bX_1^{(n)}\odot\bO_1^{(n)},\bO_1^{(n)}),\ldots,
	(\bX_T^{(n)}\odot\bO_T^{(n)},\bO_T^{(n)})\}
	\]
	also converge almost surely, and the projected trajectory distances are
	uniformly integrable by the same lemma. Thus
\begin{align}\label{W_1o:theta}
    	W_{1,o}(Q_{o,T}^{\btheta_n},Q_{o,T}^{\btheta})\to0.
\end{align}
	\end{proof}
%%%%%%%%%%%%%%%
\subsection{Proof of lemma \ref{lemma:encoder_calibration_loss_continuity}}
\begin{proof}
Let \((\btheta_n,\bphi_n)\to(\btheta,\bphi)\). Using the common source
noises and uniforms from the proof of
Lemma~\ref{lemma:theta_law_continuity}, couple synthetic trajectories generated
under \(\btheta_n\) and \(\btheta\) so that
\[
\left(\bZ_{1:T}^{\mathrm{syn},(n)},
\wbX_{1:T}^{\mathrm{obs},\mathrm{syn},(n)}\right)
\longrightarrow
\left(\bZ_{1:T}^{\mathrm{syn}},
\wbX_{1:T}^{\mathrm{obs},\mathrm{syn}}\right)
\qquad\text{a.s.}
\]
Let \(\cdeps_{t,1}\) and \(\cdeps_{t,2}\) be independent encoder noises,
independent of the synthetic trajectories and shared across \(n\). For
\(j=1,2\), define
\[
\hbZ_{t,j}^{(n)}
=
\enc^{\bphi_n}
\left(\wbX_{1:t}^{\mathrm{obs},\mathrm{syn},(n)},\cdeps_{t,j}\right),
\qquad
\hbZ_{t,j}
=
\enc^{\bphi}
\left(\wbX_{1:t}^{\mathrm{obs},\mathrm{syn}},\cdeps_{t,j}\right).
\]
Joint continuity of the encoder gives
\(\hbZ_{t,j}^{(n)}\to\hbZ_{t,j}\) almost surely.
For each \(t\), set
\[
H_{n,t}
\coloneq
\left\|\bZ_t^{\mathrm{syn},(n)}-\hbZ_{t,1}^{(n)}\right\|_2
-\frac12\left\|\hbZ_{t,1}^{(n)}-\hbZ_{t,2}^{(n)}\right\|_2,
\]
and define \(H_t\) analogously under \((\btheta,\bphi)\). Then
\(H_{n,t}\to H_t\) almost surely and
\[
\calL_{\mathrm{syn},\enc}^{\btheta_n}(\bphi_n)
=
\frac1T\sum_{t=1}^T\bbE H_{n,t}.
\]
Lemma~\ref{lemma:network_moments} gives uniform second-moment bounds for the
synthetic latent states and encoder outputs, so \(\{H_{n,t}\}_n\) is uniformly
integrable. The Vitali convergence theorem
\citep[Theorem~4.6.3]{durrett2019probability} therefore yields
\(\bbE H_{n,t}\to\bbE H_t\). Since \(T\) is fixed,
\[
\calL_{\mathrm{syn},\enc}^{\btheta_n}(\bphi_n)
\longrightarrow
\calL_{\mathrm{syn},\enc}^{\btheta}(\bphi),
\]
which proves joint continuity.
\end{proof}
%%%%%%%%%%%%%%%
\subsection{Proof of lemma \ref{lemma:encoder_assisted_continuity}}
\begin{proof}
Let \((\btheta_n,\bphi_n)\to(\btheta,\bphi)\).

For (i), fix \(t\ge2\) and \(\wbx_{1:t-1}^{\mathrm{obs}}\). Use the
common-noise constructions from the proofs of
Lemmas~\ref{lemma:theta_law_continuity} and
\ref{lemma:encoder_calibration_loss_continuity}. Joint continuity of the
encoder and model networks gives almost-sure
convergence of the coupled one-step observed draws. Their distances are
uniformly integrable by \eqref{eq:conditional-network-moment}. The same
coupling argument as in \eqref{W_1o:theta} therefore gives
\[
W_{1,o}\!\left(
Q_{o,t\mid1:t-1}^{\btheta_n,\bphi_n}
(\cdot\mid\wbx_{1:t-1}^{\mathrm{obs}}),
Q_{o,t\mid1:t-1}^{\btheta,\bphi}
(\cdot\mid\wbx_{1:t-1}^{\mathrm{obs}})
\right)\to0.
\]
For \(t=1\), this follows directly from
Lemma~\ref{lemma:theta_law_continuity}.

For (ii), apply the same construction recursively, feeding each generated
observed history into the encoder. The induction used in the proof of
Lemma~\ref{lemma:theta_law_continuity}, now also using joint continuity of the
encoder, gives almost-sure convergence of the coupled observed trajectories.
Lemma~\ref{lemma:network_moments} makes their distances uniformly integrable;
hence, again as in \eqref{W_1o:theta},
\[
W_{1,o}\left(Q_{o,T}^{\btheta_n,\bphi_n},
Q_{o,T}^{\btheta,\bphi}\right)\to0.
\]

For (iii), fix \(\wbx_{1:T}^{\mathrm{obs}}\). Part (i) and the masked-score
continuity argument in Step~1 of the proof of
Proposition~\ref{prop:partial_law_main} show that every summand defining
\(\ell_{\btheta,\bphi}(\wbx_{1:T}^{\mathrm{obs}})\) is continuous. Since
\(T\) is fixed, so is \(\ell_{\btheta,\bphi}(\wbx_{1:T}^{\mathrm{obs}})\).
Moreover, \eqref{supp:S-mask_bounded} and
\eqref{eq:conditional-network-moment} give the integrable envelope
\[
|\ell_{\btheta,\bphi}(\wbx_{1:T}^{\mathrm{obs}})|
\le C_T\left(1+\sum_{t=1}^T
\|\bx_t^{\mathrm{obs}}\odot\bo_t\|_2\right).
\]
Its integrability under the true law follows from
Lemma~\ref{lemma:network_moments}, so dominated convergence proves the stated
continuity of \(\calL\).
\end{proof}

\section{Additional experimental details}\label{supp:sec:experiment}
\subsection{Tuning parameters selection}\label{supp:subsec:implementation}
\begin{figure}
    \centering
    \includegraphics[width=1\linewidth]{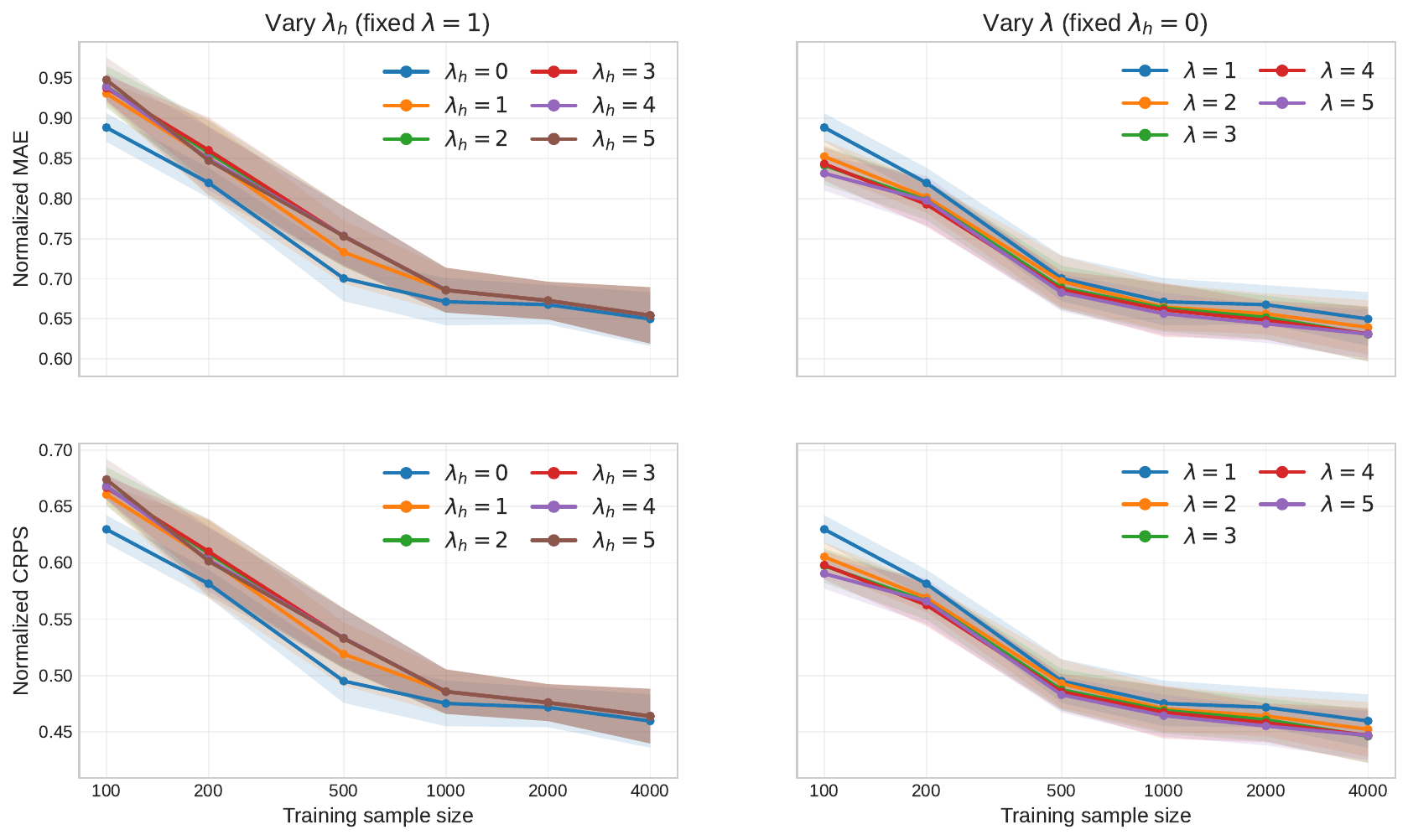}
    \caption{Hyperparameter sensitivity for Case 3. The left panels fix $\lambda=1$ and vary $\lambda_{\enc}\in\{0,1,2,3,4,5\}$, while the right panels fix $\lambda_{\enc}=0$ and vary $\lambda\in\{1,2,3,4,5\}$. The top and bottom rows report normalized MAE and normalized CRPS, respectively. The experimental setting uses $p=20$, $d=5$, $T=10$, and an LDM mechanism with $50\%$ missingness.}
    \label{fig:lambda_vs_lambda_enc0}
\end{figure}
\begin{figure}
    \includegraphics[width=1\linewidth]{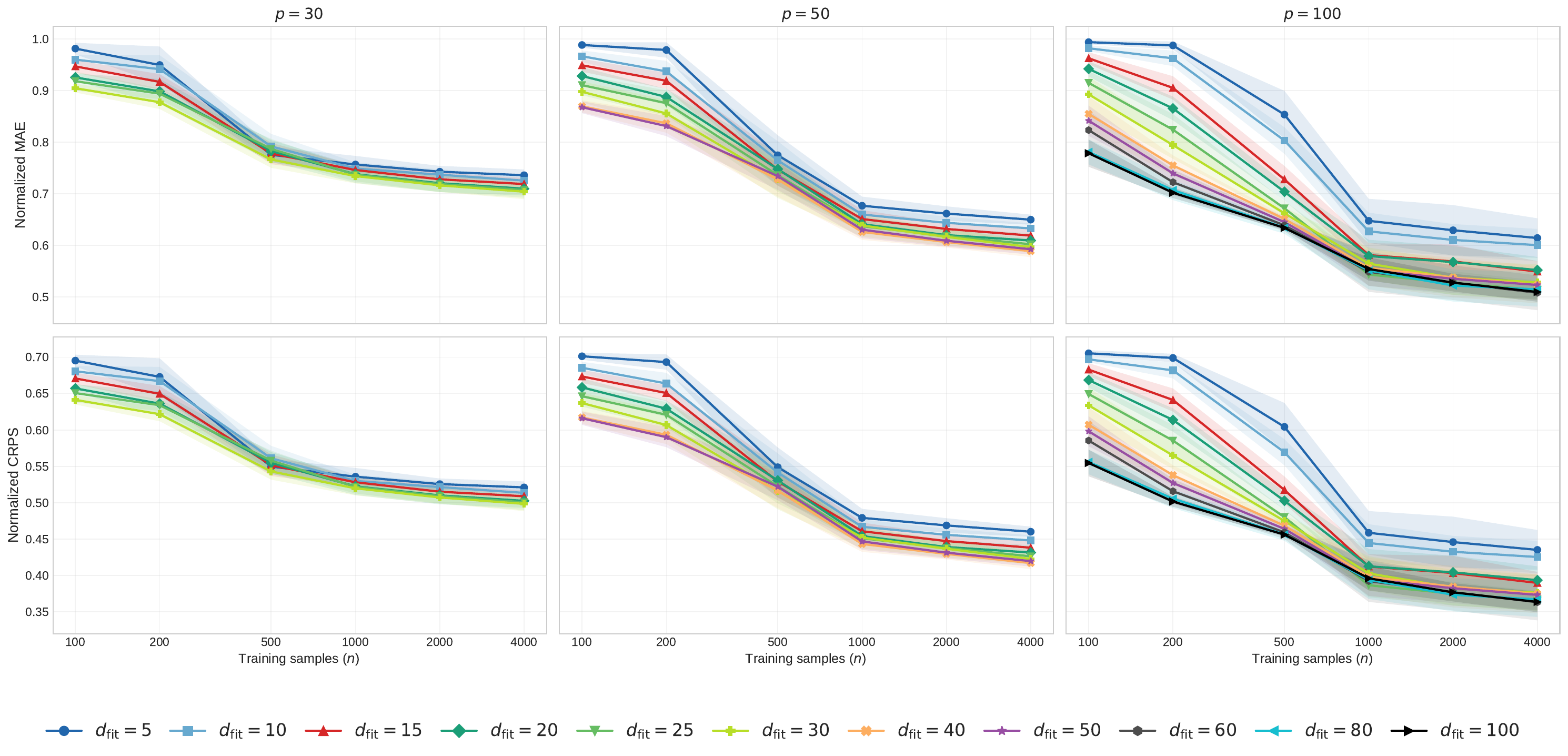}
    \caption{Sensitivity to the fitted latent dimension $d_{\operatorname{fit}}$ for Case 3 under an LDM mechanism with true latent dimension $d=15$ and $50\%$ missingness. Columns correspond to $p\in\{30,50,100\}$, while the top and bottom rows report normalized MAE and normalized CRPS, respectively. Each curve represents a different value of $d_{\operatorname{fit}}$.}
    \label{fig:d-fit-sensitivity}
\end{figure}
\noindent \textbf{Tuning of $\lambda$ and $\lambda_{\enc}$}. We conducted a preliminary grid experiment
in Case~3 of Section~\ref{sec:simulation}, with
\(\lambda\in\{1,2,3,4,5\}\) and
\(\lambda_{\enc}\in\{0,1,2,3,4,5\}\). As shown in Figure~\ref{fig:lambda_vs_lambda_enc0}, fixed \(\lambda_{\enc} = 0\), the empirical performance was relatively
insensitive to \(\lambda\). Moreover, adding the synthetic encoder-calibration
loss did not yield a systematic improvement in imputation
performance, while incurring the additional computational cost of generating
synthetic trajectories. We therefore set \(\lambda=5\) and
\(\lambda_{\enc}=0\) in all remaining simulation and real-data experiments.
\\~\\
% \midskip
\noindent \textbf{Tuning of latent dimension $d_{\operatorname{fit}}$}.
For each observed dimension $p\in\{30,50,100\}$, we generated data from Case 3 with true latent dimension $d=15$, sequence length $T=10$, and an LDM mechanism with $50\%$ missingness. We varied the fitted latent dimension over $d_{\operatorname{fit}}\in\{5,10,15,20,25,30\}$ for $p=30$, added $\{40,50\}$ for $p=50$, and further considered $\{40,50,60,80,100\}$ for $p=100$. The training sample size was varied over $n\in\{100,200,500,1000,2000,4000\}$, while the test size was fixed at 200. Each configuration was repeated over 10 random seeds, with all other training and inference settings held fixed; the curves and shaded regions report the seed-wise mean and one standard deviation, respectively. As shown in Figure~\ref{fig:d-fit-sensitivity}, both normalized MAE and normalized CRPS decrease substantially as $n$ increases. Under-specified models, particularly those with $d_{\operatorname{fit}}\in\{5,10\}$, generally perform worse, whereas moderate over-specification is consistently benign and often beneficial. At $n=4000$, the best results are obtained around $d_{\operatorname{fit}}=30$, $40$, and $60$ for $p=30$, $50$, and $100$, respectively, while larger values remain competitive. These findings indicate that accurate imputation does not require recovering the
true latent dimension exactly. In particular, moderate over-specification appears to be robust, suggesting that $d_{\operatorname{fit}}$ can be fixed at a moderately large value without additional tuning.

%%%%%%%%%%%%%%%%%%%%%%%%%%%%%%%%%%%%%%%%%%%%%%%%%%%%%%%%%%%%%%%%%%%%%%%%%%%%%%%%%%%%%%%%%%%%%%%%%%%%%%%%%%
%%%%%%%%%%%%%%%%%%%%%%%%%%%%%%%%%%%%%%%%%%%%%%%%%%%%%%%%%%%%%%%%%%%%%%%%%%%%%%%%%%%%%%%%%%%%%%%%%%%%%%%%%%%%%%%%%%%%%%%%%%%%%%%%%%%%%%%%%%%%%%%%%%%%%%%%%%%%%%%%%%%%%%%%%%%%%%%%%%%%%%%%%%%%%%%%%%%%%%%%%%%%%%%%%%%%%%%%%%%%%%%%%%%%%%%%%%%%%%%%%%%%%%%%%%%%%%%%%%%%%%%%%%%%%%%%%%%%%%%%%%%%%%%%%%%%%%%%%%%%%%%%%%%%%%%%%%%%%%%%%%%%%%%%%%%%%%%%%%%%%%%%%%%%%%%%%%%%%%%%%%%%%%%%%%%%%%%%%%%%%%%%%%%%%%%%%%%%%%%%%%%%%%%%%%%%%%%%%%%%%%%%
\subsection{Simulation studies}\label{supp:subsec:simulation}
\begin{figure}[t]
    \centering
    \includegraphics[width=1\linewidth]{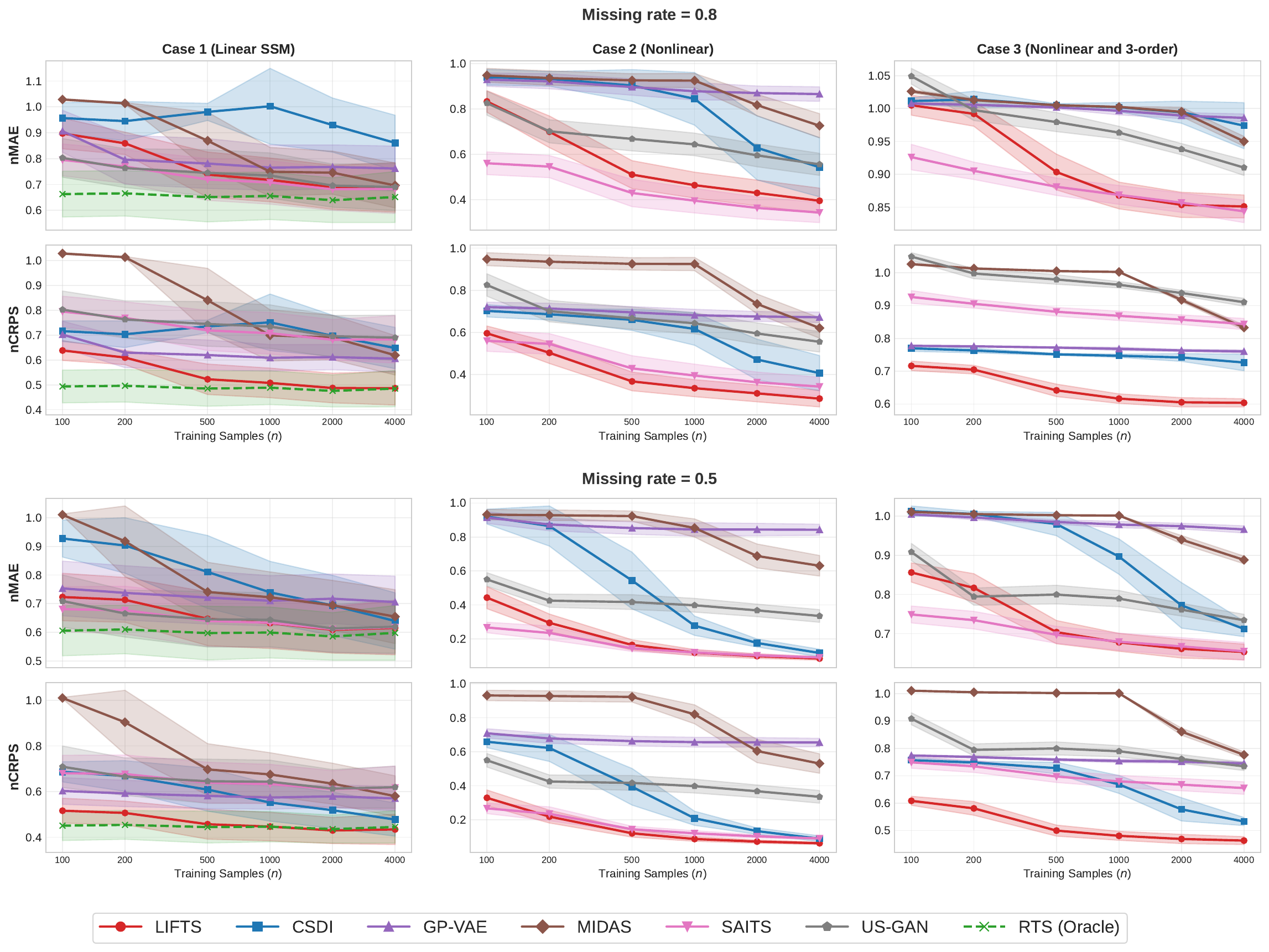}
    \caption{Performance across Cases 1--3 with increasing training sample size under MCAR. The top two rows correspond to an $80\%$ missingness rate and the bottom two rows to a $50\%$ missingness rate; within each pair, the rows report nMAE and nCRPS. Lines and shaded bands denote the mean and one standard deviation over 10 random seeds.}
    \label{fig:sim-mcar-obs-combined}
\end{figure}
\begin{figure}
    \includegraphics[width=1\linewidth]{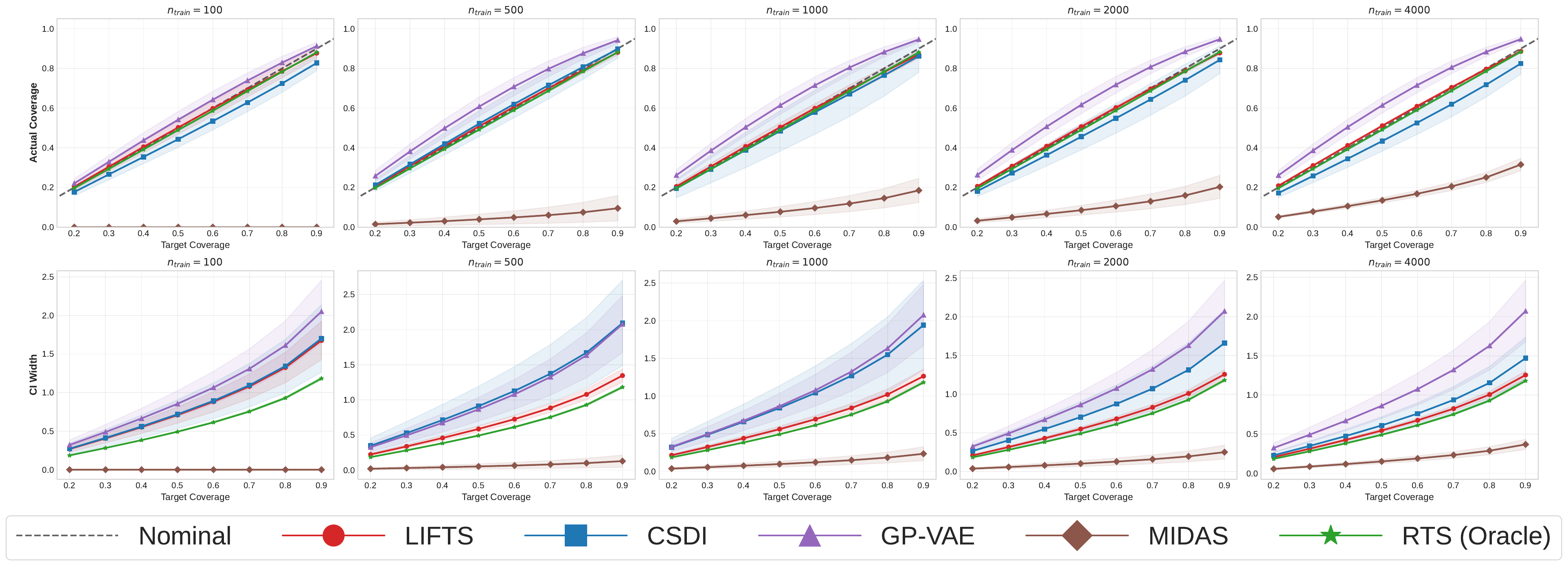}
    \caption{Empirical coverage and corresponding average prediction interval width under Case 1 using the MCAR mechanism with an $80\%$ missing rate.}
    \label{fig:sim-coverage-case1-mcar-obs0.2}
\end{figure}
\begin{figure}[t]
    \includegraphics[width=1\linewidth]{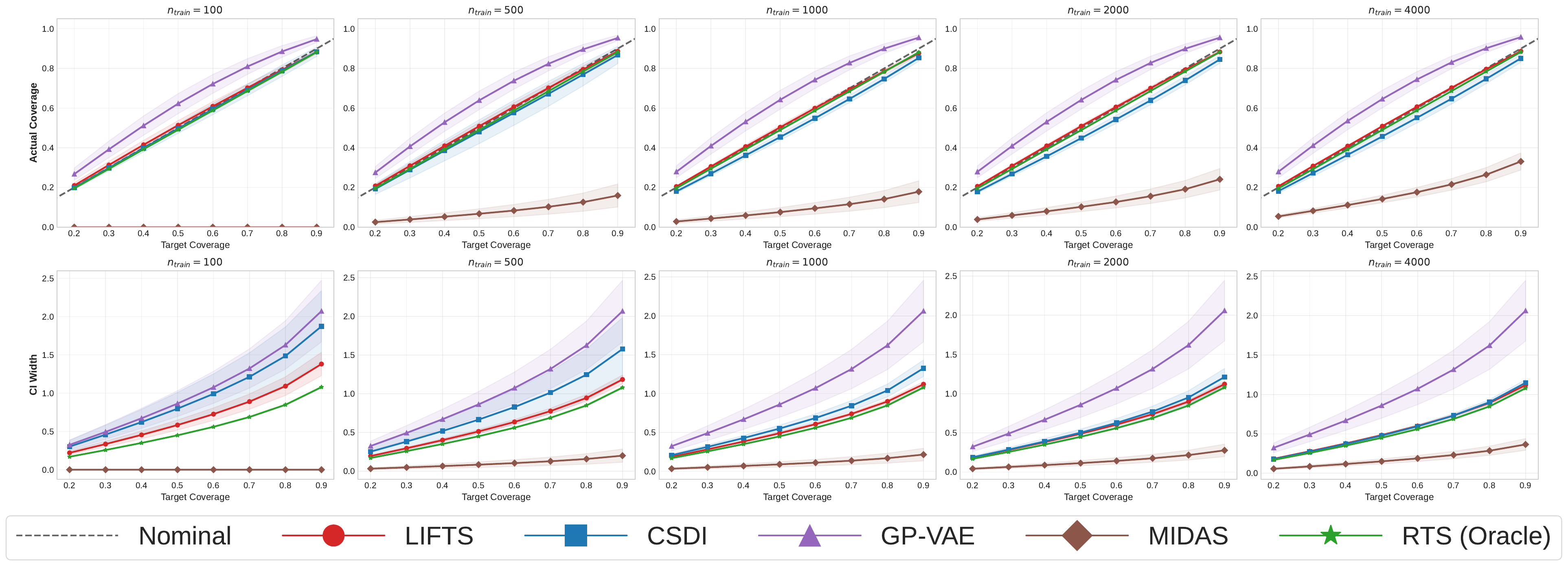}
    \caption{Empirical coverage and corresponding average prediction interval width under Case 1 using the MCAR mechanism with a $50\%$ missing rate.}
    \label{fig:sim-coverage-case1-mcar-obs0.5}
\end{figure}
In this subsection, we present additional results for the simulation studies.  Crucially, for the linear Gaussian model (Case 1), the optimal posterior remains analytically tractable under MCAR. Therefore, we employ the exact Rauch--Tung--Striebel smoothing (RTS; \citealp{rauch1965maximum}) as an oracle benchmark for imputation.

Figure \ref{fig:sim-mcar-obs-combined} shows the performance of the three cases under an MCAR mechanism with  $80\%$ and $50\%$ missing rate. The results are consistent with the main claims established in Section \ref{sec:simulation} under LDM. In particular, in case 1, we observe that the proposed method achieves comparable performance to the RTS smoother, which serves as an oracle benchmark. This demonstrates the effectiveness of our approach in handling missing data under MCAR.

Furthermore, Figures \ref{fig:sim-coverage-case1-mcar-obs0.2} and \ref{fig:sim-coverage-case1-mcar-obs0.5} report the empirical coverage results under MCAR with $80\%$ and $50\%$ missing rates, respectively. Across both settings, the coverage achieved by our proposed method is consistently closer to the nominal level than that of the baseline methods and is comparable to that of the oracle RTS smoother. These results demonstrate that our method provides reliable uncertainty quantification even under high rates of missingness.

%%%%%%%%%%%%%%%%%%%%%%%%%%%%%%%%%%%%%%%%%%%%%%%%%%%%%%%%%%%%%%%%%%%%%%%%%%%%%%%%%%%%%%%%%%%%%%%%%%%%%%%%%%%%%%%%%%%%%%%%%%%%%%%%%%%%%%%%%%%%%%%%%%%%%%%%%%%%%%%%%%%%%%%%%%%%%%%%%%%%%%%%%%%%%%%%%%%%%%%%%%%%%%%%%%%%%%%%%%%%%%%%%%%%%%%%%%%%%%%%%%%%%%%%%%%%%%%%%%%%%%%%%%%%%%%%%%%%%%%%%%%%%%%%%%%%%%%%%%%%%%%%%%%%%%%%%%%%%%%%%%%%%%%%%%%%%%%%%%%%%%%%%%%%%%%%%%%%%%%%%%%%%%%%%%%%%%%%%%%%%%%%%%%%%%%%%%%%%%%%%%%%%%%%%%%%%%%%%%%%%%%%%%%%%%%%%%%%%%%%%%%%%%%%%%%%%%%%%%%%%%%%%%%%%%%%%%%%%%%%%%%%%%%%%%%%%%%%%%%%%%%%%%%%%%%%%%%%%%%%%%%%%%%%%%
\subsection{Real data analysis}\label{supp:subsec:real_data_analysis}
In this subsection, we present additional results for the real data analysis
in Section \ref{sec:real-data}. 

Figure \ref{fig:coverage-forecasting} reports the empirical coverage and average
prediction-interval width for test-set forecasts over \(t=37,\ldots,48\) across
different training sample sizes. LIFTS achieves empirical coverage closest to
the nominal levels while maintaining relatively narrow prediction intervals,
although it exhibits mild overcoverage at the lower nominal levels. In contrast,
CSDI and GP-VAE tend to undercover, with the undercoverage being particularly
pronounced for MIDAS.

All experiments were conducted under the same hardware configuration, using a single NVIDIA A6000 GPU.
Figure \ref{fig:real-time} presents the training and inference time comparisons across all methods. LIFTS requires training time comparable to CSDI but is approximately 100 times faster at inference.

\begin{figure}
    \centering
    \includegraphics[width=1\linewidth]{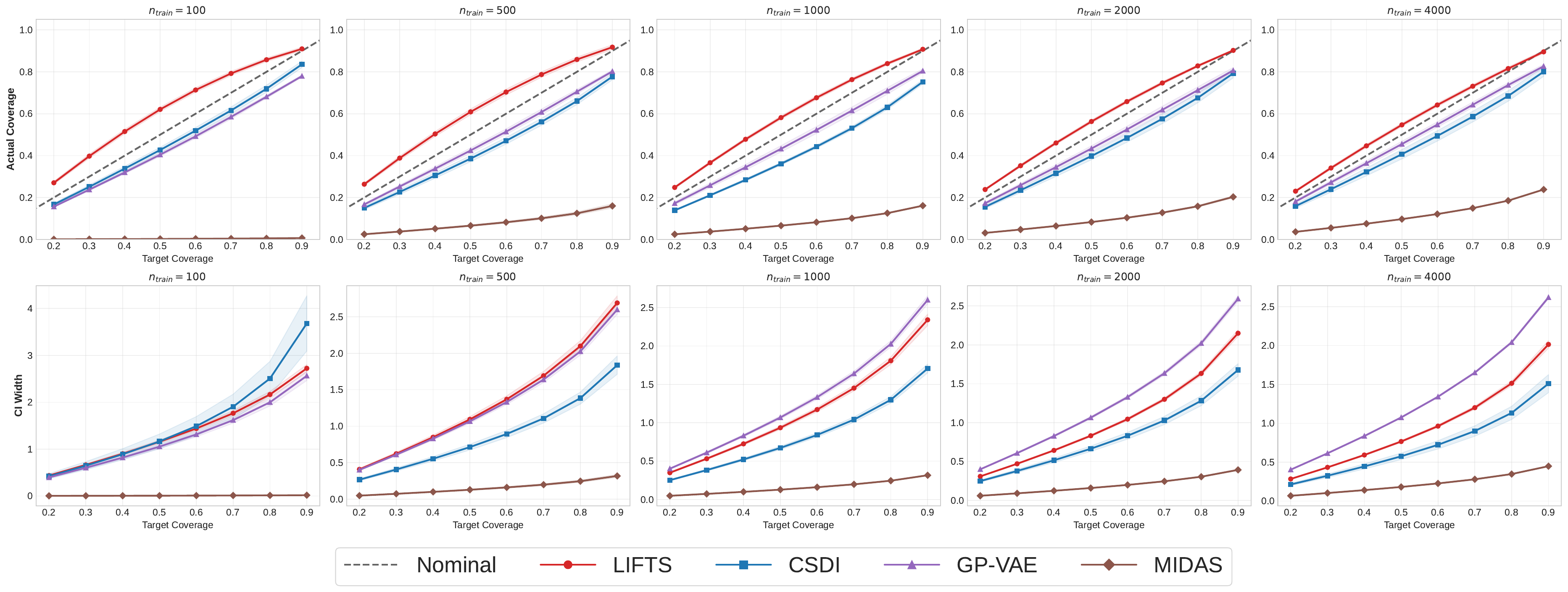}
    \caption{Empirical coverage and average prediction-interval width for test-set forecasting over \(t=37,\ldots,48\), across different training sample sizes.}
    \label{fig:coverage-forecasting}
\end{figure}

\begin{figure}
    \centering
    \includegraphics[width=1\linewidth]{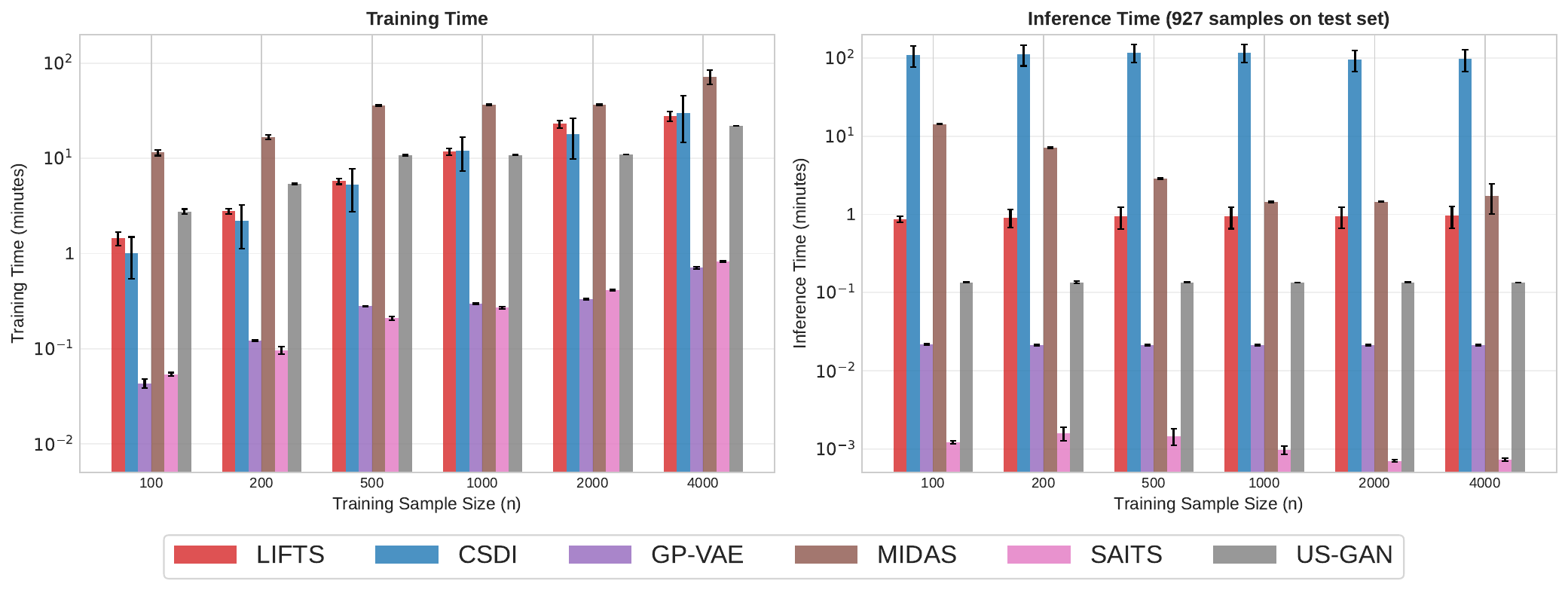}
    \caption{The training and inference time comparisons across all methods.}
    \label{fig:real-time}
\end{figure}

% %%%%%%%%%%%%%%%%%%%%%%%%%%%%%%%%%%%%%%%%%%%%%%%%%%%
%%%%%%%%%%%%%%%%%%%%%%%%%%%%%%%%%%%%%%%%%%%%%%%%%%%
%%%%%%%%%%%%%%%%%%%%%%%%%%%%%%%%%%%%%%%%%%%%%%%%%%%
%%%%%%%%%%%%%%%%%%%%%%%%%%%%%%%%%%%%%%%%%%%%%%%%%%%
%%%%%%%%%%%%%%%%%%%%%%%%%%%%%%%%%%%%%%%%%%%%%%%%%%%
%%%%%%%%%%%%%%%%%%%%%%%%%%%%%%%%%%%%%%%%%%%%%%%%%%%
%%%%%%%%%%%%%%%%%%%%%%%%%%%%%%%%%%%%%%%%%%%%%%%%%%%
% \newpage
\end{document}